\def\BibTeX{{\rm B\kern-.05em{\sc i\kern-.025em b}\kern-.08em
    T\kern-.1667em\lower.7ex\hbox{E}\kern-.125emX}}
\numberwithin{equation}{section}
\newtheorem{thm}{Theorem}[section]
\newtheorem{prop}[thm]{Proposition}
\newtheorem{remark}[thm]{Remark}
\newtheorem{lem}[thm]{Lemma}
\algnewcommand\INPUT{\item[\textbf{Input:}]}
\algnewcommand\OUTPUT{\item[\textbf{Output:}]}
\begin{document}
\title{CoDGraD: A Code-based Distributed Gradient Descent Scheme for Decentralized Convex Optimization
\thanks{}
}

\author{\IEEEauthorblockN{Elie Atallah, Nazanin Rahnavard, and Qiyu Sun}\\
\IEEEauthorblockA{\textit{Department of Electrical and Computer Engineering, University of Central Florida, Orlando, FL} \\
\textit{Department of Mathematics, University of Central Florida, Orlando, FL}\\
\textit{Emails: elieatallah@knights.ucf.edu, nazanin@ece.ucf.edu, Qiyu.Sun@ucf.edu}
}}
\maketitle

\begin{abstract}  In this paper, we consider  a large network containing many regions such that each region is equipped with a worker with some data processing
and communication capability.
For such a  network, some workers may become stragglers
due to the failure or heavy delay on computing or communicating.
To resolve the above straggling problem, a coded scheme that introduces
certain redundancy for every worker was recently proposed, and a
gradient coding paradigm  was developed to solve  convex optimization problems
when the network has a centralized fusion center.
In this paper, we  propose an iterative distributed algorithm,
referred as %\emph
{Code-Based Distributed Gradient Descent algorithm} (CoDGraD), to solve  convex optimization problems over distributed networks.
In each iteration of the proposed algorithm,  an active worker   shares the coded local gradient and  approximated solution of the convex optimization problem with
non-straggling workers at the adjacent regions only.
In this paper, we also provide the consensus and convergence analysis for the CoDGraD algorithm and
 we demonstrate its performance via numerical simulations. % and enhancements for convergence rates.
\end{abstract}

\begin{IEEEkeywords}
distributed optimization, gradient coding, consensus, distributed networks
\end{IEEEkeywords}

%\NR{test comment}
\section{Introduction}
\IEEEPARstart{C}{onvex} optimization on a  network  of large size %in decentralized %distributed
%systems
 has played a significant role for solving various problems, such as big-data processing in machine learning,
 distributed parameter estimation in wireless sensor networks,
 distributed sampling and signal reconstruction, distributed design of filter banks,
 distributed spectrum sensing in cognitive radio
networks,  source localization in cellular networks
\cite{nedic2001, dasilva2007, johansson2008, nedic2010, fu2015,  cheng2017, jiang2017}.
The objective functions $f$ in  such
 optimization problems,
\begin{equation}\label{globalobjectivefunction}
%\min_{{\bf x} \in \mathbf{R}^{N}}
 f({\bf x}) = \sum_{l=1}^m f_{l}({\bf x}),  \ {\bf x}\in \mathbb{R}^N,
\end{equation}
  are often  the summation of  some local objective functions  $f_{l}, 1\le l\le  m$,  related to a partition
 of the network. In this paper, we consider  %the above distributed optimization under
 the scenario that
  each region of the partition is equipped with
a worker that  has some data processing and communication ability, while the  network has a fusion center with limited %computing
capacity
or it  does not have a fusion center at all.

For a network with centralized data processing facility,
the following optimization problem
\begin{equation}\label{optimizationproblem}
\bar{\bf x}  %={\rm argmin}_{{\bf x} \in \mathbf{R}^{N}} f({\bf x})
= \arg\min_{{\bf x}\in \mathbb{R}^{N}}\sum_{l=1}^m f_{l}({\bf x}),
\end{equation}
associated with the objective function $f=\sum_{l=1}^m f_l$
in \eqref{globalobjectivefunction}
has been well studied,   see
  \citep{nedich2015, bedi2018, gurbuzbalaban2015, emirov2021, nedic2001, nedic2010,  takahashi2009,  cattivelli2010,  bertsekas2011,  sayed2013, needell2014, sayed2014}
and references therein for various algorithms implementable in a strong  fusion center or in local workers distributed over the network.
 Denote  the gradient of $g$ on ${\mathbb R}^N$ by $ \nabla g $.
For a network equipped with one data processing unit only, a conventional approach to  the optimization problem \eqref{optimizationproblem} is the  \emph{gradient descent} algorithm,
%\vskip-0.18in
\begin{equation}\label{classicalgradient}
    {\bf x}(k+1)={\bf x}(k)-\frac{\alpha_{k}}{m}\sum_{l=1}^m \nabla{f_l}({\bf x}(k)), \ k\ge 0,
\end{equation}
where
$\{\alpha_k\}_{k=0}^\infty$ is a positive sequence chosen appropriately \citep{mathews1993, robbins1951, schraudolph1999, friedman2002, bottou2012,  zeiler2012,
 kingma2014, hardt2015, tan2016}.
Our illustrative examples of step sizes $\alpha_k, k\ge 0$, are
%\vskip-0.16in
\begin{equation}\label{stepsize.example}
\alpha_k=(k+a)^{-\theta}, \ k\ge 0,\end{equation}
 for some  $\theta\in (1/2, 1)$ and $a\ge 1$.

Several distributed versions of the  gradient descent  algorithm  \eqref{classicalgradient}
have been proposed, including
the Adapt-Then-Combine  algorithm  (ATC)  and  the Combine-Then-Adapt algorithm (CTA)   \citep{cattivelli2010,sayed2013},
% where  initials  ${\bf x}_l(0)={\bf x}(0), 1\le l\le m$, are selected to be the same on the entire network and
where implementation of the worker
at  agent  $l$ is given by
%\vskip-0.06in
\begin{equation}\label{ATC}
\Big\{\begin{array}{l}
  {\bf y}_{l}(k) = {\bf x}_{l}(k) - \alpha_{k} \nabla{f}_{l}({\bf x}_{l}(k)) \\
  {\bf x}_{l}(k+1) = \sum_{ j \in \mathcal{N}_{l} }w_{lj}(k) {\bf y}_{j}(k)
\end{array}
\end{equation}
and
%\vskip-0.06in
\begin{equation} \label{CTA}
\Big\{\begin{array}{l}
  {\bf y}_{l}(k) = \sum_{ j \in \mathcal{N}_{l} }w_{lj}(k) {\bf x}_{j}(k) \\
  {\bf x}_{l}(k+1) = {\bf y}_{l}(k) - \alpha_{k} \nabla{f}_{l}({\bf y}_{l}(k))
\end{array}
\end{equation}
respectively, where ${\mathcal N}_l$ contains all adjacent agents of the agent $l$ for data sharing,
and
 $(w_{lj})_{1\le l, j\le m}$ is a consensus matrix. %for some initial ${\bf x}(0)$.
The above  ATC and CTA algorithms  %does not evaluate gradient of global objective function $f$ directly,
may reach consensus over all nodes to the  optimal solution  $\bar {\bf x}$ in
\eqref{optimizationproblem}.
They are essentially the same as
 the gradient descent  algorithm  \eqref{classicalgradient} if
the graph to describe  topological structure of  the network is complete and
the consensus weights $w_{lj}=1/m, 1\le l, j\le m$.
% and the behavior of the algorithms is at the consensus region.
However, comparing with the gradient descent  algorithm  \eqref{classicalgradient},
in the  implementation of  the  ATC and CTA algorithms, we circumvent
the expansive evaluation of gradient  $\nabla f$ of the global objective function by
evaluating  gradients  $\nabla f_l, 1\le l\le m$, of  local objective functions  at each worker node
and then communicating local gradients to the neighbors with nonzero consensus weights.

In applications such as distributed learning and optimization over the cloud,
some workers in the network may  become inactive %stragglers
 due to the failure or  heavy  delay on computing or  communicating  %their local information
\cite{dean2012,  ho2013, li2014}.
 To resolve the  above problem,
  uncoded and coded  local gradients  have been proposed in
  %(that sum up incrementally to the overall objective function)
 \citep{ tandon2017, halbawi2017,
raviv2017} to
recover the full gradient  from  local gradients
on  active nodes  % non-stragglers
$ \Delta \subset\{1, \ldots, m\}$.
Without loss of generality, we assume that $\Delta \subset \{1, \ldots, n\}$, where $n\le m$ is the number of active nodes.
In this paper, we follow the coded scheme in \citep{tandon2017}  and consider the  paradigm
that for every active node  $i$,   the global objective function can be recovered from coded objective functions
$g_{j}$ on non-stragglers relative to node $ i $,  i.e.,
%\vskip-0.18in
 \begin{equation}\label{fgi.eq}
 \begin{split}
 f({\bf x}) & = \sum_{j=1}^{n}  a(i,j) g_{j}({\bf x})
 %& = \sum_{j=1}^n  a(i,j) g_{j}({\bf x}),  \ \{|j|:a(i,j) \neq 0 \} =  n,
 \end{split}
 \end{equation}
 for some decoding matrix
$ {\bf A}=(a(i,j))_{1\le i, j \leq n}$.
 The above requirement is met if
  the coded scheme for non-stragglers is given by
  %\vskip-0.18in
  \begin{equation} \label {Bfit.def}
  g_i({\bf x})=\sum_{l=1}^{m} b(i, l) f_l({\bf x}),\  1\le i\le n,
  \end{equation}
  and
  the coding matrix
 ${\bf B}=(b(i,l))_{1\le i \leq n, 1\le l \leq m}$
  satisfies
     %\vskip-0.18in
     \begin{equation}\label{AB1.eq}
  {\bf A} {\bf B}={\bf 1}_{n \times m},
  \end{equation}
 where   ${\bf 1}_{n \times m}$ is the $n \times m$ matrix with all entries  taking value $ 1 $. \\

The coded scheme \eqref{fgi.eq} and \eqref{Bfit.def}
has been used in \cite{tandon2017} to solve the global optimization problem \eqref{optimizationproblem},
where  the worker at each region  evaluates the coded local gradients and sends them to  the worker at the master node; then the worker at the master node
aggregates a weighted sum of coded local gradients to form the gradient of
the global objective function, and applies a centralized gradient descent approach similar to \eqref{classicalgradient}
to update the approximation  to the optimal solution $\bar {\bf x}$ in \eqref{optimizationproblem};
and finally the worker at the master node sends the updated approximation to  all local workers on the network for the next iteration.
In this paper, based on the coded  scheme \eqref{fgi.eq} and \eqref{Bfit.def},
 we propose a \emph{distributed} algorithm  to solve the   %global
 convex optimization problem \eqref{optimizationproblem}.

%\vskip-0.18in
\subsection{Objectives and Contributions}

Since the focus of distributed optimization is overwhelmingly occupied with first-order methods, it is worth trying to enhance such methods rather than employing methods of another type.  Our initial aim of this paper
is to improve the performance of distributed gradient descent  algorithm through utilizing coding.
In particular, we  focus  on the leveraging of coding on the performance of the distributed optimization algorithm and how the convergence rate of these algorithms can be better enhanced through using an appropriate coding scheme based on the network topology.

While distributed optimization is implementable through distributing the local functions among the nodes to sum up to the global optimization function as in \eqref{globalobjectivefunction}, we follow here an alternative route. We decompose the global function among the nodes in a coded manner and try to implement an algorithm which allows optimization under such scheme \eqref{fgi.eq}.
To this end, we adapt the stochastic form
of the decoding matrix ${\bf{A}} $
 in our proposed algorithm,
  carry the negativity in some of its coefficients to the gradient operation, and then utilize  gradient descent and  ascent steps, see \eqref{Afit.def} and \eqref{mainalgorithm}.

In  this paper, we do not impose any algebraic structure of
the coding matrix ${\bf B}$ and  the decoding matrix ${\bf A}$,
and thus our approach applies for  a broader class of coding/decoding matrices as long as the allowed number of stragglers is fulfilled, except that the normalized decoding matrix $|\tilde {\bf A}|$ in
\eqref{stochasticdecodematrix.def0}
is assumed to have simple eigenvalue one, see   %Assumption \ref{assump_1}
%and
Proposition \ref{eigenvalueone.pr}.
Our work has the assumptions of strongly convex global function $f$, Lipschitz continuous (coded) gradients $ \nabla{g}_{i} $, and uniformly bounded (coded) gradients $ \nabla{g}_{i} $.
We believe that
if the coding/decoding matrices have
additional structure properties \cite{yuan2016},
our  algorithm  could converge
under weak requirements on the objective functions and coded gradients.

In this paper, we utilize  exact gradient coding  for fusion centralized networks and  investigate the implications of such coding  used in a multi-agent setting. This serves well to  our initial aim of showing how enhancement to the convergence rate can be accomplished, see the consensus and convergence conclusions in Sections
 \ref{mainalg1.section} and
\ref{mainalg2.section} of our proposed   algorithm.
The paradigm of approximate gradient coding is discussed in \cite{glasgow2020}. It could be an interesting problem to extend our convergence conclusions  in the above setting.

In this paper, we work on distributed optimization problem for multi-agent systems on fixed static network topology. Thus, the active nodes are the non-straggler nodes which are fixed throughout the proposed distributed algorithm and where coded local functions are used according to the coding scheme \eqref{fgi.eq} instead of the brute decomposition as in \eqref{globalobjectivefunction}. We postpone the consideration of time-varying straggler networks to a later endeavor.

The implementation of distributed algorithms on
static/time-varying networks with stragglers
is of importance.
We wish that this work may serve as a starting point for a full-fledged investigation to  distributed algorithms on static/time-varying networks with stragglers, with applications to the engineering field, such as federated decentralized learning and distributed machine learning.

    \subsection{Organization and notations}
%The remainder of the paper is organized as follows.
In Section \ref{results}, we formulate our  code-based distributed gradient descent algorithm (CoDGraD) to solve the optimization problem \eqref{optimizationproblem}.
In Section~\ref{Network_Forming_Coding} we describe the coordinated distributed formation of the network and its data coding.
Then in Sections \ref{mainalg1.section} and
\ref{mainalg2.section}, we consider the consensus  and convergence properties of  the proposed  CoDGraD  algorithm.
Afterwards, we demonstrate the performance of CoDGraD through simulation in Section \ref{simulation.section}. We conclude the paper in Section \ref{conclusion.section}.

In what follows, we use bold capital letters, bold lower case letters and lower case letters for matrices, vectors and scalar variables, respectively.
Denote the positive part of a real number $t$ by
  $t_+=\max(t, 0)$,
Denote the matrix of  dimension $n\times m $ with all entries taking value one by
 $ {\bf 1}_{n\times m} $ (and ${\bf 1}_n$ when $m=1$) and the unit matrix of size $n\times n$ by ${\bf I}_{n}$.
 %the column vector of size $n$ with all entries taking value $1$ by ${\bf 1}_n$.
 and the zero matrix of size $n\times m$ by  ${\bf 0}_{n\times m}$   respectively.
Denote the transpose of an matrix ${\bf A}$ by ${\bf A}^T$, and
the transpose and standard $\ell^p$-norm of a vector ${\bf x}$ by  ${\bf x}^T$  and $\|{\bf x}\|_p, 1\le p\le \infty$,
respectively.
% We also use the following $ A_{ij} $ or $ A(i,j) $ interchangeably to denote matrix $ {\bf A} $ entry corresponding to index $ (i,j) $.

\section{A Code-based Distributed Gradient Descent Algorithm}\label{results}

Let the coded objective functions $g_j, 1\le j\le n$, and  the decoding matrix ${\bf A}=(a(i,j))_{1\le i,j\le n}$ be as in the
coded scheme \eqref{fgi.eq} and \eqref{Bfit.def}.
Set decoding weights
%\vskip-0.06in
\begin{equation}\label{weight.def} w_i= \Big(\sum_{j \in \Gamma_i} |a(i,j)|\Big)^{-1},\ 1\le i\le n,
\end{equation}
 where
%\vskip-0.16in
\begin{equation}\label{gammai.def}
\Gamma_{i}=\{j, \  a(i,j)\ne 0\}.\end{equation}
In
\cite{Atallah2018ACD, tandon2017},  all workers not in $\Gamma_i$ are considered as ``stragglers"  or ``non-neighboring workers"  for an active node $i$, since coded information
at those workers are not used to evaluate the gradient  $\nabla f$ of the global objective function $f$ at the node $i$. We remark that in
this paper active nodes are those workers over the network that don't witness delay or failure on computing or communicating.
Being in a fixed static network implementation then all nodes in the graph of the network topology are considered as active nodes.

To solve the  optimization problem \eqref{optimizationproblem}, we propose an iterative distributed algorithm
 with  the implementation of the worker at the region $i$ given by   % consensus optimization  using stochastic matrices,
%\vskip-0.16in
\begin{eqnarray}\label{mainalgorithm}
\left\{\begin{array}{l}
{\bf v}_i(k)= \nabla g_i( {\bf x}_i(k)),\\
{\bf y}_i^+(k)= {\bf x}_i(k)- \alpha_k {\bf v}_i(k) ,\\
{\bf y}_i^-(k)= {\bf x}_i(k)+\alpha_k {\bf v}_i(k),\\
\begin{array}{lcl}  \hskip-0.08in {\bf x}_i(k+1)   & \hskip-0.08in = &  \hskip-0.08in  \sum_{j\in \Gamma_i}  w_i
\big\{(a(i, j))_+ {\bf y}_j^+(k) \\
& &  \quad +  (- a(i, j))_+
  {\bf y}_j^-(k)\big\}, \ \ k\ge 0,
\end{array}
\end{array}
\right.
\end{eqnarray}
where initials ${\bf x}_i(0)$ are chosen randomly or set zero initially,  and step sizes  $\alpha_k, k\ge 0$,  \citep{robbins1951,bertsekas2011,bottou2012} are so chosen that 
\begin{equation}
\label{alphak.assumption}
\sum_{k=0}^\infty \alpha_k=\infty \ \ {\rm and }\  \ \sum_{k=0}^\infty \alpha_k^2<\infty.
\end{equation}
We call this algorithm as  a code-based distributed gradient descent algorithm and  use CoDGraD for abbreviation.
 The implementation of the CoDGraD is described in Algorithm 1.
\begin{algorithm}[t]
 \caption{The CoDGraD Algorithm}
 \begin{algorithmic}[1]
 \INPUT %\textbf{Initialization:}
  Set tolerance value $\epsilon $ for halting the algorithm and the number of iteration $k=0$.
 Initialize  an estimate $ {\bf x}_{i}(0) $ of the optimal solution $\bar {\bf x}$  and approximation error $  e_{i}(0)=\epsilon$ at the worker $ i $.

  %\textbf{Initialization:}
  \WHILE{ $ e_{i}(k) \geq \epsilon $} \Comment{Halting is done at each node independently with no coordination}

        \STATE {At the worker $i$, use \eqref{mainalgorithm} to update the estimate ${\bf x}_i(k)$}.
        \STATE Find error $ e_{i}(k+1) = \| {\bf x}_{i}(k+1)-{\bf x}_{i}(k) \| $  for all $ 1 \leq i \leq n $
        \STATE $k=k+1$
    \ENDWHILE
  %  $ k ^{*}=k $
    \OUTPUT ${\bf x}_{i}(k)$  and $k$. %for the corresponding $k$ for each node %For all $ i $ Solutions are $ x_{i}^{+}(k^{*})= %x_{i}^{-}(k^{*}) $
  \end{algorithmic}
  \label{algorithm1.pseudocode}
\end{algorithm}

For our purpose, we consider the
coded scheme \eqref{fgi.eq} and \eqref{Bfit.def} that matches our network topology.
Here the network topology is described by
 an undirected graph  ${\mathcal G}=(V, E)$, where the worker in each region is represented by
a vertex in $V$ and each edge  $(l,l')\in E$ means that workers in the regions $l$ and $l'\in V$ have direct communication for data sharing.
 Therefore the topological matching property of the
coded scheme \eqref{fgi.eq} and \eqref{Bfit.def} is satisfied
if the decoding weight $a(i,j)$ takes zero value whenever there is no direct communication between active workers in the  region $i$ and $j$, i.e.,
$$a(i,j)=0 \ \ {\rm if} \ \ (i,j)\not\in E.$$
This means that any non-straggling node of any active node  $ i $  which is contributing in the decoding process must be a neighbor of the active node $ i $ in the whole network, i.e., $\Gamma_i\subset {\mathcal N}_i$,
where $\Gamma_i$ is given in  \eqref{gammai.def}.  However, the converse is not necessarily true as a non-straggler node may be used in the coding procedure and does not contribute to the active node $i$.

For the above scenario of the  coded scheme \eqref{fgi.eq} and \eqref{Bfit.def}, the active worker at each region
first evaluates the coded local gradient, then it updates its estimate through gradient descent/ascent step, next it shares the updated estimate approximations with neighboring active workers at the adjacent regions, henceforth updating its estimate towards the optimal solution $\bar {\bf x}$,
Hence the CoDGraD algorithm is implementable in the network that does not have a fusion center at all.

For the decoding matrix ${\bf A}$ %=(\tilde a(i,j))_{1\le i, j \leq n}$
in  \eqref{fgi.eq},
we define  its normalized decoding matrix ${\bf A}_{\rm nde}= \big( \tilde a(i,j)\big)_{1\le i, j\le n}$ of size $n\times n$
by
%\vskip-0.06in
\begin{equation}\label{normalizedfgi.eq}\tilde a(i,j)= w_i a(i,j) ,\  1\le i, j\le n,
\end{equation}
%\vskip-0.06in
\noindent and  its row stochastic decoding matrix ${\bf A}_{\rm sde}$ of size $2n\times 2n$  by
%\vskip-0.16in
\begin{equation} \label{Afit.def}
 {\bf A}_{\rm sde}=\left(\begin{array}{cc}
\tilde {\bf A}_+  &  \tilde {\bf A}_{-} \\
\tilde {\bf A}_{+} &\tilde {\bf  A}_{-}
\end{array}\right),
\end{equation}
%\vskip-0.06in
\noindent
 where   $w_i, 1\le i\le n$, are decoding weights in \eqref{weight.def}, and
%\vskip-0.12in
 $$\tilde {\bf A}_+=\big((\tilde a(i,j))_+\big)_{1\le i,j\le n}\ \ {\rm  and}\  \ \tilde {\bf A}_-=\big((-\tilde a(i,j))_+\big)_{1\le i,j\le n}$$
%\vskip-0.06in
\noindent
 are positive/negative parts of  the normalized decoding matrix $\tilde {\bf A}=(\tilde a(i,j))_{1\le i, j \leq n}$  respectively.

In this paper, we consider the {\bf consensus}  and {\bf convergence} properties of  ${\bf x}_i(k), k \ge 0$, in the proposed CoDGraD  algorithm  \eqref{mainalgorithm}
under the following assumptions:  (i)\ The row stochastic matrix $ {\bf A}_{\rm sde}$ in \eqref{Afit.def} has  simple eigenvalue one and all other eigenvalues contained in the open
unit complex disk centered at the origin; (ii)\  The global objective function $f$
 is a differentiable strongly convex; (iii)\ The (coded) local  objective functions $g_i, 1\le i\le n$,  have bounded gradients; and (iv) the (coded) local objective functions $g_j, 1\le j\le n$ are differentiable and have  continuous gradients.

%\vskip-0.06in
\subsection{Conditions on the Network Topology and Coding Scheme}
%\vskip-0.06in

The code-decode scheme in \eqref{AB1.eq} presented in this paper is essentially
the same as the scheme in \citep{tandon2017},
where $ n, m, s $ denote the number of workers, samples and
stragglers respectively.
Let $\Gamma_i, 1\le i\le n$, be given in \eqref{gammai.def} and
denote their complements by $\Gamma_i^c, 1\le i\le n$.
Therefore we require that stragglers to a node $ i $ are contained in $ \Gamma_{i}^{c} , 1 \leq i \leq n $.

In the following, we consider two scenarios
of network topology in which  the matrix $ {\bf A}_{\rm sde} $ used in the CoDGraD algorithm \eqref{mainalgorithm}
has simple eigenvalue one and all other eigenvalues contained in the open unit complex disk centered at the origin.
The first  scenario is of full
cyclic assignment of the  active workers after certain permutation, i.e.,
the decoding matrix
${\bf A}=(a(i,j))_{1\le i, j\le n}$
with   the first row
 having its first $n-s$ entries assigned as non-zero, and
 as we move down the rows, the positions of the $n-s$
non-zero entries shifting one step to the right, and cycle around
until the last row. Mathematically,  the decoding matrix ${\bf A}$
satisfies the following conditions:
%\vskip-0.16in
\begin{equation} \label{Tandonlemma.eq0}
a(i,j)\ne 0\end{equation}
%\vskip-0.06in
\noindent
for all $(i,j)$ satisfying either $i\le j\le \min(i+n-s, n)$
or $ j+s\le i$, cf., \citep[eq.10]{tandon2017}.
For the above scenario, we have

\begin{prop}\label{lemma_A_sde_simple_eig}
Let $1\le s\le n-1$ and
the decoding matrix
${\bf A}$ satisfy
\eqref{Tandonlemma.eq0}.
Then the matrix $ {\bf A}_{\rm sde} $ used in the CoDGraD algorithm \eqref{mainalgorithm}
has simple eigenvalue one and all other eigenvalues contained in the open unit complex disk centered at the origin.
\end{prop}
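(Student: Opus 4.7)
The plan is to exploit the rank-deficient block structure of ${\bf A}_{\rm sde}$ in order to reduce the $2n\times 2n$ eigenvalue problem to one for the $n\times n$ row-stochastic matrix $|\tilde{\bf A}|=\tilde{\bf A}_++\tilde{\bf A}_-$, and then to close the argument with the Perron--Frobenius theorem applied to $|\tilde{\bf A}|$.

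First I would write ${\bf A}_{\rm sde}={\bf P}{\bf Q}$ with
$${\bf P}=\begin{pmatrix}{\bf I}_n\\ {\bf I}_n\end{pmatrix}\in\mathbb{R}^{2n\times n},\qquad {\bf Q}=\begin{pmatrix}\tilde{\bf A}_+ & \tilde{\bf A}_-\end{pmatrix}\in\mathbb{R}^{n\times 2n}.$$
Sylvester's determinant identity then gives
$$\det(\lambda{\bf I}_{2n}-{\bf P}{\bf Q})=\lambda^{n}\det(\lambda{\bf I}_n-{\bf Q}{\bf P})=\lambda^{n}\det(\lambda{\bf I}_n-|\tilde{\bf A}|),$$
so the characteristic polynomial of ${\bf A}_{\rm sde}$ is $\lambda^n$ times that of $|\tilde{\bf A}|$. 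Consequently the nonzero eigenvalues of ${\bf A}_{\rm sde}$, counted with algebraic multiplicity, coincide exactly with the nonzero eigenvalues of $|\tilde{\bf A}|$, while $0$ is an eigenvalue of ${\bf A}_{\rm sde}$ of algebraic multiplicity at least $n$.

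Second I would verify that $|\tilde{\bf A}|$ is a primitive row-stochastic matrix. Row stochasticity is immediate from \eqref{weight.def} and \eqref{normalizedfgi.eq}, since every row sums to $w_i\sum_{j\in\Gamma_i}|a(i,j)|=1$. The diagonal of $|\tilde{\bf A}|$ is strictly positive because \eqref{Tandonlemma.eq0} places $(i,i)$ in the nonzero pattern for every $i$; this alone gives aperiodicity. Irreducibility follows by reading off from \eqref{Tandonlemma.eq0} that the support of $|\tilde{\bf A}|$ contains the edges $(i,i+1)$ for $1\le i\le n-1$ and the edge $(n,1)$ (the last one using $s\le n-1$), so the Hamiltonian cycle $1\to 2\to\cdots\to n\to 1$ lies in the directed graph of $|\tilde{\bf A}|$. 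A nonnegative irreducible matrix with a positive diagonal is primitive.

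Finally, the Perron--Frobenius theorem for primitive stochastic matrices yields that $1$ is an algebraically simple eigenvalue of $|\tilde{\bf A}|$ (with eigenvector ${\bf 1}_n$) and every other eigenvalue lies strictly inside the complex unit disk. Transferring this conclusion through the Sylvester identity of the first step, $1$ is a simple eigenvalue of ${\bf A}_{\rm sde}$, and every remaining eigenvalue of ${\bf A}_{\rm sde}$ is either $0$ or a subdominant eigenvalue of $|\tilde{\bf A}|$, hence in either case lies in the open unit disk centered at the origin. The main obstacle I expect is purely bookkeeping: making sure the Sylvester reduction preserves the \emph{algebraic} (not merely geometric) multiplicity of the eigenvalue $1$ when passing between ${\bf A}_{\rm sde}$ and $|\tilde{\bf A}|$, and confirming combinatorially that the cyclic sparsity pattern \eqref{Tandonlemma.eq0} puts a Hamiltonian cycle in the support of $|\tilde{\bf A}|$ for the full range $1\le s\le n-1$.
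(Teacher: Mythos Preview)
Your proof is correct and follows the same two-stage architecture as the paper---reduce the spectral question for ${\bf A}_{\rm sde}$ to one for $|\tilde{\bf A}|$, then apply Perron--Frobenius to the latter---but the execution of both stages is different and, in each case, more economical.

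For the reduction, the paper states a separate Proposition~\ref{eigenvalueone.pr} asserting that ${\bf A}_{\rm sde}$ and $|\tilde{\bf A}|$ share nonzero eigenvalues with the same algebraic multiplicities, and proves it in Appendix~\ref{eigenvalueone.pr.appendix} by an explicit computation of generalized eigenspaces, expanding powers of the block matrix and matching null spaces dimension by dimension. Your Sylvester factorization ${\bf A}_{\rm sde}={\bf P}{\bf Q}$ with ${\bf Q}{\bf P}=|\tilde{\bf A}|$ delivers the same conclusion in one line at the level of characteristic polynomials, and automatically tracks algebraic multiplicities. (A cosmetic point: the paper already reserves the symbol ${\bf P}$ for the rank-one projector in \eqref{P0.def}, so you would want to rename your factor.) For irreducibility, the paper runs an induction on $k$ to show that all entries of $|\tilde{\bf A}|^k$ are positive once $k\ge n/s$, whereas you simply read off the Hamiltonian cycle $1\to 2\to\cdots\to n\to 1$ from the support pattern \eqref{Tandonlemma.eq0}; both arguments use the positive diagonal for aperiodicity, but yours avoids the inductive bookkeeping. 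What the paper's approach buys is a quantitative byproduct (an explicit primitivity index $\lceil n/s\rceil$), while yours buys brevity.
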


Denote the normalized decoding matrix of the decoding matrix ${\bf A}$
by
%\vskip-0.16in
\begin{equation}\label{stochasticdecodematrix.def0}
    |\tilde {\bf A}|=(|\tilde a(i,j)|)_{1\le i, j \leq n},
\end{equation}
where $\tilde a(i,j), 1\le i,j\le n$, are given in \eqref{normalizedfgi.eq}.
By
\eqref{normalizedfgi.eq},
the normalized decoding matrix $|\tilde {\bf A}|$ has  row stochastic property.
To prove Proposition \ref{lemma_A_sde_simple_eig}, we need an equivalence
between the
eigenvalue properties for  the row stochastic  matrices ${\bf A}_{\rm sde}$ and  $|{\tilde {\bf A}}|$.

\begin{prop}\label{eigenvalueone.pr}  The algebraic multiplicities of nonzero eigenvalues of
 stochastic decoding matrices  ${\bf A}_{\rm sde}$ and $|{\tilde {\bf A}}|$ are the same.
\end{prop}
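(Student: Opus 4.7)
The plan is to exhibit a similarity transformation that converts ${\bf A}_{\rm sde}$ into a block upper triangular matrix whose diagonal blocks are $|\tilde{\bf A}|$ and the $n\times n$ zero matrix. Since ${\bf A}_{\rm sde}$ has the special structure of two identical block rows, and since the identity $|t|=t_+ + (-t)_+$ gives $|\tilde{\bf A}|=\tilde{\bf A}_++\tilde{\bf A}_-$, subtracting the first block row from the second should clear it out entirely.

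Concretely, I would set
\[
{\bf P}=\begin{pmatrix} {\bf I}_n & {\bf 0}_{n\times n}\\ -{\bf I}_n & {\bf I}_n\end{pmatrix},\qquad {\bf P}^{-1}=\begin{pmatrix} {\bf I}_n & {\bf 0}_{n\times n}\\ {\bf I}_n & {\bf I}_n\end{pmatrix},
\]
and carry out the two block-matrix multiplications to obtain
\[
{\bf P}\,{\bf A}_{\rm sde}\,{\bf P}^{-1}=\begin{pmatrix} |\tilde{\bf A}| & \tilde{\bf A}_-\\ {\bf 0}_{n\times n} & {\bf 0}_{n\times n}\end{pmatrix}.
\]
Because this resulting matrix is block upper triangular, its characteristic polynomial factors as
\[
\det\bigl(\lambda {\bf I}_{2n}-{\bf P}\,{\bf A}_{\rm sde}\,{\bf P}^{-1}\bigr)=\lambda^n\,\det\bigl(\lambda {\bf I}_n-|\tilde{\bf A}|\bigr).
\]
Since similar matrices share the same characteristic polynomial, the characteristic polynomial of ${\bf A}_{\rm sde}$ equals $\lambda^n$ times the characteristic polynomial of $|\tilde{\bf A}|$.

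From this factorization I would read off the conclusion: for every $\lambda\neq 0$, the algebraic multiplicity of $\lambda$ as a root of the characteristic polynomial of ${\bf A}_{\rm sde}$ agrees with its multiplicity as a root of the characteristic polynomial of $|\tilde{\bf A}|$, while the extra factor $\lambda^n$ only affects the multiplicity at the eigenvalue $0$. This is exactly the claim, so no nonzero eigenvalues are gained or lost, nor are their multiplicities altered.

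There is no serious obstacle here; the only modest step is recognizing the correct block similarity, which is suggested by the fact that the two block rows of ${\bf A}_{\rm sde}$ are identical and that $\tilde{\bf A}_++\tilde{\bf A}_-=|\tilde{\bf A}|$. One should note that the proof is purely algebraic and uses no structural hypotheses on ${\bf A}$ beyond the definitions in \eqref{Afit.def} and \eqref{stochasticdecodematrix.def0}, which is why Proposition~\ref{eigenvalueone.pr} can subsequently be applied to reduce the eigenvalue analysis of ${\bf A}_{\rm sde}$ in Proposition~\ref{lemma_A_sde_simple_eig} to the (smaller and simpler) row stochastic matrix $|\tilde{\bf A}|$.
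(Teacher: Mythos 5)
Your proof is correct, and it takes a genuinely different route from the paper's. The paper argues at the level of generalized eigenspaces: using the identity
$\left(\begin{smallmatrix} {\bf A} & {\bf B} \\ {\bf A} & {\bf B} \end{smallmatrix}\right)^j=\left(\begin{smallmatrix} {\bf C}^{j-1} & {\bf 0} \\ {\bf 0} & {\bf C}^{j-1} \end{smallmatrix}\right)\left(\begin{smallmatrix} {\bf A} & {\bf B} \\ {\bf A} & {\bf B} \end{smallmatrix}\right)$ with ${\bf C}={\bf A}+{\bf B}$, it shows that for every $\lambda\neq 0$ and every $k\ge 1$ the null space of $({\bf A}_{\rm sde}-\lambda{\bf I}_{2n})^k$ consists exactly of the doubled vectors $\left(\begin{smallmatrix}{\bf u}\\ {\bf u}\end{smallmatrix}\right)$ with $({|\tilde{\bf A}|}-\lambda{\bf I}_n)^k{\bf u}={\bf 0}$, whence the dimensions (and therefore the algebraic multiplicities) agree. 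You instead conjugate by the block-elementary matrix ${\bf P}$, which is legitimate since ${\bf P}$ is invertible with the inverse you state, and the computation ${\bf P}{\bf A}_{\rm sde}{\bf P}^{-1}=\left(\begin{smallmatrix} |\tilde{\bf A}| & \tilde{\bf A}_- \\ {\bf 0} & {\bf 0} \end{smallmatrix}\right)$ checks out because the two block rows of ${\bf A}_{\rm sde}$ are identical and $\tilde{\bf A}_++\tilde{\bf A}_-=|\tilde{\bf A}|$. Reading off $\det(\lambda{\bf I}_{2n}-{\bf A}_{\rm sde})=\lambda^n\det(\lambda{\bf I}_n-|\tilde{\bf A}|)$ from the block triangular form immediately gives the claim for all $\lambda\neq 0$. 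Your argument is shorter and more transparent for the stated proposition; the paper's argument buys slightly more, namely an explicit identification of the generalized eigenvectors of ${\bf A}_{\rm sde}$ at nonzero eigenvalues as consensus-type doubled vectors, which is in the same spirit as how the eigenvector ${\bf 1}_{2n}$ at eigenvalue one is used later, but that extra structure is not needed to conclude equality of algebraic multiplicities. Both proofs correctly use only the definitions in \eqref{Afit.def} and \eqref{stochasticdecodematrix.def0} and no structural hypotheses on ${\bf A}$.
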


The proof of the above proposition will be given in  Appendix
\ref{eigenvalueone.pr.appendix}.  We assume that Proposition {eigenvalueone.pr}
holds and we  give the proof of Proposition
\ref{lemma_A_sde_simple_eig} below.

\begin{proof}[Proof of Proposition
\ref{lemma_A_sde_simple_eig}]
%
%As $|{\tilde {\bf A}}|$ is a row stochastic matrix, it follows from
%Perron-Frobenius theorem that $|{\tilde {\bf A}}|$ has simple eigenvalue one
%if it is irreducible.
Set ${\bf B}=|\tilde {\bf A}|$, and write
$B^k=(b_k(i,j))_{1\le i, j\le n}, \ k\ge 1$, and also
${\bf B}=(b(i,j))_{1\le i,j\le n}$ for $k=1$.
Then ${\bf B}$ is a row stochastic matrix with nonzero diagonal entries.
By Perron-Frobenius theorem and Proposition \ref{eigenvalueone.pr}, it is suffices to prove that
 ${\bf B}$ is irreducible, i.e., for any $1\le i,j\le n$, there exists $k\ge 1$ such that
%\vskip-0.16in
\begin{equation} \label{Tandonlemma.pfeq0}
b_k(i,j)\ne 0.
    \end{equation}
    By the assumption on the decode matrix ${\bf A}$, we have
   \begin{equation}\label{Tandonlemma.pfeq1}
b(i,j)\ne 0 \ {\rm if} \ i\le j\le i+n-s\ {\rm and\ if} \ j+s\le i.
\end{equation}
Observe that for any $1\le i, j\le n$, we have
    %\vskip-0.08in
\begin{eqnarray}
\label{Tandonlemma.pfeq1+}
b_{k+1}(i,j) & = & \sum_{l=1}^n b(i,l) b_k(l,j) \ge \sum_{l=i}^{\min(i+n-s, n)} b(i,l) b_k(i,j)\nonumber\\
&  &  +\sum_{l=1}^{i-s} b(i,l)b_k(l,j),\  k\ge 1.
\end{eqnarray}
    %\vskip-0.08in
    \noindent
By \eqref{Tandonlemma.pfeq1}
and \eqref{Tandonlemma.pfeq1+}, we can prove by induction on $k\ge 1$ that
$b_k(i,j)\ne 0$
 for all $(i,j)$ satisfying either
 $i\le j\le \min (i+k(n-s), n)$  or $j+ks\le  i$.
This proves \eqref{Tandonlemma.pfeq0} for all $k\ge n/s$ and completes the proof. \end{proof}

Let ${\mathcal G}_{\bf A}=(V, E_{\bf A})$ be the graph to describe the network topology in which
there is an edge   $(l,l')\in E_{\bf A}$  if and only if
$a(i,j)\ne 0$.
Define the minimum out/in degree of the network graph $ \mathcal{G}_{\bf A} $ by
$ \delta_{\rm out}(\mathcal{G}_{\bf A}) =\min_{1\le i\le n} \#\{j, \ a(i,j)\ne 0\}$ and
$\delta_{\in}(\mathcal{G}_{\bf A}) =\min_{1\le i\le n} \#\{j, \ a(j, i)\ne 0\}$ respectively.
Next we consider  the  scenario of the network topology that
    %\vskip-0.08in
\begin{equation}\label{deltaG.con}
\delta({\mathcal G}_{\bf A}):=\min(  \delta_{\rm out}(\mathcal{G}_{\bf A}),  \delta_{\rm in}(\mathcal{G}_{\bf A}))>n/2.
\end{equation}
In the above scenario, for each active node there are at least $\delta({\mathcal G})$ non-straggler to receive information from
and
to send information to.

%Therefore, for the above scenario, we have the following property for the corresponding row stochastic  matrix $ {\bf A}_{\rm sde}$.

\begin{prop}\label{lemma_A_sde_simple_eig-struct-1}
%\eqref{Tandonlemma.eq0}
If \eqref{deltaG.con} holds, then  the matrix $ {\bf A}_{\rm sde} $ used in the CoDGraD algorithm \eqref{mainalgorithm} has simple eigenvalue one and all other eigenvalues contained in the open unit complex disk centered at the origin.
\end{prop}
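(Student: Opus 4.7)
The plan is to reduce, via Proposition \ref{eigenvalueone.pr}, to the corresponding statement for the smaller $n \times n$ row stochastic matrix $|\tilde{\bf A}|$, and then to verify that statement by showing $|\tilde{\bf A}|$ is primitive. Concretely, since Proposition \ref{eigenvalueone.pr} asserts that the algebraic multiplicities of the nonzero eigenvalues of ${\bf A}_{\rm sde}$ and $|\tilde{\bf A}|$ agree, and since any eigenvalue of ${\bf A}_{\rm sde}$ not appearing on this list of nonzero eigenvalues must be zero (which already lies inside the open unit disk), it suffices to prove that $|\tilde{\bf A}|$ has $1$ as a simple eigenvalue with every other eigenvalue in the open unit disk.

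The key step is to exploit the degree condition \eqref{deltaG.con} to show that $|\tilde{\bf A}|^2$ is strictly positive entrywise. For fixed $1\le i, l\le n$, write
\begin{equation*}
(|\tilde{\bf A}|^2)_{i,l} = \sum_{j=1}^{n} |\tilde a(i,j)|\, |\tilde a(j,l)|,
\end{equation*}
and set $S_i=\{j:\ a(i,j)\ne 0\}$, $T_l=\{j:\ a(j,l)\ne 0\}$. By the definitions of $\delta_{\rm out}$ and $\delta_{\rm in}$, we have $|S_i|\ge \delta_{\rm out}(\mathcal{G}_{\bf A})$ and $|T_l|\ge \delta_{\rm in}(\mathcal{G}_{\bf A})$, so $|S_i|+|T_l|>n$ by \eqref{deltaG.con}. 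The pigeonhole principle then forces $S_i\cap T_l\ne\emptyset$, and any $j$ in this intersection contributes a strictly positive term to the sum above. Hence $(|\tilde{\bf A}|^2)_{i,l}>0$ for all $i,l$.

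With strict positivity of $|\tilde{\bf A}|^2$ in hand, $|\tilde{\bf A}|$ is a primitive nonnegative matrix. Combined with the fact that $|\tilde{\bf A}|$ is row stochastic (by the definition of $w_i$ in \eqref{weight.def}), the Perron--Frobenius theorem for primitive nonnegative matrices yields that $1$ is a simple eigenvalue of $|\tilde{\bf A}|$ and every other eigenvalue lies strictly inside the unit circle. Feeding this back through Proposition \ref{eigenvalueone.pr} transports the same conclusion to ${\bf A}_{\rm sde}$ (with any additional eigenvalues of ${\bf A}_{\rm sde}$ being zero, hence harmless), which is what we wanted.

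I expect the main obstacle to be the bookkeeping at the reduction step: one must be careful that ``simple eigenvalue one together with all other eigenvalues in the open unit disk'' for $|\tilde{\bf A}|$ actually propagates to ${\bf A}_{\rm sde}$ despite ${\bf A}_{\rm sde}$ being twice as large. This is not automatic from ``same nonzero spectrum with multiplicities,'' but it works here because the only way ${\bf A}_{\rm sde}$ could violate the conclusion is by having an eigenvalue of modulus one different from $1$, and any such eigenvalue would be nonzero and so would have to appear in the spectrum of $|\tilde{\bf A}|$, contradicting the Perron--Frobenius step. The rest of the argument is essentially a clean diameter-two reachability calculation on the directed graph $\mathcal{G}_{\bf A}$.
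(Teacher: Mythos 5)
Your proposal is correct and follows essentially the same route as the paper: reduce to the $n\times n$ row stochastic matrix $|\tilde{\bf A}|$ via Proposition \ref{eigenvalueone.pr}, use the degree condition \eqref{deltaG.con} and pigeonhole to show that every entry of $|\tilde{\bf A}|^2$ is strictly positive, and conclude by Perron--Frobenius. Your extra care at the reduction step (noting that any eigenvalues of ${\bf A}_{\rm sde}$ not accounted for by the nonzero spectrum of $|\tilde{\bf A}|$ must be zero) is a welcome clarification of a point the paper leaves implicit, but it is not a different argument.
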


\begin{proof} Following the argument used in
the proof of Proposition \ref{lemma_A_sde_simple_eig}, it suffices to prove
  %\vskip-0.08in
  \begin{equation} \label{Tandonlemma.pfeq0-struct-1}
b_2(i,j)\ne 0 \ {\rm for \ all} \ 1\le i,j\le n.
    \end{equation}
    %\vskip-0.08in
    \noindent  Set
$C_1=\{l, b(i,l)\ne 0\}$ and $
C_2=\{l, b(l, j)\ne 0\}$.
By the assumption on the decoding matrix ${\bf A}$,
they contain at least $\delta({\mathcal G})>n/2$
elements contained in $C_1$ and $C_2$ respectively, and hence  there exists $l_0\in C_1\cap C_2$.
Hence
  %\vskip-0.08in
  $$
b_2(i,j)  = \sum_{l=1}^n b(i,l) b_k(l,j) \ge b(i, l_0) b(l_0, j)
>0.$$
  %\vskip-0.08in
  \noindent
%By \eqref{Tandonlemma.pfeq1}
%and \eqref{Tandonlemma.pfeq1+},
This proves \eqref{Tandonlemma.pfeq0-struct-1} and completes the proof.
%But since $ {\bf B} $ is row stochastic then eigenvalue one is the unique eigenvalue on the unit circle centered at the origin and this completes the proof.
 \end{proof}

\section{Coordinated Distributed Coding}

\label{Network_Forming_Coding}

\subsection{Network Formation}

\subsubsection{Network Detection}

A node gets activated and decides to form a network coded CoDGraD implementation. We signify this node as the \emph{coordinator node}. The coordinator node sends a message containing a label identifying the CoDGraD implementation, the transmitting node (i.e., which is itself), the accumulated path of the message (i.e., currently the node itself), and its public key.
When a neighboring node receives that message it transmits a message containing the label of the received message, the current transmitting node (i.e., which is the current node), the accumulated path of message (i.e., appending previous message path with the current transmitting node) and an encrypted message block. This neighboring node also sends in its encrypted message block part; its symmetric key with its identifier both encrypted with the coordinator node public key.
This process of transmitting a message and receiving a message then retransmitting continues henceforth until one of the two halting criteria is met.

\begin{enumerate}
\item {If the to-be-transmitted message at a node contains in its accumulated path an edge which is traversed twice in the same direction, then this message will not be re-transmitted.}

\item {A desired time that takes into the consideration the size of the network to be designed and its desired performance is set. After that time, the coordinator node would have received messages related to its CoDGraD implementation, that contain the label, the transmitting nodes and the accumulated paths.}
\end{enumerate}

Every time criterion 2 is met, the coordinator node forms from the accumulated paths the adjacency matrix of the interacting nodes. And according to the designed coding scheme and the allowed number of stragglers, the coordinator node decides on the nodes that need to join the network thus satisfying the stragglers and data partitions' redundancy thresholds. Thus it forms the network adjacency matrix and its related coding schemes encoding and decoding matrices ${\bf A_{\rm sde}}$ and ${\bf B}$, i.e., the  decoding matrix used for the weighing matrix and the gradient coding matrix used for information privacy. The coordinator node will also decrypts all encrypted data in the encrypted message block of the message, thus retrieving the symmetric keys of each node of the network that were encrypted with its public key.

\subsubsection{Network Forming and Coding}

Following the  protocol described in the preceding subsection, the coordinator node sends a new message containing the label of the network, an encrypted information containing the weights relative to the node that will join the network (i.e., $ {\bf A}_{\rm sde} $ row), and also an encrypted information containing the coefficients of the coded gradients of the node to join the network (i.e., $ {\bf B} $ row for computing $ \nabla{g}_{i} $). Subsequently, each node will recover this encrypted information through privacy symmetric keys between the coordinator node and itself.

More precisely, the coordinator node sends in the encrypted message block of the message the row of $ {\bf A}_{\rm sde} $ identified with node $ i $, for all nodes $ i \in \mathcal{G} $, each encrypted with the respective symmetric key of node $ i $ that was decrypted in the previous step. It also sends the weight of $ \nabla{f}_{i} $ (i.e., $ {\bf B}_{ji} $) that will be used in coding node $ i $ primary gradient in its neighbor $ j $ and the symmetric key of node $ j $, both encrypted with the symmetric key of node $ i $ which was decrypted in the previous step. It performs this operation for all nodes $ i $ and their respective neighbors $ j \in \mathcal{N}_{i} $, accordingly. In this message there are also the transmitting node and accumulated path information as before.

This message might also contain the adjacency matrix of the designed network and the adjacency matrix of the larger network that also contains nodes that are not allowed to join the network at this time due to not meeting the straggler threshold. All this adjacency information is also encrypted in such a way that only nodes of the designed network can access this information and only the information related to their neighborhoods.

\begin{figure}[H]
\centering
\includegraphics[trim=0 0 0 0cm , clip, width=8cm]{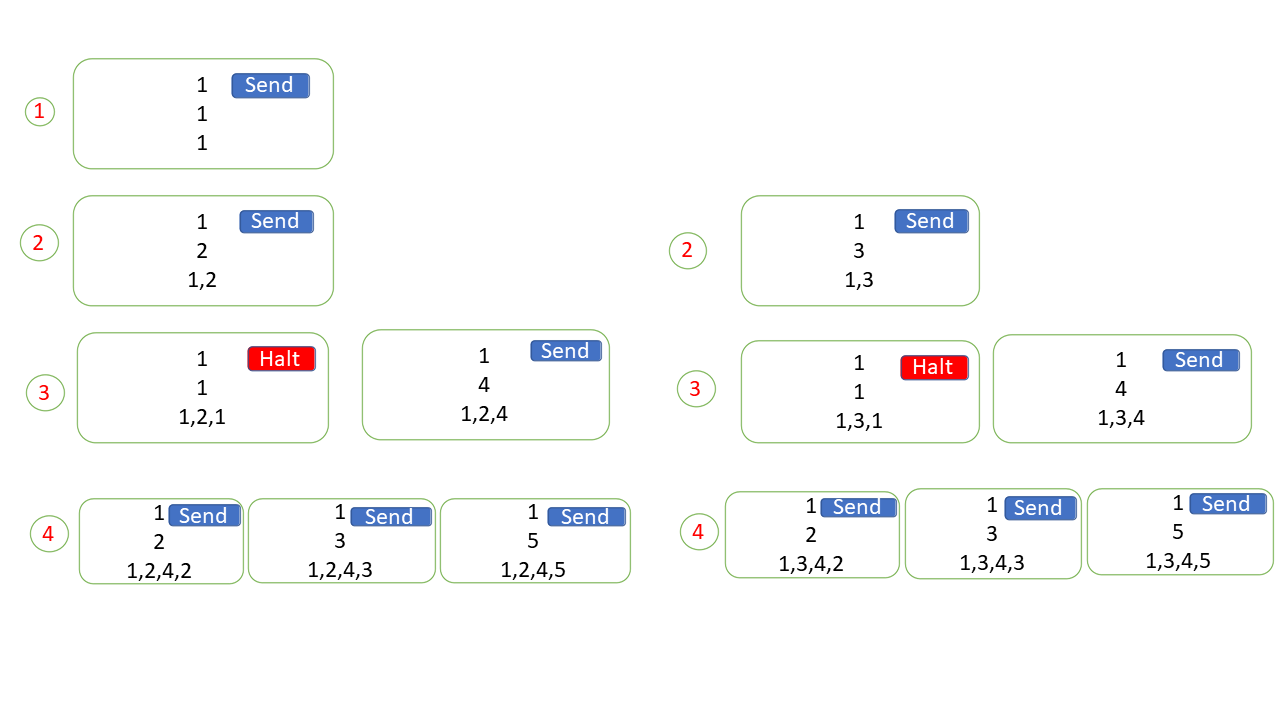}
%\caption{Example of a 6-node network where coordinated distributed coded network forming is performed. }
\end{figure}

\vspace{-3cm}

\begin{figure}[H]
\centering
\includegraphics[trim=0 0 0 0cm , clip, width=8cm]{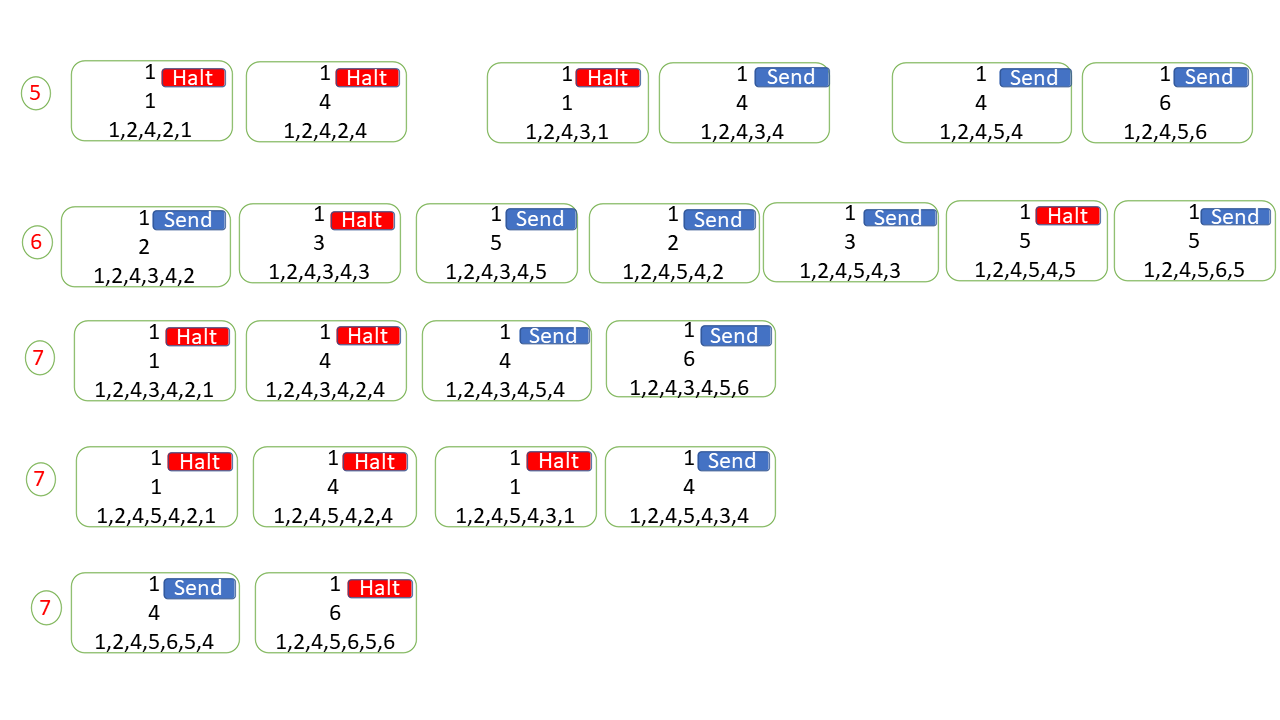}
%\caption{Example of a 6-node network where coordinated distributed coded network forming is performed. }
\end{figure}

\vspace{-1cm}

\begin{figure}[H]
\centering
\includegraphics[trim=0 1cm 0 1cm , clip, width=8cm]{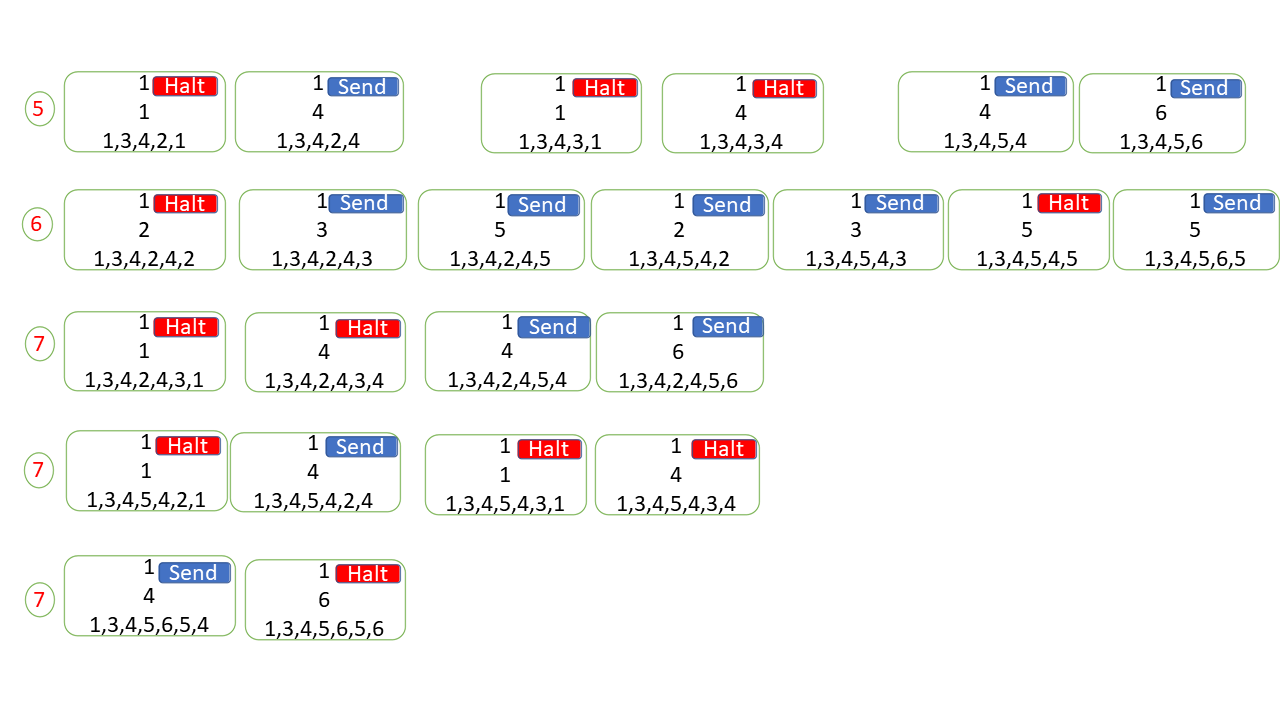}
%\caption{Example of a 6-node network where coordinated distributed coded network forming is performed. }
\end{figure}

Meanwhile, the coordinator node and all other nodes of the network will separate their raw data into two parts. The primary information containing the initial raw data at the node and the secondary information containing the coded information formed from neighboring nodes due to CoDGraD coding.
When the coordinator node sends the previously described message it also sends its coded primary information to its neighboring vicinity. Thus, it sends its partition of raw data weighted with the weights $ {\bf B}_{j1} $ for all $ j \in \mathcal{N}_{1} $, each encrypted with the symmetric key of node $ j $ decrypted previously by the coordinator node. And obviously, the coordinator node (i.e., node $ 1 $) will have access of the row of $ {\bf A}_{\rm sde} $ found in the previous step.
As for the other nodes that are allowed by the coordinator node to join the network, when they receive this message they send the same message to their neighboring nodes with the new transmitting node information and the accumulated paths. They will also decrypt using their symmetric keys all the information related to them  in the encrypted message block. Thus, each node $ i $ will be able to decrypt the row of $ {\bf A}_{\rm sde} $ identified with it, and the weight $ {\bf B}_{ji} $ that will be used in coding the primary data partitions (i.e., the coded gradients $ \nabla{g}_{j} $) at node $ j $, together with the symmetric key of its neighbor $ j $, for all of its neighbors $ j \in \mathcal{N}_{i} $. 
At the first reception of this message the receiving node $ i $ also transmits in conjunction to the above described message (i.e., probably in a different channel) its primary raw information weighted by weights $ {\bf B}_{ji} $ for all $ j \in \mathcal{N}_{i} $, each encrypted with the symmetric key of node $ j \in \mathcal{N}_{i} $ decrypted previously. In the same way when a node $ j $ receives the encrypted primary raw data containing the data partitions used for evaluating the coded gradients $ \nabla{g}_{j} $ from each neighboring node $ i \in \mathcal{N}_{j} $, it decrypts with its symmetric key the coded data part of each of its neighbors $ i $ and stores it in its secondary information.
%And according to the received primary information from neighboring nodes and the crypted coefficients of the coding scheme it forms its weights matrix row part (i.e., $ {\bf A}_{\rm sde} $ row) and the coded secondary information needed in gradient coding (i.e., computes $ \nabla{g}_{i} $).

%%%%%%% maybe describe better

Thus, as it was mentioned before, by now each node will be able, according to the received encrypted coded primary information from neighboring nodes and the encrypted coefficients of the coding scheme, to form its weights matrix row part (i.e., $ {\bf A}_{\rm sde} $ row) and the coded secondary information needed in gradient coding (i.e., computes $ \nabla{g}_{i} $).

The message at a node is not transmitted again, if as before, it contains in its accumulated path an edge crossed in the same direction twice. Meanwhile, the primary information is transmitted only once or according to a designed protocol, one example would be, when all its neighbors send a message signifying that they received that information.
This process continues until the coordinator node receives all messages containing the sent encrypted information with the accumulated paths and conceives that all nodes in the desired network have coded their data and formed their weighting coefficients. Then it decides to implement the CoDGraD algorithm.

\begin{figure}[H]
\centering
\includegraphics[trim=0 0 0 0cm , clip, width=8cm]{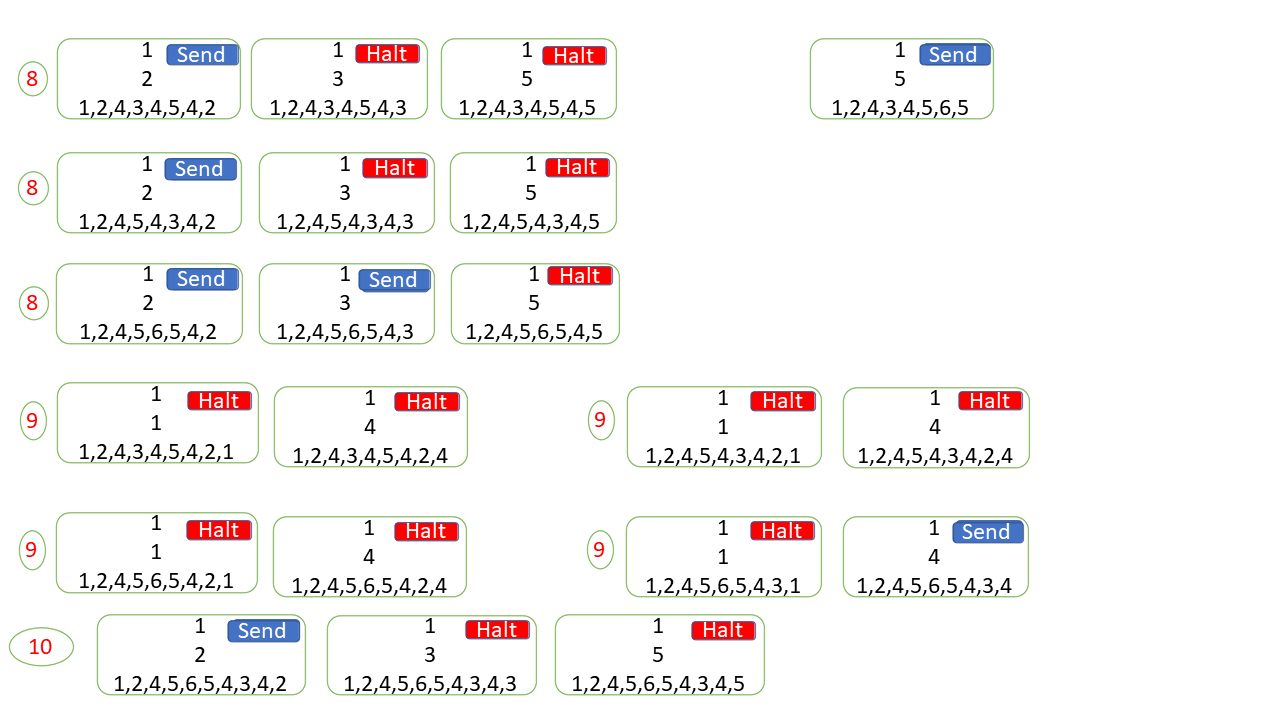}
%\caption{Example of a 6-node network where coordinated distributed coded network forming is performed. }
\end{figure}

\vspace{-1.5cm}
\begin{figure}[H]
\centering
\includegraphics[trim=0 2 0 0cm , clip, width=8cm]{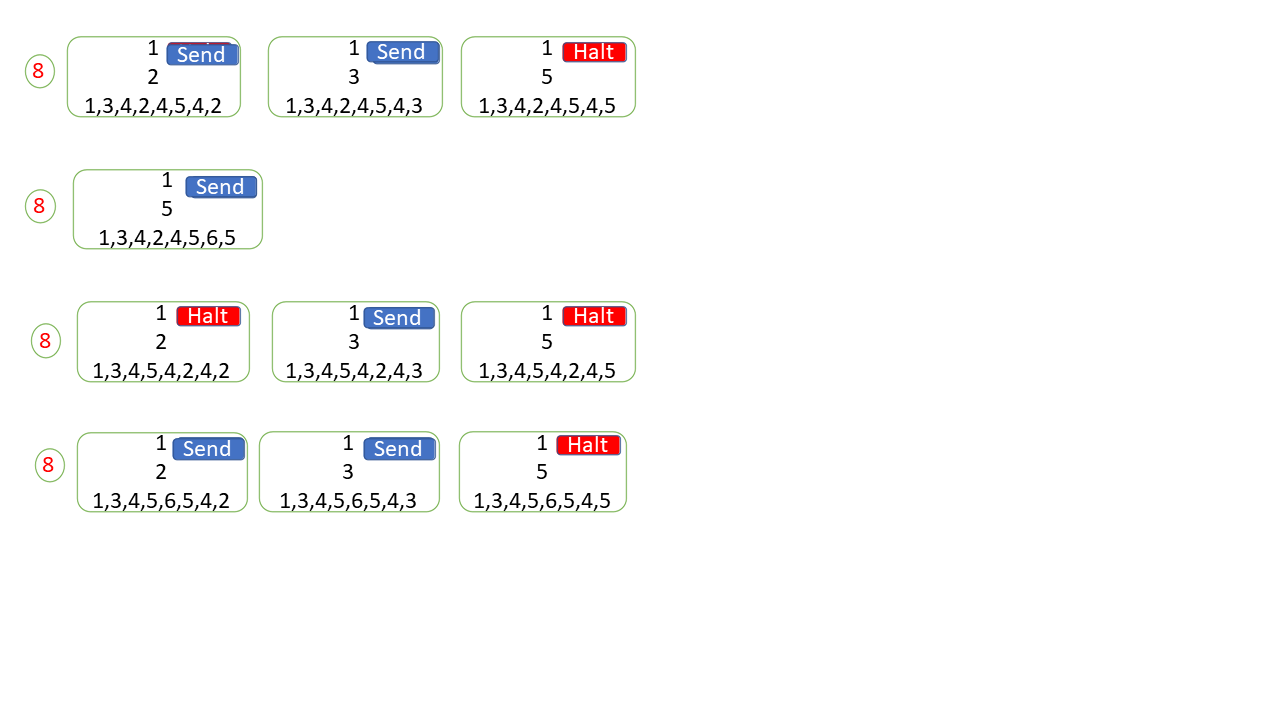}
%\caption{Example of a 6-node network where coordinated distributed coded network forming is performed. }
\end{figure}

\vspace{-1.5cm}

\subsubsection{Adding nodes to already formed network}

While the CoDGraD algorithm is in process, when a node detects a message from a new active node it sends a encrypted message containing the updated neighborhood of this node with the new out of network node. This new node detection can be perceived directly through the new node sensing a source different from its usual neighbors or can also be recognized if the node detects a new neighbor not in the neighborhood adjacency matrix row which was communicated to it by the coordinator node in the previous step. Note that the latter policy is usually used if the update of the new adjacency matrix is accomplished only at the coordinator node and the first if any node in the network can perform such adjacency matrix upgrade. This follows for all nodes that detect new nodes. And when these messages are received by other nodes in the network they send this encrypted information to neighboring nodes only when they receive these messages for the first time. When the coordinator node receives these updated neighborhoods in a encrypted manner it deciphers the adjacency matrix of the resulting new network and if the new nodes meet the straggler threshold they are added to the network and a new updated network is formed. Henceforth, the coordinator node, as in stage 2, sends the encrypted messages of the adjacency matrices and coding schemes and allow the formation of a new network with new coding containing the allowed new nodes and thus implementing the CoDGraD on this new network.

\begin{figure}[H]
\centering
\includegraphics[trim=0 0 0 0cm , clip, width=8cm]{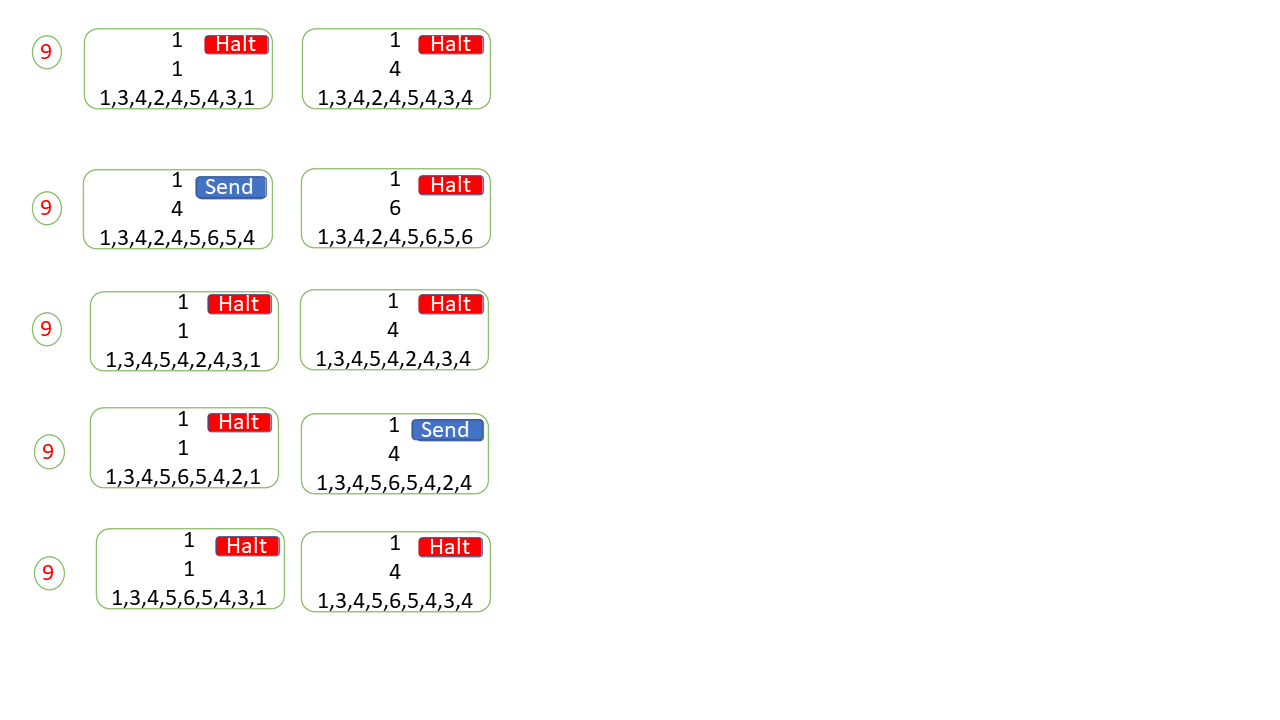}
%\caption{Example of a 6-node network where coordinated distributed coded network forming is performed. }
\end{figure}

\vspace{-1cm}

\begin{figure}[H]
\centering
\includegraphics[trim=0 2 0 1cm , clip, width=8cm]{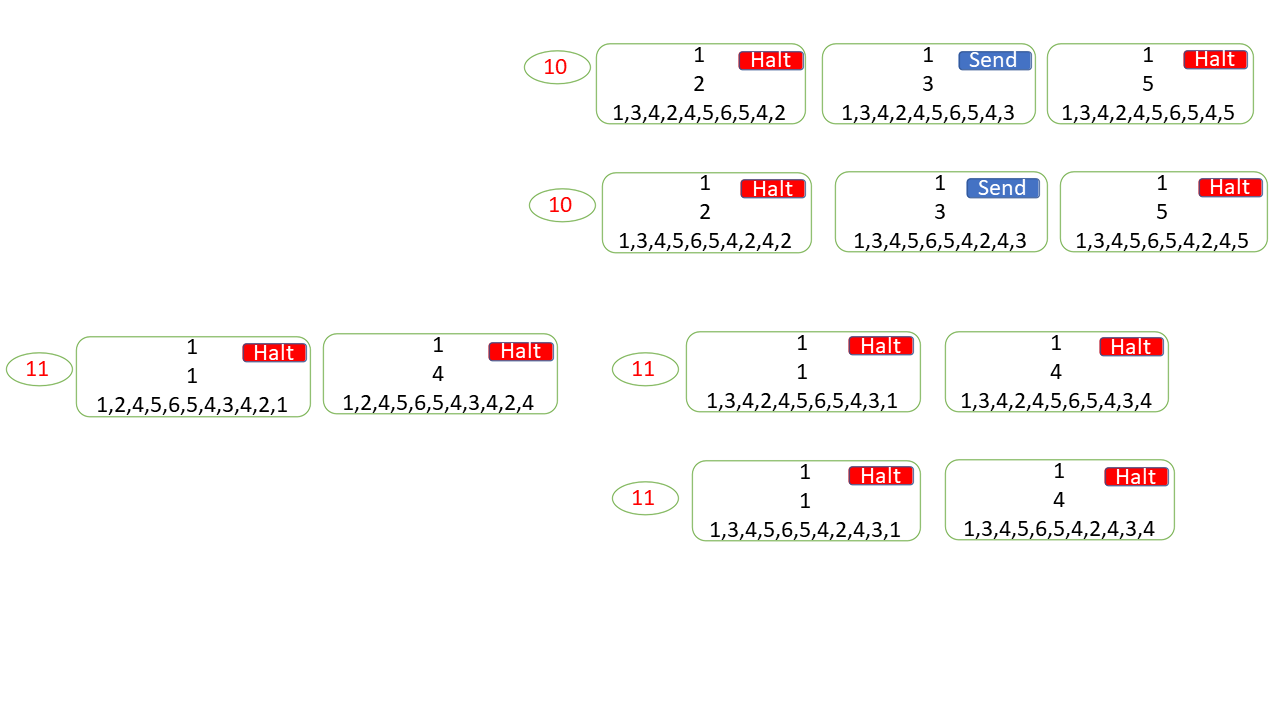}
%\caption{Example of a 6-node network where coordinated distributed coded network forming is performed. }
\end{figure}

\vspace{-1cm}

It is worth mentioning that in our analysis we focused on static topology with fixed straggler nodes and thus fixed weighing and gradient coding matrices. Although we encrypted the information which allows upgrading to dynamic networks with privacy due to encrypting, however, we restrict our analysis to one weighing matrix and one coding scheme. Dynamic networks with the same nodes require protocol that uses the encoding matrix $ {\bf A} $ of all possible $ s+1 $-combinations, where $ s $ is the maximum number of allowed stragglers, and the process of adding new nodes needs the use of new weighing matrices (i.e., a new coding scheme), so we will leave their analysis to a future work. Although we can also use one coding scheme and one weighting matrix for a dynamic network, the one corresponding to the $ n - s $ non-straggler combinations but its performance will be considerably degraded.

%%%%%%%%% See if it is better to remove since mentioned above

%Therefore, this part of adding nodes to an already formed network, since it exploits the use of new coding scheme thus new weighting matrix since the number of nodes in the new network is different (i.e., according to the coding schemes in Tandon et al. \citep{tandon2017}), is left to a future work of dynamic networks, where we investigate the implementation of CoDGraD in fixed networks with dynamic topology and varying networks in dynamic topology. 

Meanwhile, it is also worth noting that we could have allowed any node to upgrade the coding scheme whence it receives information about a neighborhood of a new node with the allowed straggler threshold. But we have restricted that to one node and specifically the coordinator node in order to preserve encrypted privacy keys designed due to that node. And thus not allowing the first approach since then we need to disclose the whole adjacency matrix to all nodes. By performing that we would be unable to preserve privacy through  allowing only neighborhood information to be disclosed to each node of the network while keeping the whole information to the coordinator node, the CoDGraD implementer. However, we can allow other nodes to send this information if they keep it encrypted through encrypting keys between them and the rest of the nodes.

\begin{figure}[H]\label{fig_network_6-node}
\centering
\includegraphics[trim=0 2 0 1cm , clip, width=8cm]{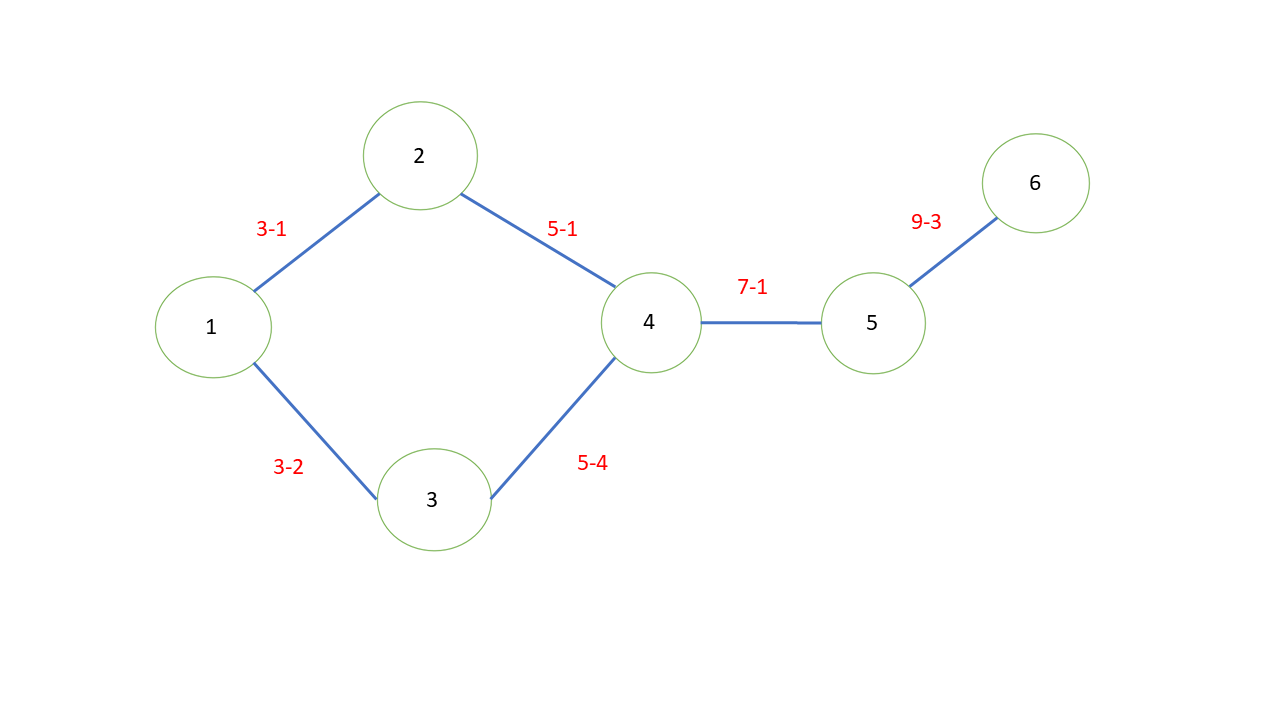}
\caption{Example of a 6-node network where coordinated distributed coded network forming is performed. The coefficients on the edges shows the corresponding message received by the coordinator node $ 1 $ that allows it to infer that edge connection. }
\end{figure}

\begin{remark}
We have provided in this section an example of the messaging protocol used in network forming on the $ 6 $-node network described in Fig~1. %\ref{fig_network_6-node}.
\end{remark}

\section{Consensus  property of the Code-based Distributed Gradient Descent Algorithm}\label{mainalg1.section}

In this  section, we consider the consensus property of  ${\bf x}_i(k),\  k \ge  1$, in the  CoDGraD  algorithm  \eqref{mainalgorithm}.

%\vskip-0.06in

\begin{thm}\label{consensus.main.thm}
Let
${\bf x}_i(k),\  k\ge  1$,  be in the  CoDGraD  algorithm  \eqref{mainalgorithm}.
If the row stochastic  matrix
${\bf A}_{\rm sde}$ in \eqref{Afit.def} has simple eigenvalue one and all other eigenvalues contained in the open unit complex disk centered at the origin,
the sequence $\{\alpha_k\}_{k=0}^\infty$ satisfies
\eqref{alphak.assumption},
and  the local  objective functions $g_i, 1\le i\le n$,  have bounded gradients, i.e., there exists a positive constant $M$ such that
%\vskip-0.10in
\begin{equation}\label{f.bounded}
\|\nabla g_i({\bf x})\|_2\le M,\  {\bf x}\in {\mathbb R}^N,
\end{equation}
then
%\vskip-0.16in
\begin{equation}\label{consensus.newlimit}
\lim_{k\to \infty}(
{\bf x}_i(k)-{\bf x}_j(k))=0 \end{equation}
%\vskip-0.06in
\noindent
and
%\vskip-0.16in
\begin{equation}\label{consensus.newlimit2}
\lim_{k\to \infty}(
f({\bf x}_i(k))-f({\bf x}_j(k)))=0, \ 1\le i, j\le n. \end{equation}
%\vskip-0.06in
\end{thm}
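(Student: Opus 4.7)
The plan is to first recast the per-node update in \eqref{mainalgorithm} in matrix form. Stacking ${\bf x}_1(k),\ldots,{\bf x}_n(k)$ as rows of an $n\times N$ matrix ${\bf X}(k)$ and similarly ${\bf V}(k)$ for ${\bf v}_i(k)=\nabla g_i({\bf x}_i(k))$, the identities $(a(i,j))_{+}+(-a(i,j))_{+}=|a(i,j)|$ and $(a(i,j))_{+}-(-a(i,j))_{+}=a(i,j)$ together with the normalization \eqref{weight.def} give
\[
{\bf X}(k+1) = |\tilde{\bf A}|\,{\bf X}(k) - \alpha_k\,\tilde{\bf A}\,{\bf V}(k),
\]
where $|\tilde{\bf A}|$ is the row-stochastic matrix in \eqref{stochasticdecodematrix.def0}. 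Iterating yields
\[
{\bf X}(k+1) = |\tilde{\bf A}|^{k+1}{\bf X}(0) - \sum_{t=0}^{k}\alpha_t\,|\tilde{\bf A}|^{k-t}\,\tilde{\bf A}\,{\bf V}(t).
\]

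Next I would invoke Proposition~\ref{eigenvalueone.pr} together with the spectral hypothesis on ${\bf A}_{\rm sde}$ to transfer the spectral property to $|\tilde{\bf A}|$: the simple eigenvalue $1$ and all remaining eigenvalues inside the open unit disk persist. Combining this with the Perron--Frobenius theorem for non-negative row-stochastic matrices, the right Perron eigenvector is ${\bf 1}$ and a left Perron eigenvector $\pi$ (with $\pi^{T}{\bf 1}=1$) exists, giving the decomposition
\[
|\tilde{\bf A}|^{k} = {\bf 1}\pi^{T} + {\bf R}_k, \qquad \|{\bf R}_k\|_{2}\le C\rho^{k}
\]
for some $\rho\in(0,1)$ and $C>0$.

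To prove \eqref{consensus.newlimit} I would apply $({\bf e}_i-{\bf e}_j)^{T}$ from the left. Since $({\bf e}_i-{\bf e}_j)^{T}{\bf 1}=0$, the rank-one component ${\bf 1}\pi^{T}$ is annihilated, so
\[
{\bf x}_i(k+1)-{\bf x}_j(k+1) = ({\bf e}_i-{\bf e}_j)^{T}{\bf R}_{k+1}{\bf X}(0) - \sum_{t=0}^{k}\alpha_t({\bf e}_i-{\bf e}_j)^{T}{\bf R}_{k-t}\tilde{\bf A}\,{\bf V}(t).
\]
The first term decays like $\rho^{k+1}$. Using the uniform bound \eqref{f.bounded}, $\|{\bf V}(t)\|_{2}\le \sqrt{n}\,M$, so the second term is dominated by $C'\sum_{t=0}^{k}\alpha_t\rho^{k-t}$. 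Because \eqref{alphak.assumption} forces $\alpha_t\to 0$, a standard split of the convolution at $t\le k/2$ and $t>k/2$ (the first part is controlled by $\rho^{k/2}$ times a bounded sum, the second by the tail supremum of $\alpha_t$ times $(1-\rho)^{-1}$) shows the convolution tends to zero. This establishes \eqref{consensus.newlimit}.

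For \eqref{consensus.newlimit2} I would exploit the coding identity \eqref{fgi.eq}: for any fixed $i$, $f=\sum_{j}a(i,j)g_{j}$, so $\|\nabla f({\bf x})\|_{2}\le\sum_{j}|a(i,j)|M$ uniformly in ${\bf x}$, i.e.\ $f$ is globally Lipschitz with some constant $L_f$. Consequently $|f({\bf x}_i(k))-f({\bf x}_j(k))|\le L_f\|{\bf x}_i(k)-{\bf x}_j(k)\|_{2}\to 0$, which yields \eqref{consensus.newlimit2}. The main technical obstacle is the convolution estimate $\sum_{t=0}^{k}\alpha_t\rho^{k-t}\to 0$ under only \eqref{alphak.assumption}; everything else reduces to algebraic rearrangement and the spectral transfer from ${\bf A}_{\rm sde}$ to $|\tilde{\bf A}|$ supplied by Proposition~\ref{eigenvalueone.pr}.
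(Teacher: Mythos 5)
Your proposal is correct, and its skeleton --- a spectral-gap decomposition of the averaging matrix into its Perron projection plus a geometrically decaying remainder, convolved against the step sizes --- is the same engine the paper uses. But you package it differently in two ways worth noting. First, you collapse the dynamics to dimension $n$: using $(t)_++(-t)_+=|t|$ and $(t)_+-(-t)_+=t$ you get ${\bf X}(k+1)=|\tilde{\bf A}|\,{\bf X}(k)-\alpha_k\tilde{\bf A}{\bf V}(k)$, whereas the paper lifts to the redundant $2n$-dimensional system ${\bf z}(k+1)={\bf A}_{\rm sde}{\bf z}(k)-\alpha_k{\bf A}_{\rm sde}{\bf h}(k)$ with ${\bf z}_i={\bf z}_{i+n}$ and subtracts the projection ${\bf P}={\bf 1}_{2n}{\bf a}_{\rm sde}^T$ (Proposition \ref{tildezk.prop}). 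The two are equivalent --- Proposition \ref{eigenvalueone.pr} is precisely the bridge that transfers the spectral hypothesis between ${\bf A}_{\rm sde}$ and $|\tilde{\bf A}|$, and you invoke it in the right direction --- but your $n$-dimensional form is leaner and makes transparent that the consensus part is driven by the row-stochastic $|\tilde{\bf A}|$ while the drift is driven by the signed matrix $\tilde{\bf A}$. Second, your limiting argument differs: you show $\sum_{t=0}^k\alpha_t\rho^{k-t}\to 0$ directly from $\alpha_t\to 0$ by splitting the convolution, which uses only that $\alpha_k\to0$; the paper instead proves the stronger statement $\sum_k\max_i\|{\bf x}_i(k)-\bar{\bf x}(k)\|_2^2<\infty$ via an $\ell^2$ Young-type estimate on the convolution, exploiting the full force of $\sum_k\alpha_k^2<\infty$. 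Your route is more elementary for Theorem \ref{consensus.main.thm} alone, but the paper's square-summability of the consensus error is reused in the convergence analysis of Section \ref{mainalg2.section} (e.g.\ in \eqref{mainconvergence.thm.pf.eq11}), so it is not wasted effort there. One cosmetic caution: in your split, the piece $\sum_{t\le k/2}\alpha_t\rho^{k-t}$ should be bounded by $(\sup_t\alpha_t)\,\rho^{k/2}/(1-\rho)$ --- it is the geometric tail, not $\sum_{t\le k/2}\alpha_t$, that is the ``bounded sum,'' since $\sum_t\alpha_t$ diverges by \eqref{alphak.assumption}. Your treatment of \eqref{consensus.newlimit2} via the uniform bound $\|\nabla f\|_2\le M\sum_j|a(i,j)|$ inherited from \eqref{fgi.eq} matches the paper's argument.
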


Let the row stochastic matrix ${\bf A}_{\rm sde}$ in \eqref{Afit.def} have simple eigenvalue one
and
$\lambda_m ({\bf A}_{\rm sde}), 1\le m\le 2n$,  be its  eigenvalues   listed in the order that
%\vskip-0.16in
\begin{equation}\label{q.eigenvalues}
1=\lambda_1({\bf A}_{\rm sde})>|\lambda_2({\bf A}_{\rm sde})|\ge \ldots\ge  |\lambda_{2n} ({\bf A}_{\rm sde})|.\end{equation}
Write
${\bf A}_{\rm sde}= (q(i,j))_{1\le i, j\le 2n}$
and  for $1\le i\le n$, set
%\vskip-0.10in
\begin{equation}\label{zi.def001}
{\bf z}_i(k)={\bf z}_{i+n}(k)={\bf x}_i(k),\  k\ge 0.\end{equation}
Then  the  CoDGraD  algorithm  \eqref{mainalgorithm} can be rewritten as
%\vskip-0.16in
\begin{equation}\label{zk.def1}
{\bf z}_i(k+1)= \sum_{j=1}^{2n} q(i,j)\big({\bf z}_j(k)-\alpha_k {\bf h}_j(k)\big), \ 1\le i\le 2n,
\end{equation}
where
%\vskip-0.16in
\begin{equation}\label{hi.def}
{\bf h}_i(k)= \left\{\begin{array} {ll} \nabla g_i({\bf z}_i(k))  & {\rm if} \ 1\le i\le n\\
-\nabla g_{i-n}({\bf z}_i(k)) & {\rm if} \ n+1\le i\le 2n.
\end{array}\right.
\end{equation}
%\vskip-0.06in

Set
%\vskip-0.16in
\begin{equation}\label{zh.def2}
 {\bf z}(k):=\big({\bf z}_i(k)\big)_{1\le i\le 2n}
\ \ {\rm and} \ \ {\bf h}(k):= \big({\bf h}_i(k)\big)_{1\le i\le 2n}
\end{equation}
  with vectors ${\bf z}_i(k)$
 and ${\bf h}_i(k)\in {\mathbb R}^N, 1\le i\le 2n$, as their $i$-th entries respectively.
Then   the iterative algorithm \eqref{zk.def1} can be reformulated in a matrix form:
%\vskip-0.16in
\begin{equation}\label{zk.def2}
{\bf z}(k+1)= {\bf A}_{\rm sde}{\bf z}(k)-\alpha_k  {\bf A}_{\rm sde} {\bf h}(k), \ k\ge 0.
\end{equation}
%\vskip-0.06in

Define
%\vskip-0.16in
\begin{equation}\label{tildezk.def}
\tilde {\bf z}(k)=  {\bf z}(k) -{\bf P} {\bf z}(k), \ k\ge 0,\end{equation}
where
%\vskip-0.16in
\begin{equation}\label{P0.def}
{\bf P}= {\bf 1}_{2n}  ({\bf a}_{\rm sde})^T
\end{equation}
%\vskip-0.06in
\noindent
and
${\bf a}_{\rm sde}$ is  the  stationary probability vector  invariant under  the row stochastic matrix ${\bf A}_{\rm sde}$, i.e.,  the left eigenvector of ${\bf A}_{\rm sde}$ associated with eigenvalue one
that satisfies
%\vskip-0.10in
\begin{equation}\label{asde.def}
{\bf a}_{\rm sde}^T {\bf A}_{\rm sde} = {\bf a}_{\rm sde}^T \  \ {\rm and} \  \  {\bf a}_{\rm sde}^T{\bf 1}_{2n}=1.
\end{equation}
%\vskip-0.06in
\noindent
Then the consensus property \eqref{consensus.newlimit} reduces to establishing  %, we first show that
%\vskip-0.10in
\begin{equation}\label{firstconvergence}
\lim_{k\to \infty} \|\tilde {\bf z}(k)\|_{2, \infty}=0,\end{equation}
where
$\|{\bf z}\|_{2, \infty}= \sup_{1\le i\le 2n} \|{\bf z}_i\|_2$
for a vector ${\bf z}= ({\bf z}_i)_{1\le i\le 2n}$ with entries ${\bf z}_i\in {\mathbb R}^N, 1\le i\le 2n$.

By    \eqref{P0.def} and \eqref{asde.def}, we have
\begin{equation}\label{QP.identity}
{\bf P} {\bf A}_{\rm sde}={\bf A}_{\rm sde}{\bf P}={\bf P}\ \ {\rm and}   \ \ {\bf P}^2= {\bf P}.
\end{equation}
This  together with
\eqref{zk.def2} implies that
\begin{equation}\label{tildezk.iteration}
\tilde {\bf z}(k+1) = ({\bf A}_{\rm sde}-{\bf P}) \tilde {\bf z}(k)-\alpha_k ({\bf A}_{\rm sde}-{\bf P}) {\bf h}(k).
\end{equation}
Applying  %\eqref{QP.identity} and
 \eqref{tildezk.iteration} repeatedly yields
\begin{equation} \label{tildezk.eq0}
\tilde {\bf z}(k)   =  ({\bf A}_{\rm sde}-{\bf P})^k \tilde {\bf z}(0)-\sum_{l=0}^{k-1} \alpha_l
({\bf A}_{\rm sde}-{\bf P})^{k-l} {\bf h}(l), %\nonumber\\
%& \hskip-0.08in = & \hskip-0.08in ({\bf A}_{\rm sde}^k-{\bf P}) \tilde {\bf z}(0)-\sum_{l=0}^{k-1} \alpha_l
%({\bf A}_{\rm sde}^{k-l}-{\bf P})
%{\bf h}(l),
 \ k\ge 1.
\end{equation}
Therefore, we have the following estimate for $\|\tilde {\bf z}(k)\|_{2, \infty}, k\ge 1$ in Proposition 1, see
 Appendix \ref{tildezk.thm.appendix} for a detailed proof.

\begin{prop}\label{tildezk.prop}
Let ${\bf A}_{\rm sde}$, $\lambda_2({\bf A}_{\rm sde})$ and ${\bf P}$ be   as  in \eqref{Afit.def},  \eqref{q.eigenvalues}  and \eqref{P0.def} respectively.
Assume that  the row stochastic  matrix
${\bf A}_{\rm sde}$ has simple eigenvalue one and all other eigenvalues contained in the open unit complex disk centered at the origin,
and that the local  objective functions $g_i, 1\le i\le n$,  satisfy \eqref{f.bounded}.
Then there exists a positive constant $C_1$ such that
\begin{eqnarray}\label{tildezk.prop.eq1}
 %& & \|{\bf z}(k)-{\bf P}{\bf z}(k) \|_{2, \infty} \nonumber\\
\|\tilde{\bf z}(k)\|_{2, \infty} & \hskip-0.08in \le &  \hskip-0.08in
    C_1 M
\sum_{l=0}^{k-1}  \Big(\frac{1+|\lambda_2({\bf A}_{\rm sde})|}{2}\Big)^{k-l} \alpha_l\nonumber\\
& &   \hskip-0.08in +
C_1 \Big(\frac{1+|\lambda_2({\bf A}_{\rm sde})|}{2}\Big)^k \|\tilde {\bf z}(0)\|_{2, \infty} \qquad
\end{eqnarray}
hold for all $k\ge 1$.
\end{prop}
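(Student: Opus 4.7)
The plan is to combine the explicit representation \eqref{tildezk.eq0} with a geometric decay bound on the powers $({\bf A}_{\rm sde}-{\bf P})^k$, together with the uniform bound on $\|{\bf h}(l)\|_{2,\infty}$ coming from the bounded-gradient hypothesis \eqref{f.bounded}.

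First I would establish the key spectral fact that the spectral radius of ${\bf A}_{\rm sde}-{\bf P}$ equals $|\lambda_2({\bf A}_{\rm sde})|$. By \eqref{QP.identity}, ${\bf P}$ is an idempotent that commutes with ${\bf A}_{\rm sde}$, and by \eqref{asde.def} its range is precisely the one-dimensional eigenspace associated with the simple eigenvalue $1$. Consequently, the spectrum of ${\bf A}_{\rm sde}-{\bf P}$ is obtained from that of ${\bf A}_{\rm sde}$ by replacing $1$ with $0$, leaving the eigenvalues $\lambda_2({\bf A}_{\rm sde}),\ldots,\lambda_{2n}({\bf A}_{\rm sde})$ intact; by hypothesis these all lie strictly inside the unit disk.

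Second, I would invoke a Gelfand-type decay estimate. Set
\[
\rho := \tfrac{1}{2}\bigl(1+|\lambda_2({\bf A}_{\rm sde})|\bigr)\in \bigl(|\lambda_2({\bf A}_{\rm sde})|,\,1\bigr).
\]
From the Jordan normal form of ${\bf A}_{\rm sde}-{\bf P}$, there exists a constant $C_0>0$ (depending only on ${\bf A}_{\rm sde}$) such that
\[
\bigl\|({\bf A}_{\rm sde}-{\bf P})^k\bigr\|_\infty \le C_0\,\rho^k,\quad k\ge 0,
\]
where $\|\cdot\|_\infty$ denotes the maximum absolute row-sum norm. The slack $\rho>|\lambda_2({\bf A}_{\rm sde})|$ is precisely what absorbs the polynomial prefactor $k^{d-1}$ that arises from any non-trivial Jordan blocks. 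To transfer this scalar bound to block vectors ${\bf v}=({\bf v}_i)_{1\le i\le 2n}$ with entries ${\bf v}_i\in{\mathbb R}^N$, I would use the elementary inequality
\[
\|{\bf M}{\bf v}\|_{2,\infty} = \max_{1\le i\le 2n}\Big\|\sum_{j=1}^{2n} m(i,j){\bf v}_j\Big\|_2 \le \|{\bf M}\|_\infty\,\|{\bf v}\|_{2,\infty},
\]
valid for any scalar matrix ${\bf M}=(m(i,j))_{1\le i,j\le 2n}$.

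Third, I would bound $\|{\bf h}(l)\|_{2,\infty}\le M$, which is immediate from \eqref{hi.def} and \eqref{f.bounded}, and then apply the block-vector bound with ${\bf M}=({\bf A}_{\rm sde}-{\bf P})^k$ and ${\bf M}=({\bf A}_{\rm sde}-{\bf P})^{k-l}$ respectively to the two terms on the right-hand side of \eqref{tildezk.eq0}. The triangle inequality then yields
\[
\|\tilde{\bf z}(k)\|_{2,\infty} \le C_0\rho^k\|\tilde{\bf z}(0)\|_{2,\infty} + C_0 M\sum_{l=0}^{k-1}\rho^{k-l}\alpha_l,
\]
which is the claimed estimate \eqref{tildezk.prop.eq1} with $C_1=C_0$. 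The main obstacle is the Jordan-form step: if ${\bf A}_{\rm sde}$ fails to be diagonalizable, the naive bound $\|({\bf A}_{\rm sde}-{\bf P})^k\|_\infty\lesssim k^{d-1}|\lambda_2({\bf A}_{\rm sde})|^k$ needs the deliberate inflation of the decay rate from $|\lambda_2({\bf A}_{\rm sde})|$ to $\rho$ so that $k^{d-1}(|\lambda_2({\bf A}_{\rm sde})|/\rho)^k$ stays bounded uniformly in $k$; verifying this, and the replacement of the simple eigenvalue $1$ by $0$ under ${\bf P}$ in the possibly non-diagonalizable case, is the only step that requires more than a line.
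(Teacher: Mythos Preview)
Your proposal is correct and follows essentially the same route as the paper: both arguments compute the spectrum of ${\bf A}_{\rm sde}-{\bf P}$ via the commuting projection ${\bf P}$ to conclude its spectral radius is $|\lambda_2({\bf A}_{\rm sde})|$, invoke a Gelfand-type bound to get $\|({\bf A}_{\rm sde}-{\bf P})^k\|_{{\mathcal B}^\infty}\le C_1\rho^k$ with $\rho=\tfrac12(1+|\lambda_2({\bf A}_{\rm sde})|)$, bound $\|{\bf h}(l)\|_{2,\infty}\le M$ from \eqref{f.bounded}, and plug into \eqref{tildezk.eq0}. Your write-up is in fact more explicit than the paper's about the Jordan-block absorption and the scalar-to-block transfer, but the underlying strategy is identical.
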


\vskip-0.06in

By \eqref{P0.def},  we can write
\begin{equation}\label{xypm.def}
{\bf P}{\bf z}(k)=
\bar {\bf x}(k){\bf 1}_{2n} \ {\rm for \ some} \ \bar{\bf x}(k)\in {\mathbb R}^N.
\end{equation}
Observe that
$\|\tilde{\bf z}(k)\|_{2, \infty}= \max_{1\le i\le n} \|{\bf x}_i(k)-\bar {\bf x}(k)\|_{2}, \ k\ge 1$.
Then by Proposition \ref{tildezk.prop} we obtain the following estimate about the consensus property of ${\bf x}_i(k)$ for different $1\le i\le n$:  %-{\bf x}_j(k)\|_{2}$
\begin{eqnarray}\label{consensus.newthm.eq1}
 %& & \|{\bf z}(k)-{\bf P}{\bf z}(k) \|_{2, \infty} \nonumber\\
& \hskip-0.08in  &  \hskip-0.08in \max_{1\le i\le n} \|{\bf x}_i(k)-\bar {\bf x}(k)\|_{2}
  \nonumber\\
  &\le & \hskip-0.08in  C_1 \Big(\frac{1+|\lambda_2({\bf A}_{\rm sde})|}{2}\Big)^k \max_{1\le i\le n} \|{\bf x}_i(0)-\bar {\bf x}(0)\|_{2}
  \nonumber\\
&  \hskip-0.08in & \hskip-0.08in    +  C_1 M
\sum_{l=0}^{k-1}  \Big(\frac{1+|\lambda_2({\bf A}_{\rm sde})|}{2}\Big)^{k-l} \alpha_l, \ \ k\ge 1.\qquad
\end{eqnarray}

%hold for all $k\ge 1$.

Set $\gamma=\frac{1+|\lambda_2({\bf A}_{\rm sde})|}{2}\in (1/2, 1)$. Applying \eqref{consensus.newthm.eq1}
to our illustrative example \eqref{stepsize.example} of step sizes $\alpha_k= (k+1)^{-\theta}, k\ge 0$ for some $1/2<\theta\le 1$,
% is the fixed support of the row of $ {\bf A}_{\rm sde} $ identified with node $ i $:
 %of step sizes,
we can find a positive constant $C$ such that
\begin{eqnarray}\label{example.estimate1}
\hskip-0.08in & \hskip-0.08in & \hskip-0.08in  \max_{1\le i\le n} \|{\bf x}_i(k)-\bar {\bf x}(k)\|_{2}\nonumber\\
\hskip-0.08in &\hskip-0.08in\le & \hskip-0.08in  C_1  \max_{1\le i\le n} \|{\bf x}_i(0)-\bar {\bf x}(0)\|_{2}  \gamma^k\nonumber\\
\hskip-0.08in &\hskip-0.08in & \hskip-0.08in
     +  C_1 M
\sum_{l=0}^{k-1}  \gamma^{k-l} (k-l+a)^\theta (k+a)^{-\theta}\nonumber\\
\hskip-0.08in &\hskip-0.08in\le & \hskip-0.08in
 C (k+a)^{-\theta}, \ \ k\ge 0,
\end{eqnarray}
where the first inequality follows from \eqref{consensus.newthm.eq1}
and
the observation that  $a\ge 1$ and
$(k+a)^\theta\le  (l+a)^{\theta}(k-l+a)^{\theta}, 0\le l\le k$,
and the second estimate holds by the boundedness of the sequence $\gamma^k (k+a)^\theta, k\ge 0$, and
the convergence of the series $\sum_{m=0}^{\infty}  \gamma^{m} (m+a)^\theta$.

We finish this section with the proof of Theorem~\ref{consensus.main.thm}.

\begin{proof}[Proof of Theorem \ref{consensus.main.thm}] %, where  $C_1$ is the positive constant   in Theorem .
 By (IV.4) and (IV.17), we have
\begin{eqnarray}\label{tildezk.est001}
 & & \Big(\sum_{k=0}^\infty
\max_{1\le i\le n} \|{\bf x}_i(k)-\bar {\bf x}(k)\|_{2}^2\Big)^{1/2}\nonumber\\
 & \hskip-0.08in \le & \hskip-0.08in    C_1 M
 \Big(\sum_{k=0}^\infty\Big(\sum_{l=0}^{k-1}  \Big(\frac{1+|\lambda_2({\bf A}_{\rm sde})|}{2}\Big)^{k-l} \alpha_l\Big)^{2}\Big)^{1/2}
 \nonumber\\
 & \hskip-0.08in & \hskip-0.08in + C_1  \Big(\sum_{k=0}^\infty  \Big(\frac{1+|\lambda_2({\bf A}_{\rm sde})|}{2}\Big)^{2k}\Big)^{1/2}\nonumber\\
 & \hskip-0.08in & \hskip-0.08in \qquad
 \times \max_{1\le i\le n} \|{\bf x}_i(0)-\bar {\bf x}(0)\|_{2}\nonumber\\
   &  \hskip-0.08in \le & \hskip-0.08in \frac{  2 C_1  M  }  { 1-|\lambda_2({\bf A}_{\rm sde})| }
\Big(\sum_{k=0}^\infty \alpha_k^2\Big)^{1/2}\nonumber\\
&  \hskip-0.08in & \hskip-0.08in  +\frac{ 2 C_1}{\sqrt{1-|\lambda_2({\bf A}_{\rm sde})|}}  \max_{1\le i\le n} \|{\bf x}_i(0)-\bar {\bf x}(0)\|_{2},
\end{eqnarray}
where  the first inequality is obtained from (IV.17)
and
the triangle inequality for the space of square-summable sequence,
$$\left\{\Big(\frac{1+|\lambda_2({\bf A}_{\rm sde})|}{2}\Big)^{k}\right\}_{k=0}^\infty\ \ {\rm and} \ \
\left\{\sum_{l=0}^{k-1}  \Big(\frac{1+|\lambda_2({\bf A}_{\rm sde})|}{2}\Big)^{k-l} \alpha_l\right\}_{k=0}^\infty, $$
and the second estimate follows from
\begin{eqnarray*}
& \hskip-0.08in & \hskip-0.08in
\sum_{k=0}^\infty  \Big(\frac{1+|\lambda_2({\bf A}_{\rm sde})|}{2}\Big)^{2k}= \left(1-  \Big(\frac{1+|\lambda_2({\bf A}_{\rm sde})|}{2}\Big)^{2}\right)^{-1}\\
  & \hskip-0.08in \le & \hskip-0.08in \left(1-  \Big(\frac{1+|\lambda_2({\bf A}_{\rm sde})|}{2}\Big)\right)^{-1}
  \left(1+ \Big(\frac{1+|\lambda_2({\bf A}_{\rm sde})|}{2}\Big)^{2}\right)^{-1}\nonumber\\
  &  \hskip-0.08in \le  &   \hskip-0.08in \frac{4}{1- |\lambda_2({\bf A}_{\rm sde})|},
\end{eqnarray*}
and
\begin{eqnarray*}
 & \hskip-0.08in & \hskip-0.08in \sum_{k=0}^\infty\Big(\sum_{l=0}^{k-1}  \Big(\frac{1+|\lambda_2({\bf A}_{\rm sde})|}{2}\Big)^{k-l} \alpha_l\Big)^{2}\\
 & \hskip-0.08in \le & \hskip-0.08in
 \sum_{k=0}^\infty\Big(\sum_{l=0}^{k-1}  \Big(\frac{1+|\lambda_2({\bf A}_{\rm sde})|}{2}\Big)^{k-l} \alpha_l^2\Big)\\
 & \hskip-0.08in & \hskip-0.08in \times
 \Big(\sum_{l'=0}^{k-1}  \Big(\frac{1+|\lambda_2({\bf A}_{\rm sde})|}{2}\Big)^{k-l'} \Big)\\
 & \hskip-0.08in \le &  \hskip-0.08in \frac{2}{1-|\lambda_2({\bf A}_{\rm sde})|}  \sum_{k=0}^\infty\Big(\sum_{l=0}^{k-1}  \Big(\frac{1+|\lambda_2({\bf A}_{\rm sde})|}{2}\Big)^{k-l} \alpha_l^2\Big)\\
 & \hskip-0.08in \le &  \hskip-0.08in  \frac{2}{1-|\lambda_2({\bf A}_{\rm sde})|} \sum_{l=0}^\infty \sum_{k=l+1}^\infty \Big(\frac{1+|\lambda_2({\bf A}_{\rm sde})|}{2}\Big)^{k-l} \alpha_l^2\\
 & \hskip-0.08in \le & \hskip-0.08in  \left(\frac{2}{1-|\lambda_2({\bf A}_{\rm sde})|}\right)^2 \sum_{l=0}^\infty \alpha_l^2.
\end{eqnarray*}
% By \eqref{q.eigenvalues} and \eqref{tildezk.prop.eq1}, %we have
%\begin{eqnarray}\label{tildezk.est001}
% & & \Big(\sum_{k=0}^\infty
%\max_{1\le i\le n} \|{\bf x}_i(k)-\bar {\bf x}(k)\|_{2}^2\Big)^{1/2}\nonumber\\
% & \le &    C_1 M
% \Big(\sum_{k=0}^\infty\Big(\sum_{l=0}^{k-1}  \Big(\frac{1+|\lambda_2({\bf A}_{\rm sde})|}{2}\Big)^{k-l} \alpha_l\Big)^{2}\Big)^{1/2}
% \nonumber\\
% & & + C_1  \Big(\sum_{k=0}^\infty  \Big(\frac{1+|\lambda_2({\bf A}_{\rm sde})|}{2}\Big)^{2k}\Big)^{1/2}\nonumber\\
% & & \qquad
% \times \max_{1\le i\le n} \|{\bf x}_i(0)-\bar {\bf x}(0)\|_{2}\nonumber\\
%  & \le & \frac{  2 C_1  M  }  { 1-|\lambda_2({\bf A}_{\rm sde})| }
%\Big(\sum_{k=0}^\infty \alpha_k^2\Big)^{1/2}\nonumber\\
%& & +\frac{ 2 C_1}{\sqrt{1-|\lambda_2({\bf A}_{\rm sde})|}}  \max_{1\le i\le n} \|{\bf x}_i(0)-\bar {\bf x}(0)\|_{2},
%\end{eqnarray}
%where the first inequality follows from \eqref{tildezk.prop.eq1} and  the triangle inequality to the $\ell^2$ space, and the second inequality follows from the matrix
%norm estimate for the  low-triangular matrix $(\alpha(k,l))_{k, l\ge 0}$ with $\alpha(k,l)=(\frac{1+|\lambda_2({\bf A}_{\rm sde})|}{2})^{k-l}$ for $0\le l\le k$
%and $\alpha (k,l)=0$ for $l>k\ge 0$.
Therefore  the limit in  \eqref{consensus.newlimit} follows as  the sequence
$\{\alpha_k\}_{k=0}^\infty$  is square summable by \eqref{alphak.assumption}.
%This completes the proof of the consensus property .

 From \eqref{fgi.eq} and \eqref{weight.def} it follows that
$$\|\nabla f({\bf x})\|_2\le W^{-1}\sup_{1\le j\le n} \|\nabla g_j({\bf x})\|_2\le  M W^{-1},$$
where  $W=\max_{1\le i\le n} w_i$. This together with Proposition  \ref{tildezk.prop} implies that
\begin{eqnarray}\label{fconvergence.1}
\hskip-0.08in & &  |f({\bf x}_i(k))-f(\bar {\bf x}(k))|\nonumber\\
&\hskip-0.15in  \le & \hskip-0.13in \sup_{0\le t\le 1}
\|\nabla f( t{\bf x}_i(k)+(1-t)\bar {\bf x}(k))\|_2 \|{\bf x}_i(k)-\bar {\bf x}(k)\|_2\nonumber\\
  &\hskip-0.15in  \le & \hskip-0.13in    C_1 M^{2}  W^{-1}
\sum_{l=0}^{k-1}  \Big(\frac{1+|\lambda_2({\bf A}_{\rm sde})|}{2}\Big)^{k-l} \alpha_l\nonumber\\
& \hskip-0.15in & \hskip-0.13in    +
C_1 M W^{-1} \Big(\frac{1+|\lambda_2({\bf A}_{\rm sde})|}{2}\Big)^k  \max_{1\le i\le n} \|{\bf x}_i(0)-\bar {\bf x}(0)\|_{2}\nonumber\\
&  \hskip-0.15in  \to &   \hskip-0.13in  0 \ \ {\rm as} \  k\to \infty
\end{eqnarray}
for all $1\le i\le n$.
Then   the convergence
 \eqref{consensus.newlimit2} of the difference
$f({\bf x}_i(k)), k\ge 1$, between different $1\le  i\le n$
 follows. \end{proof}

\section{Convergence  property of the Code-based Distributed Gradient Descent Algorithm}\label{mainalg2.section}

In this  section, we consider  the convergence of  ${\bf x}_i(k),\  k\ge  1$, in the  CoDGraD  algorithm  \eqref{mainalgorithm},
\begin{equation}\label{consensus.limit}
\lim_{k\to \infty}
{\bf x}_i(k)=\bar {\bf x}, \ 1\le i\le n,\end{equation}
  where $\bar {\bf x}$ is the solution of the optimization problem \eqref{optimizationproblem}.
By \eqref{consensus.newlimit}, \eqref{xypm.def} and  \eqref{tildezk.est001},  it suffices to show that
$\bar {\bf x}(k), k\ge 1$, converges to $\bar {\bf x}$.

\begin{thm}\label{mainconvergence.thm}
  Assume that  the row stochastic  matrix ${\bf A}_{\rm sde}$
   in \eqref{Afit.def}
   %, $\lambda_2({\bf A}_{\rm sde})$ and ${\bf P}$ be   as  ,  \eqref{q.eigenvalues}  and \eqref{P0.def} respectively.
 has simple eigenvalue one and  all other eigenvalues contained in the open unit complex disk centered at the origin, the sequence $\{\alpha_k\}_{k=0}^\infty$ satisfies
\eqref{alphak.assumption},
 the local objective functions $g_j, 1\le j\le n$, satisfy  \eqref{f.bounded}  and
\vskip-0.16in
\begin{equation}\label{Lipschitz.con}
\|\nabla g_i(x)-\nabla g_i(y)\|_2\le L \|x-y\|_2,  \ x, y\in {\mathbb R}^N, 1\le i\le n,
\end{equation}
for some positive constant $L$, and the global objective function $f$  is strongly convex in the sense that
 \begin{equation}\label{accretive.con}
\langle \nabla f({\bf x})-\nabla f({\bf y}), {\bf x}-{\bf y}\rangle_N \ge A \|{\bf x}-{\bf y}\|_2^2
\end{equation}
for all ${\bf x}, {\bf y}\in B(\bar{\bf x}, C_2)$,  where
$\langle \cdot, \cdot\rangle_N$ is  the inner product on ${\mathbb R}^N$, $B(\bar{\bf x}, C_2)$ is the ball with center $\bar{\bf x}$ and radius $C_2$, and
 \begin{eqnarray*} C_2:& \hskip-0.08in = & \hskip-0.08in \exp\Big(\sum_{j=0}^\infty \alpha_j^2\Big)\Big\{  \|\bar{\bf x}(0)-\bar {\bf x}\|_2^2\nonumber\\
 & \hskip-0.08in  & \hskip-0.08in \quad  +
   \frac{ 8 C_1^2L^2 }{(1-|\lambda_2({\bf A}_{\rm sde}))|)^2 }
 %\exp\Big(\sum_{j=0}^\infty \alpha_j^2\Big)
 \max_{1\le i\le n} \|{\bf x}_i(0)-\bar {\bf x}(0)\|_{2}^2
  %\|\tilde {\bf z}(0)\|_{2, \infty}^2
  \nonumber\\
 & \hskip-0.08in  & \hskip-0.08in \quad +   \frac{ M^2 (1+8 C_1^2) }{(1-|\lambda_2({\bf A}_{\rm sde})|)^2 }
  \Big(\sum_{j=0}^\infty \alpha_j^2\Big) %\exp\Big(\sum_{j=0}^\infty \alpha_j^2\Big)
  \Big\}.
 \end{eqnarray*}
  Then  $\bar{\bf x}(k), k\ge 1$, in \eqref{xypm.def} converges to the solution $\bar{\bf x}$ of
  the optimization problem \eqref{optimizationproblem},
\begin{equation}\label{barx.limit}
\lim_{k\to \infty} \bar {\bf x}(k)=\bar {\bf x}.
\end{equation}
\end{thm}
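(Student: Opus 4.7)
The plan is to derive a scalar recursion for $\bar{\bf x}(k)$ by projecting the matrix iteration \eqref{zk.def2} onto the invariant left eigenspace of ${\bf A}_{\rm sde}$, and then to treat that recursion as a perturbed gradient descent on $f$. First I would apply ${\bf a}_{\rm sde}^T$ to both sides of \eqref{zk.def2}, use the invariance ${\bf a}_{\rm sde}^T{\bf A}_{\rm sde}={\bf a}_{\rm sde}^T$ from \eqref{asde.def}, the identification ${\bf a}_{\rm sde}^T{\bf z}(k)=\bar{\bf x}(k)$ implicit in \eqref{xypm.def}, ${\bf z}_i(k)={\bf z}_{i+n}(k)={\bf x}_i(k)$ from \eqref{zi.def001}, and the sign convention \eqref{hi.def}. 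Writing ${\bf a}_{\rm sde}=(u_1,\ldots,u_n,v_1,\ldots,v_n)^T$, this yields the scalar recursion
\begin{equation*}
\bar{\bf x}(k+1)=\bar{\bf x}(k)-\alpha_k\sum_{i=1}^n(u_i-v_i)\,\nabla g_i({\bf x}_i(k)),\qquad k\ge 0.
\end{equation*}

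To recognize the right-hand side as a multiple of $\nabla f$, I would compare the two block rows of ${\bf a}_{\rm sde}^T{\bf A}_{\rm sde}={\bf a}_{\rm sde}^T$: this shows that ${\bf u}+{\bf v}$ is a left Perron eigenvector of $|\tilde{\bf A}|=\tilde{\bf A}_++\tilde{\bf A}_-$ and that $u_j-v_j=\sum_i(u_i+v_i)\tilde a(i,j)$ for $1\le j\le n$. Combined with the decoding identity $\sum_j\tilde a(i,j)\,\nabla g_j({\bf x})=w_i\,\nabla f({\bf x})$ (which follows from \eqref{fgi.eq} and \eqref{normalizedfgi.eq}), this gives
\begin{equation*}
\sum_{j=1}^n(u_j-v_j)\,\nabla g_j({\bf x})=\kappa\,\nabla f({\bf x}),\qquad \kappa:=\sum_{i=1}^n(u_i+v_i)w_i>0.
\end{equation*}
Using $\nabla f(\bar{\bf x})=0$ and introducing ${\bf e}(k):=\sum_i(u_i-v_i)\big[\nabla g_i({\bf x}_i(k))-\nabla g_i(\bar{\bf x}(k))\big]$, the recursion becomes
\begin{equation*}
\bar{\bf x}(k+1)-\bar{\bf x}=\bar{\bf x}(k)-\bar{\bf x}-\alpha_k\kappa\,\nabla f(\bar{\bf x}(k))-\alpha_k{\bf e}(k),
\end{equation*}
where the Lipschitz hypothesis \eqref{Lipschitz.con} gives $\|{\bf e}(k)\|_2\le L\sum_i|u_i-v_i|\,\|{\bf x}_i(k)-\bar{\bf x}(k)\|_2$, a quantity whose square is summable in $k$ by the consensus estimate \eqref{tildezk.est001} from the proof of Theorem \ref{consensus.main.thm}.

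I would then take squared norms, apply strong convexity \eqref{accretive.con} to $\langle\bar{\bf x}(k)-\bar{\bf x},\nabla f(\bar{\bf x}(k))\rangle_N$, bound the cross term $-2\alpha_k\langle\bar{\bf x}(k)-\bar{\bf x},{\bf e}(k)\rangle_N$ via Young's inequality, and absorb $\alpha_k^2\kappa^2\|\nabla f(\bar{\bf x}(k))\|_2^2$ via $\|\nabla f\|_2\le MW^{-1}$, arriving at a Robbins--Siegmund style inequality
\begin{equation*}
\|\bar{\bf x}(k+1)-\bar{\bf x}\|_2^2\le(1-A\kappa\alpha_k)\|\bar{\bf x}(k)-\bar{\bf x}\|_2^2+C\alpha_k\|{\bf e}(k)\|_2^2+C'\alpha_k^2M^2
\end{equation*}
for suitable constants $C,C'$. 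The hard part will be the a priori verification that $\bar{\bf x}(k)\in B(\bar{\bf x},C_2)$ for every $k$, since strong convexity is only available there; I would establish this inductively by running the non-contracted version of the above inequality, combining $\prod_j(1+c\alpha_j^2)\le\exp(c\sum_j\alpha_j^2)$ with the square-summability in \eqref{alphak.assumption} and the consensus bound from Theorem \ref{consensus.main.thm} — the explicit expression for $C_2$ in the statement is precisely the envelope produced by this bookkeeping. Once the ball condition is secured, the Robbins--Siegmund lemma applied to the displayed inequality — whose noise terms $\alpha_k\|{\bf e}(k)\|_2^2$ and $\alpha_k^2M^2$ are both summable while $\sum_k\alpha_k=\infty$ — delivers $\|\bar{\bf x}(k)-\bar{\bf x}\|_2\to 0$, which is \eqref{barx.limit}.
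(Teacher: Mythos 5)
Your proposal is correct and follows essentially the same route as the paper: projecting \eqref{zk.def2} onto the left Perron eigenvector to obtain the inexact gradient iteration \eqref{convergence.thm.pf.eq2}, identifying the scale factor (your $\kappa$ is exactly the paper's $\tilde w$ from Lemma \ref{weigh.lem} and Proposition \ref{inexactgradient.prop}), securing the a priori bound $\bar{\bf x}(k)\in B(\bar{\bf x},C_2)$ via the convexity-only recursion with the $\exp(\sum_j\alpha_j^2)$ envelope (Proposition \ref{preconvergence.prop}), and then contracting with strong convexity (Proposition \ref{convergence.prop}). The only cosmetic difference is that you invoke the Robbins--Siegmund lemma at the end, where the paper unrolls the recursion explicitly and concludes by dominated convergence.
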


Combining  Theorems \ref{consensus.main.thm}
and \ref{mainconvergence.thm}, we have the following result on the convergence of the proposed CoDGraD algorithm.

\begin{thm}\label{overall.thm}
Let
the decoding matrix ${\bf A}_{\rm sde}$
and the objective function $f$ be as in Theorem \ref{mainconvergence.thm}, and step sizes
 $\{\alpha_k\}_{k=0}^\infty $ in the CoDGraD algorithm satisfy
\eqref{alphak.assumption}.
Then  the sequences $\{{\bf x}_i(k)\}_{k=1}^\infty, 1\le i\le n$
in the CoDGraD algorithm  converge to the optimal point $\bar {\bf x}$, i.e.,
$\lim_{k \rightarrow \infty}{\bf x}_{i}(k) = \bar{\bf x}, \ 1\le i\le n$.
\end{thm}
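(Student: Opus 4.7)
The plan is to combine the two theorems directly, since Theorem~\ref{overall.thm} inherits its hypotheses from Theorem~\ref{mainconvergence.thm}, which in turn subsumes all the assumptions needed for Theorem~\ref{consensus.main.thm} (the eigenvalue condition on ${\bf A}_{\rm sde}$, the step-size conditions \eqref{alphak.assumption}, and the bounded-gradient assumption \eqref{f.bounded} on the coded local objectives $g_i$, which is required for \eqref{Lipschitz.con} to be meaningful together with Lipschitz continuity). So both theorems apply simultaneously.

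First, I would invoke Theorem~\ref{mainconvergence.thm} to conclude that the averaged iterate $\bar{\bf x}(k)$ defined in \eqref{xypm.def} satisfies $\lim_{k\to\infty}\bar{\bf x}(k)=\bar{\bf x}$, where $\bar{\bf x}$ is the solution of the optimization problem \eqref{optimizationproblem}. Next, from Theorem~\ref{consensus.main.thm} — or more precisely from the quantitative estimate \eqref{consensus.newthm.eq1} for $\max_{1\le i\le n}\|{\bf x}_i(k)-\bar{\bf x}(k)\|_2$ obtained via Proposition~\ref{tildezk.prop} — I would conclude that
\begin{equation*}
\lim_{k\to\infty}\max_{1\le i\le n}\|{\bf x}_i(k)-\bar{\bf x}(k)\|_2 = 0,
\end{equation*}
since the right-hand side of \eqref{consensus.newthm.eq1} is the sum of a geometrically decaying term and a convolution of geometric and square-summable terms, both of which vanish (as already shown in the derivation of \eqref{tildezk.est001}).

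Finally, I would combine the two limits by the triangle inequality: for each $1\le i\le n$,
\begin{equation*}
\|{\bf x}_i(k)-\bar{\bf x}\|_2 \le \|{\bf x}_i(k)-\bar{\bf x}(k)\|_2 + \|\bar{\bf x}(k)-\bar{\bf x}\|_2 \longrightarrow 0
\end{equation*}
as $k\to\infty$, which is exactly the claim $\lim_{k\to\infty}{\bf x}_i(k)=\bar{\bf x}$ for every $1\le i\le n$.

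There is essentially no hard step here; the whole argument is a bookkeeping exercise that glues together the consensus analysis of Section~\ref{mainalg1.section} with the average-convergence analysis of Section~\ref{mainalg2.section}. The only point that merits a sentence of care is verifying that the hypothesis set of Theorem~\ref{overall.thm} indeed implies the hypotheses of Theorem~\ref{consensus.main.thm}, which follows because strong convexity of $f$ together with \eqref{alphak.assumption}, \eqref{f.bounded}, \eqref{Lipschitz.con}, and the spectral condition on ${\bf A}_{\rm sde}$ are precisely the assumptions imported through the phrase "as in Theorem~\ref{mainconvergence.thm}."
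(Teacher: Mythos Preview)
Your proposal is correct and matches the paper's own treatment: the paper simply introduces Theorem~\ref{overall.thm} with the sentence ``Combining Theorems~\ref{consensus.main.thm} and~\ref{mainconvergence.thm}, we have the following result,'' and just before Theorem~\ref{mainconvergence.thm} it notes that, by \eqref{consensus.newlimit}, \eqref{xypm.def} and \eqref{tildezk.est001}, the convergence of ${\bf x}_i(k)$ to $\bar{\bf x}$ reduces to showing $\bar{\bf x}(k)\to\bar{\bf x}$---exactly your triangle-inequality decomposition.
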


To prove Theorem \ref{mainconvergence.thm}, we need a technical lemma, which follows the probability  property for the vector ${\bf a}_{\rm sde}$ in \eqref{asde.def}. For the completeness of this paper, we include a detailed proof in Appendix \ref{weigh.appendix}.

\begin{lem}\label{weigh.lem}  Let ${\bf a}_{\rm sde}$ and $w_i, 1\le i\le n$, be as in \eqref{asde.def} and \eqref{weight.def}
respectively.
Set
\vskip-0.16in
\begin{equation}\label{weightvector.def}
{\bf w}=(w_1, \ldots, w_n, w_1, \ldots, w_n)^T
\ \ {\rm and} \ \
 \tilde w={\bf a}_{\rm sde}^T    {\bf w}.
 \end{equation}
 \vskip-0.06in
 \noindent  Then
%$  0<\min_{1\le i\le n} w_i\le \tilde w\le  \max_{1\le i\le n} w_i$.
%\vskip-0.16in
 \begin{equation}\label{tildew.estimate}
 0<\min_{1\le i\le n} w_i\le \tilde w\le  \max_{1\le i\le n} w_i.
 \end{equation}
%\vskip-0.06in
\end{lem}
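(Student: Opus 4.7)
The statement $\min_i w_i \le \tilde w \le \max_i w_i$ is essentially the assertion that $\tilde w$ is a convex combination of the weights $w_1,\dots,w_n$, so the plan is to verify the two ingredients needed for that conclusion: non-negativity and normalization of ${\bf a}_{\rm sde}$, and then identify the entries of ${\bf w}$ as precisely the $w_i$'s (each appearing twice).

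First, I would observe that by construction ${\bf A}_{\rm sde}$ is a genuine row stochastic matrix with non-negative entries: its blocks $\tilde{\bf A}_\pm$ are elementwise non-negative (they are the positive/negative parts of $\tilde a(i,j)$), and each of its rows sums to $w_i\sum_{j\in\Gamma_i}|a(i,j)|=1$ by the definition \eqref{weight.def} of $w_i$. Combined with the hypothesis (inherited from the standing assumptions of the paper) that ${\bf A}_{\rm sde}$ has simple eigenvalue one, the Perron--Frobenius theorem guarantees that the left eigenvector ${\bf a}_{\rm sde}$ characterized by \eqref{asde.def} has non-negative entries. Together with the normalization ${\bf a}_{\rm sde}^T{\bf 1}_{2n}=1$, this means ${\bf a}_{\rm sde}$ is a probability vector.

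Writing ${\bf a}_{\rm sde}=(a_1,\dots,a_{2n})^T$ with $a_i\ge 0$ and $\sum_{i=1}^{2n}a_i=1$, and recalling the definition ${\bf w}=(w_1,\dots,w_n,w_1,\dots,w_n)^T$ from \eqref{weightvector.def}, I would expand
\[
\tilde w={\bf a}_{\rm sde}^T{\bf w}=\sum_{i=1}^{n}(a_i+a_{i+n})w_i.
\]
Since the coefficients $a_i+a_{i+n}$ are non-negative and sum to $\sum_{i=1}^{2n}a_i=1$, this is a convex combination of $w_1,\dots,w_n$, giving $\min_{i}w_i\le \tilde w\le \max_{i}w_i$ immediately. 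Finally, $\min_i w_i>0$ because each weight $w_i=(\sum_{j\in\Gamma_i}|a(i,j)|)^{-1}$ is a reciprocal of a strictly positive finite sum (the index set $\Gamma_i$ is nonempty by \eqref{gammai.def} and the definition of active nodes).

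There is no genuine obstacle here; the only subtle point is justifying that ${\bf a}_{\rm sde}$ is coordinatewise non-negative, which I would handle by explicitly invoking Perron--Frobenius applied to the row stochastic matrix ${\bf A}_{\rm sde}$ under the simple-eigenvalue-one hypothesis. Once that is in hand the rest is a one-line convex combination argument.
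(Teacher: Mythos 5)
Your proof is correct, and it reaches the same two-step structure as the paper's (establish that ${\bf a}_{\rm sde}$ is a probability vector, then read off $\tilde w$ as a convex combination of the $w_i$), but it justifies the crucial non-negativity of ${\bf a}_{\rm sde}$ by a different route. The paper does not invoke Perron--Frobenius here; instead it uses the identity $({\bf A}_{\rm sde}-{\bf P})^k={\bf A}_{\rm sde}^k-{\bf P}$ from \eqref{QP.identity} together with the geometric decay bound \eqref{Asde-pk} to conclude that ${\bf A}_{\rm sde}^k\to{\bf P}={\bf 1}_{2n}{\bf a}_{\rm sde}^T$, and then observes that each power ${\bf A}_{\rm sde}^k$ has non-negative entries, so the limit ${\bf P}$, and hence ${\bf a}_{\rm sde}$, is entrywise non-negative. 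That argument reuses machinery already set up for Proposition \ref{tildezk.prop} and leans on the full spectral assumption (simple eigenvalue one \emph{and} all other eigenvalues strictly inside the unit disk) to get convergence of the powers. Your appeal to the Perron--Frobenius theorem for non-negative matrices needs only non-negativity of ${\bf A}_{\rm sde}$ plus simplicity of the eigenvalue one to pin down the normalized left eigenvector as a probability vector, so it is marginally more general and more self-contained as a statement about ${\bf a}_{\rm sde}$ alone, at the cost of importing an external theorem rather than the paper's already-proved power estimate. Your explicit expansion $\tilde w=\sum_{i=1}^{n}(a_i+a_{i+n})w_i$ and the remark that $w_i>0$ because $\Gamma_i$ is nonempty are both correct and in fact spell out details the paper's one-line conclusion leaves implicit.
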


By \eqref{P0.def} and \eqref{xypm.def}, we have
\begin{equation}\label{convergence.thm.pf.eq1}
\bar {\bf x}(k) = % {\bf a}_{\rm sde}^T {\bf P}{\bf z}(k)=
{\bf a}_{\rm sde}^T {\bf z}(k).
\end{equation}
This together with  \eqref{zk.def2}  leads to the following iterative algorithm for $\bar {\bf x}(k), k\ge 0$:
\begin{eqnarray}  \label{convergence.thm.pf.eq2}
\bar {\bf x}(k+1)   =  \bar {\bf x}(k) -\alpha_k {\bf a}_{\rm sde}^T   {\bf h}(k), \ k\ge 0.
\end{eqnarray}
For local objective functions $g_i, 1\le i\le n$,  with Lipschitz gradients  \eqref{Lipschitz.con},  we observe that
 ${\bf a}_{\rm sde}^T   {\bf h}(k)$ is an inexact estimate of the  scaled global gradient $\tilde w\nabla f(\bar{\bf x}(k))$, see  Appendix \ref{inexactgradient.thm.appendix} for a detailed proof.

\begin{prop}\label{inexactgradient.prop}
Let ${\bf A}_{\rm sde}$, $\lambda_2({\bf A}_{\rm sde})$, ${\bf P}$ and $\tilde w$ be   as  in \eqref{Afit.def},  \eqref{q.eigenvalues},
 \eqref{P0.def} and \eqref {weightvector.def} respectively.
Assume that the row stochastic  matrix
${\bf A}_{\rm sde}$ has simple eigenvalue one and all other eigenvalues contained in the open unit disk centered at the origin,
and the local objective functions $g_j,1\le j\le n$, satisfy \eqref{Lipschitz.con}.  Then
\begin{equation} \label{reminder.estimate}
\|{\bf a}_{\rm sde}^T   {\bf h}(k)- \tilde w \nabla f(\bar{\bf x}(k))\|_2
  \le   L \|{\bf a}_{\rm sde}\|_1  \|\tilde {\bf z}(k) \|_{2, \infty}.
\end{equation}
\end{prop}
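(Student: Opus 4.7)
The plan is to exploit the block structure of ${\bf A}_{\rm sde}$ in \eqref{Afit.def} together with the gradient-coding identity implicit in \eqref{fgi.eq} and \eqref{normalizedfgi.eq}. Write ${\bf a}_{\rm sde}=({\bf a}^{(1)};{\bf a}^{(2)})$ with two blocks of length $n$. Since \eqref{zi.def001} identifies ${\bf z}_i(k)={\bf z}_{i+n}(k)={\bf x}_i(k)$ and \eqref{hi.def} flips the sign on the second block, summation splits as
\begin{equation*}
{\bf a}_{\rm sde}^T {\bf h}(k)=\sum_{i=1}^{n}\bigl(a^{(1)}_i-a^{(2)}_i\bigr)\nabla g_i({\bf x}_i(k)),
\end{equation*}
while \eqref{weightvector.def} gives $\tilde w=\sum_{i=1}^n (a^{(1)}_i+a^{(2)}_i)w_i$. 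First I would convert the asymmetric combination ${\bf a}^{(1)}-{\bf a}^{(2)}$ into the symmetric one ${\bf a}^{(1)}+{\bf a}^{(2)}$, so that the bookkeeping lines up with $w_i$.

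To do this, I would expand ${\bf a}_{\rm sde}^T{\bf A}_{\rm sde}={\bf a}_{\rm sde}^T$ from \eqref{asde.def} by blocks, which yields $({\bf a}^{(1)}+{\bf a}^{(2)})^T\tilde{\bf A}_{+}=({\bf a}^{(1)})^T$ and $({\bf a}^{(1)}+{\bf a}^{(2)})^T\tilde{\bf A}_{-}=({\bf a}^{(2)})^T$. Subtracting and using $\tilde{\bf A}=\tilde{\bf A}_{+}-\tilde{\bf A}_{-}$ produces the key algebraic identity $({\bf a}^{(1)}-{\bf a}^{(2)})^T=({\bf a}^{(1)}+{\bf a}^{(2)})^T\tilde{\bf A}$. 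Substituting and swapping the order of summation,
\begin{equation*}
{\bf a}_{\rm sde}^T{\bf h}(k)=\sum_{i=1}^n\bigl(a^{(1)}_i+a^{(2)}_i\bigr)\sum_{j=1}^n\tilde a(i,j)\,\nabla g_j({\bf x}_j(k)).
\end{equation*}

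Next, differentiating \eqref{fgi.eq} and using \eqref{normalizedfgi.eq} gives the gradient-coding identity $\sum_{j=1}^{n}\tilde a(i,j)\nabla g_j({\bf y})=w_i\nabla f({\bf y})$ for every ${\bf y}\in{\mathbb R}^N$ and every $1\le i\le n$. Evaluating this at ${\bf y}=\bar{\bf x}(k)$ and weighting by $a^{(1)}_i+a^{(2)}_i$ recovers $\tilde w\nabla f(\bar{\bf x}(k))$ in exactly the form above. Subtracting, the remainder is a nested sum of differences $\nabla g_j({\bf x}_j(k))-\nabla g_j(\bar{\bf x}(k))$, which I would bound termwise via the Lipschitz hypothesis \eqref{Lipschitz.con} and the observation from \eqref{tildezk.def} and \eqref{xypm.def} that ${\bf x}_j(k)-\bar{\bf x}(k)=\tilde{\bf z}_j(k)$, hence $\|{\bf x}_j(k)-\bar{\bf x}(k)\|_2\le\|\tilde{\bf z}(k)\|_{2,\infty}$. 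Pulling this bound out and using $\sum_{j=1}^n|\tilde a(i,j)|=1$ (the row-stochasticity of $|\tilde{\bf A}|$ from \eqref{weight.def} and \eqref{normalizedfgi.eq}) gives the factor $L\|\tilde{\bf z}(k)\|_{2,\infty}\sum_{i=1}^n(a^{(1)}_i+a^{(2)}_i)=L\|{\bf a}_{\rm sde}\|_1\|\tilde{\bf z}(k)\|_{2,\infty}$, as claimed.

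The main obstacle is the sign bookkeeping: matching the $+/-$ split produced by ${\bf h}(k)$ on the upper versus lower block with the $+/-$ split defining $\tilde{\bf A}_{+}$ and $\tilde{\bf A}_{-}$, and verifying that the stationarity equation collapses the difference ${\bf a}^{(1)}-{\bf a}^{(2)}$ to $\tilde{\bf A}^T({\bf a}^{(1)}+{\bf a}^{(2)})$. Once this identity is established, the rest is a routine triangle-inequality estimate leveraging row-stochasticity of $|\tilde{\bf A}|$, and no information about the second eigenvalue $\lambda_2({\bf A}_{\rm sde})$ is needed for this particular proposition.
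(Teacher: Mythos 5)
Your argument is correct and is essentially the paper's own proof written out blockwise: the paper introduces the auxiliary vector $\tilde {\bf h}(k)$ (all gradients evaluated at $\bar{\bf x}(k)$), bounds $\|{\bf a}_{\rm sde}^T({\bf h}(k)-\tilde{\bf h}(k))\|_2\le L\|{\bf a}_{\rm sde}\|_1\|\tilde{\bf z}(k)\|_{2,\infty}$ by the Lipschitz hypothesis, and computes ${\bf a}_{\rm sde}^T\tilde{\bf h}(k)={\bf a}_{\rm sde}^T{\bf A}_{\rm sde}\tilde{\bf h}(k)=\tilde w\nabla f(\bar{\bf x}(k))$ using exactly the two ingredients you isolate, namely the stationarity relation \eqref{asde.def} and the decoding identity ${\bf A}_{\rm sde}\tilde{\bf h}(k)=\nabla f(\bar{\bf x}(k))\,{\bf w}$ (your row version $\sum_j\tilde a(i,j)\nabla g_j=w_i\nabla f$). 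Your blockwise identity $({\bf a}^{(1)}-{\bf a}^{(2)})^T=({\bf a}^{(1)}+{\bf a}^{(2)})^T\tilde{\bf A}$ is just the componentwise form of that same computation, and your final estimate via row-stochasticity of $|\tilde{\bf A}|$ yields the identical constant.
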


By Proposition \ref{inexactgradient.prop}, the iterative algorithm \eqref{convergence.thm.pf.eq2}
can be considered as  the  gradient descent algorithm
\eqref{classicalgradient} with inexact gradient update.
Then following a standard argument, we have the boundedness of
 $\bar {\bf x}(k), k\ge 1$,  when the objective function $f$ is convex, see Appendix \ref{preconvergence.thm.appendix} for a detailed proof.

 \begin{prop}\label{preconvergence.prop}
 Let  the matrix ${\bf A}_{\rm sde}$,   the sequence $\{\alpha_k\}_{k=0}^\infty$
 and the local objective functions $g_j, 1\le j\le n$, be as in Theorem \ref{mainconvergence.thm}.
 If the global objective function $f$ is convex, then  %there exist positive constants $C_2$ and $C_3$, depending on  such that
 \begin{equation}\label{preconvergence.thm.eq1}\|\bar {\bf x}(k)-\bar {\bf x}\|_2^{2} \le C_2 \ \ {\rm for \ all} \ \  k\ge 0,
 \end{equation}
 where  $C_2$ is the constant in Theorem \ref{mainconvergence.thm}.
 \end{prop}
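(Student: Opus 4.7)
The plan is to run a standard descent-type recursion on $\|\bar{\bf x}(k)-\bar{\bf x}\|_2^2$ driven by the reduced iteration \eqref{convergence.thm.pf.eq2}, treating ${\bf a}_{\rm sde}^T{\bf h}(k)$ as an inexact scaled gradient in the sense of Proposition \ref{inexactgradient.prop}. First I would expand
\begin{equation*}
\|\bar{\bf x}(k{+}1)-\bar{\bf x}\|_2^2 = \|\bar{\bf x}(k)-\bar{\bf x}\|_2^2 -2\alpha_k\langle{\bf a}_{\rm sde}^T{\bf h}(k),\bar{\bf x}(k)-\bar{\bf x}\rangle_N +\alpha_k^2\|{\bf a}_{\rm sde}^T{\bf h}(k)\|_2^2,
\end{equation*}
and split ${\bf a}_{\rm sde}^T{\bf h}(k)=\tilde w\nabla f(\bar{\bf x}(k))+{\bf r}(k)$, where by Proposition \ref{inexactgradient.prop} and $\|{\bf a}_{\rm sde}\|_1=1$ the remainder satisfies $\|{\bf r}(k)\|_2\le L\|\tilde{\bf z}(k)\|_{2,\infty}$. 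Convexity of $f$ together with $\nabla f(\bar{\bf x})=0$ forces $\langle\nabla f(\bar{\bf x}(k)),\bar{\bf x}(k)-\bar{\bf x}\rangle_N\ge0$, which lets me discard the gradient inner-product contribution (note $\tilde w>0$ by Lemma \ref{weigh.lem}).

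Next I would handle the cross term via the Cauchy--Schwarz bound $2\alpha_k\|{\bf r}(k)\|_2\|\bar{\bf x}(k)-\bar{\bf x}\|_2\le \|{\bf r}(k)\|_2^2+\alpha_k^2\|\bar{\bf x}(k)-\bar{\bf x}\|_2^2$, and estimate $\|{\bf a}_{\rm sde}^T{\bf h}(k)\|_2\le \|{\bf a}_{\rm sde}\|_1 M = M$ using \eqref{f.bounded}. This yields the one-step recursion
\begin{equation*}
\|\bar{\bf x}(k{+}1)-\bar{\bf x}\|_2^2\le (1+\alpha_k^2)\|\bar{\bf x}(k)-\bar{\bf x}\|_2^2 + \|{\bf r}(k)\|_2^2 + \alpha_k^2 M^2.
\end{equation*}
Unrolling and using $\prod_{j=0}^k(1+\alpha_j^2)\le \exp(\sum_{j=0}^\infty \alpha_j^2)$, which is finite by \eqref{alphak.assumption}, I obtain
\begin{equation*}
\|\bar{\bf x}(k{+}1)-\bar{\bf x}\|_2^2\le \exp\Bigl(\sum_{j=0}^\infty\alpha_j^2\Bigr)\Bigl\{\|\bar{\bf x}(0)-\bar{\bf x}\|_2^2 + \sum_{j=0}^\infty\|{\bf r}(j)\|_2^2 + M^2\sum_{j=0}^\infty\alpha_j^2\Bigr\}.
\end{equation*}

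The final step is to bound $\sum_j\|{\bf r}(j)\|_2^2\le L^2\sum_j\|\tilde{\bf z}(j)\|_{2,\infty}^2$ by importing the square-summability estimate \eqref{tildezk.est001} already derived in the proof of Theorem \ref{consensus.main.thm}: squaring both sides and using $(a+b)^2\le 2a^2+2b^2$ together with the crude bound $(1-|\lambda_2({\bf A}_{\rm sde})|)^{-1}\le (1-|\lambda_2({\bf A}_{\rm sde})|)^{-2}$ gives
\begin{equation*}
L^2\sum_{j=0}^\infty\|\tilde{\bf z}(j)\|_{2,\infty}^2 \le \frac{8C_1^2L^2}{(1-|\lambda_2({\bf A}_{\rm sde})|)^2}\max_{1\le i\le n}\|{\bf x}_i(0)-\bar{\bf x}(0)\|_2^2 + \frac{8C_1^2L^2M^2}{(1-|\lambda_2({\bf A}_{\rm sde})|)^2}\sum_{j=0}^\infty\alpha_j^2.
\end{equation*}
Combining the last two displays and rearranging constants yields \eqref{preconvergence.thm.eq1} with the claimed value of $C_2$.

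The main obstacle is the cross term $2\alpha_k\langle{\bf r}(k),\bar{\bf x}(k)-\bar{\bf x}\rangle_N$: a naive Young's inequality with weight $1/(2\alpha_k)$ produces a $1+\alpha_k$ factor whose logarithms are not summable. The trick that makes everything close up is to absorb the $\alpha_k$ into the iterate factor (giving the summable multiplier $1+\alpha_k^2$) at the cost of paying $\|{\bf r}(j)\|_2^2$ without any $\alpha_j^2$ weight; summability of this term is then not automatic from $\sum\alpha_j^2<\infty$ alone, but is precisely what the already-established consensus estimate \eqref{tildezk.est001} provides through the Lipschitz link \eqref{Lipschitz.con}. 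The rest is bookkeeping of constants.
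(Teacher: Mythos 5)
Your proposal is correct and follows essentially the same route as the paper's own proof in Appendix E: the same expansion of $\|\bar{\bf x}(k+1)-\bar{\bf x}\|_2^2$ via \eqref{convergence.thm.pf.eq2}, the same split of ${\bf a}_{\rm sde}^T{\bf h}(k)$ into the scaled gradient plus a remainder controlled by Proposition \ref{inexactgradient.prop}, the same use of convexity to discard the gradient inner product, the same Young-type absorption producing the factor $1+\alpha_k^2$ and the unweighted term $L^2\|\tilde{\bf z}(k)\|_{2,\infty}^2$ (the paper's $\beta_k$), and the same unrolling plus the square-summability estimate \eqref{tildezk.est001}. The only deviation is cosmetic bookkeeping in matching the exact constant $C_2$, at the same level of looseness as the paper itself.
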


The estimate in
\eqref {preconvergence.thm.eq1}  can be improved if the objective function $f$ has  the  strongly convex  property \eqref{accretive.con}, see Appendix \ref{preconvergence.thm.appendix} for a detailed proof.

\begin{prop}\label{convergence.prop}
 Let  the matrix ${\bf A}_{\rm sde}$,  the sequence $\{\alpha_k\}_{k=0}^\infty$,
 the local objective functions $g_j, 1\le j\le n$,  and the global objective function $f$ be as in Theorem \ref{mainconvergence.thm}.
  Then there exists a positive constant $C$ such that
 \begin{eqnarray}\label{convergence.thm.eq01}
 \|\bar{\bf x}(k)-\bar{\bf x}\|_2^2
 & \hskip-0.08in \le & \hskip-0.08in  C  \exp\Big(-\tilde w  A\sum_{j=0}^{k-1} \alpha_j\Big)
 \Big\{ \|\bar{\bf x}(0)-\bar {\bf x}\|_2^2  \nonumber\\
& \hskip-0.08in  & \hskip-0.08in  + \sum_{j=0}^{k-1} \exp\Big(\tilde w  A\sum_{l=0}^{j} \alpha_l\Big) \Big( M^2\alpha_j^2\nonumber \\
& & \hskip-0.08in + L^2 \max_{1\le i\le n} \|{\bf x}_i(j)-\bar {\bf x}(j)\|_2^2\Big)\Big\},
% L^2\|\tilde {\bf z}(j)\|_{2, \infty}^2 \nonumber\\
%& \hskip-0.08in  & \hskip-0.08in  + \|{\bf a}_{\rm sde}\|_1^2 M^2 \sum_{j=0}^{k-1} \exp\Big(A\sum_{l=0}^{j} \alpha_l\Big) \alpha_j^2
 %\beta_j,
\end{eqnarray}
 hold for all $k\ge 2$.
\end{prop}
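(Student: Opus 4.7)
The plan is to derive a one-step discrete Gronwall recursion for $u_k := \|\bar{\bf x}(k)-\bar{\bf x}\|_2^2$ and then unroll it. Starting from the scalar update \eqref{convergence.thm.pf.eq2}, I subtract $\bar{\bf x}$ and expand the squared norm to obtain
\[
u_{k+1} = u_k - 2\alpha_k \langle {\bf a}_{\rm sde}^T {\bf h}(k),\, \bar{\bf x}(k)-\bar{\bf x}\rangle_N + \alpha_k^2 \|{\bf a}_{\rm sde}^T {\bf h}(k)\|_2^2.
\]
By Proposition \ref{inexactgradient.prop} I decompose ${\bf a}_{\rm sde}^T {\bf h}(k) = \tilde w\, \nabla f(\bar{\bf x}(k)) + r(k)$, where the remainder obeys $\|r(k)\|_2 \le L\|{\bf a}_{\rm sde}\|_1 \max_{1\le i\le n}\|{\bf x}_i(k)-\bar{\bf x}(k)\|_2$; this isolates the exact-gradient contribution from a perturbation driven entirely by the consensus error.

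For the deterministic piece, Proposition \ref{preconvergence.prop} keeps $\bar{\bf x}(k)$ inside the ball on which \eqref{accretive.con} is assumed, so strong convexity together with $\nabla f(\bar{\bf x})={\bf 0}$ yields $\langle \nabla f(\bar{\bf x}(k)),\bar{\bf x}(k)-\bar{\bf x}\rangle_N \ge A u_k$. To keep the decay rate $\tilde w A$ (rather than $2\tilde w A$) I handle the inexact-gradient cross term by a weighted AM--GM inequality tuned to absorb exactly half of the convexity gain:
\[
2\alpha_k \|r(k)\|_2 \|\bar{\bf x}(k)-\bar{\bf x}\|_2 \le \alpha_k \tilde w A\, u_k + \frac{\alpha_k}{\tilde w A}\|r(k)\|_2^2.
\]
The quadratic term $\alpha_k^2\|{\bf a}_{\rm sde}^T {\bf h}(k)\|_2^2$ is bounded by $\alpha_k^2 \|{\bf a}_{\rm sde}\|_1^2 M^2$ through the uniform gradient estimate \eqref{f.bounded}. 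Combining these three pieces and absorbing the bounded factor $\alpha_k/(\tilde w A)\le (\sup_j \alpha_j)/(\tilde w A)$ into a new constant $C'$ collapses the recursion to
\[
u_{k+1} \le (1-\alpha_k \tilde w A)\, u_k + C'\Bigl(M^2\alpha_k^2 + L^2 \max_{1\le i\le n}\|{\bf x}_i(k)-\bar{\bf x}(k)\|_2^2\Bigr).
\]

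The last step is a routine discrete Gronwall iteration. Using $1-\alpha_k \tilde w A \le \exp(-\alpha_k \tilde w A)$, valid for all sufficiently large $k$ (the finitely many early terms fold into $C$), induction gives
\[
u_k \le \exp\Bigl(-\tilde w A\sum_{j=0}^{k-1}\alpha_j\Bigr) u_0 + \sum_{j=0}^{k-1} \exp\Bigl(-\tilde w A\!\sum_{l=j+1}^{k-1}\alpha_l\Bigr) \epsilon_j,
\]
with $\epsilon_j$ denoting the error term on the right-hand side of the one-step bound. Factoring $\exp(-\tilde w A\sum_{j=0}^{k-1}\alpha_j)$ out of the sum and rewriting $-\sum_{l=j+1}^{k-1}\alpha_l$ as $\sum_{l=0}^{j}\alpha_l - \sum_{l=0}^{k-1}\alpha_l$ produces exactly the claimed bound \eqref{convergence.thm.eq01}.

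The main delicate point I anticipate is the calibration of the AM--GM split: a coarser choice either erodes the positive curvature coming from \eqref{accretive.con} or halves the resulting decay rate, and only the precise weighting that sacrifices $\alpha_k \tilde w A\, u_k$ to tame the inexact-gradient cross term produces the exponent $\tilde w A$ appearing in \eqref{convergence.thm.eq01}. The a priori confinement of $\bar{\bf x}(k)$ given by Proposition \ref{preconvergence.prop} is the other essential ingredient—without it there is no uniform-in-$k$ license to invoke strong convexity.
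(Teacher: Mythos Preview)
Your proposal is correct and follows the same overall strategy as the paper's proof: expand $\|\bar{\bf x}(k+1)-\bar{\bf x}\|_2^2$ from \eqref{convergence.thm.pf.eq2}, use Proposition~\ref{inexactgradient.prop} to split off the exact gradient piece and apply strong convexity (licensed by Proposition~\ref{preconvergence.prop}), control the cross term by an AM--GM inequality, and unroll the resulting one-step recursion $u_{k+1}\le (1-\tilde w A\alpha_k)\,u_k+\beta_k$. The only cosmetic difference is in the AM--GM split: the paper uses the unweighted form $2ab\le a^2+b^2$ (giving a coefficient $1+\alpha_k^2-2\tilde w A\alpha_k$) and then invokes the harmless assumption $\alpha_k\le \tilde w A$ to reach $1-\tilde w A\alpha_k$, whereas your weighted split sacrifices half the curvature directly---both routes produce the same contraction factor and the same final bound.
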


Applying \eqref{convergence.thm.eq01} for the illustrative examples \eqref{stepsize.example} of step sizes, and using \eqref{example.estimate1}, we can find a positive constant $\tilde C$ such that
\begin{eqnarray}\label{example.estimate2}
 & \hskip-0.08in  & \hskip-0.08in \|\bar{\bf x}(k)-\bar{\bf x}\|_2^2
  \le   \exp\Big(-\tilde w  A\sum_{l=0}^{k-1} (l+a)^{-\theta}\Big)
 \Big\{ \|\bar{\bf x}(0)-\bar {\bf x}\|_2^2  \nonumber\\
& \hskip-0.08in  & \hskip-0.08in  + (M^2+L^2 C^2)
 \sum_{j=0}^{k-1} \exp\Big(\tilde w  A\sum_{l=0}^{j} (l+a)^{-\theta}\Big) (j+a)^{-2\theta}\Big\}\nonumber \\
& \hskip-0.08in \le
& \hskip-0.08in
\exp\Big(-\frac{\tilde w  A}{1-\theta} \big((k+a)^{1-\theta}-a^{1-\theta}\big)\Big)
 \|\bar{\bf x}(0)-\bar {\bf x}\|_2^2\nonumber\\
& \hskip-0.08in
& \hskip-0.08in + (M^2+L^2 C^2) \exp\Big(-\frac{\tilde w  A}{1-\theta} (k+a)^{1-\theta}\Big)\nonumber\\
& \hskip-0.08in
& \hskip-0.08in \quad \times
\sum_{j=0}^{k-1} \exp\Big(\frac{\tilde w  A}{1-\theta} (j+1+a)^{1-\theta}\Big)
(j+a)^{-2\theta}\nonumber\\
 & \hskip-0.08in \le
& \hskip-0.08in
 \tilde C^2 (k+a)^{-\theta}, \  \ k\ge 1,\end{eqnarray}
where the first estimate holds by
\eqref{convergence.thm.eq01}, the second one follows from the observation
\begin{eqnarray*}
\sum_{l=m}^{k-1} (l+a)^{-\theta}  % & \ge &  \sum_{l=m}^{k-1} \int_{l}^{l+1} (x+a)^{-\theta}dx\nonumber\\
& \le  &
 \frac{ (k+a)^{1-\theta}-(m+a)^{1-\theta}}{1-\theta}\end{eqnarray*}
 for $0\le m\le k-1$,
and the third one is obtained from
the boundedness of the sequences
$(k+a)^{\theta} \exp(-\frac{\tilde w  A}{1-\theta} (k+a)^{1-\theta})$ and $(k+a+1)^{1-\theta}-(k+a)^{1-\theta}, k\ge 0$, and
the integrability of $\int_0^\infty
\exp(-\frac{\tilde w  A}{1-\theta}  t) (t+1)^{\theta/(1-\theta)} dt$,
and the following estimate
\begin{eqnarray*}
 & \hskip-0.08in
& \hskip-0.08in \sum_{j=0}^{k-1} \exp\Big(\frac{\tilde w  A}{1-\theta} (j+1+a)^{1-\theta}\Big)
(j+a)^{-2\theta}\nonumber\\
 & \hskip-0.08in \le
& \hskip-0.08in  3^{2\theta}\int_0^k  \exp\Big(\frac{\tilde w  A}{1-\theta} (x+a+1)^{1-\theta}\Big) (x+a+1)^{-2\theta}dx
\\
 & \hskip-0.08in \le
& \hskip-0.08in  3^{2\theta} (1-\theta)^{-1}
\int_0^{a_k}
\exp\Big(\frac{\tilde w  A}{1-\theta}  \big((k+a+1)^{1-\theta}-t\big)\Big)\nonumber\\
 & \hskip-0.08in
& \hskip-0.08in \times
\big((k+a+1)^{1-\theta}-t\big)^{-\theta/(1-\theta)} dt\\
 & \hskip-0.08in
\le & \hskip-0.08in
3^{2\theta} (1-\theta)^{-1}
\exp\Big(\frac{\tilde w  A}{1-\theta}  (k+a+1)^{1-\theta}\Big)
(k+a+1)^{-\theta}
\nonumber\\
 & \hskip-0.08in
& \hskip-0.08in \times
\int_0^{a_k}
\exp\Big(-\frac{\tilde w  A}{1-\theta}  t\Big) (t+1)^{\theta/(1-\theta)} dt\\
 & \hskip-0.08in
\le & \hskip-0.08in
3^{2\theta} (1-\theta)^{-1}
\exp\Big(\frac{\tilde w  A}{1-\theta}  (k+a+1)^{1-\theta}\Big)
(k+a)^{-\theta}
\nonumber\\
 & \hskip-0.08in
& \hskip-0.08in \times
\int_0^\infty
\exp\Big(-\frac{\tilde w  A}{1-\theta}  t\Big) (t+1)^{\theta/(1-\theta)} dt,
\end{eqnarray*}
where $a_k=(k+1+a)^{1-\theta}-(a+1)^{1-\theta}\le (k+1+a)^{1-\theta}-1$.
Therefore
\begin{equation}\label{example.estimate2}\|\bar{\bf x}(k)-\bar {\bf x}\|\le \tilde C (k+a)^{-\theta/2},\ k\ge 1.\end{equation}
This implies that $\bar{\bf x}(k), k\ge 0$, has faster convergence to the limit $\bar{\bf x}$ when we select larger $\theta\in (1/2, 1)$,
however comparing
 the convergence rate $ O( \log k/\sqrt{k}) $ for the uncoded incremental gradient methods \citep{nedic2001,gurbuzbalaban2015},
 the convergence rate in \eqref{example.estimate2} is
always slow, even it is very close when $\theta$ is close to one.
After careful verification in the above estimation,  we observe that
the constant $\tilde C$ could be smaller if the second eigenvalue $\lambda_2({\bf A}_{\rm sde})$ associated with the decoding matrix ${\bf A}$ is smaller.

Set  $W=\max_{1\le i\le n} w_i$.
 Observe that
 $$ f(\bar {\bf x}(k))- f(\bar{\bf x})= \langle \nabla f (t \bar {\bf x}(k))+(1-t)\bar{\bf x}), \bar {\bf x}(k)- \bar{\bf x}\rangle$$
for some $0\le t\le 1$, and that
$$ \|\nabla f({\bf x})\|_2 \le   \|\nabla f({\bf x})- \nabla f(\bar {\bf x})\|_2
\le L  W^{-1} \|{\bf x}-\bar {\bf x}\|_2,
$$
where the second inequality follows from \eqref{Lipschitz.con} and the row stochastic property for the matrix ${\bf A}_{\rm sde}$.
This together with  Proposition \ref{convergence.prop} proves that % leads to
  \begin{eqnarray}\label{convergence.cor.eq01}
f(\bar{\bf x})& \hskip-0.08in \le & \hskip-0.08in  f(\bar{\bf x}(k))\le
   f(\bar{\bf x})+L  W^{-1} \|{\bf x}-\bar {\bf x}\|_2\nonumber\\
&\hskip-0.08in \le  & \hskip-0.08in    f(\bar{\bf x})+L  W^{-1}  \exp\Big(- \tilde w A\sum_{j=0}^{k-1} \alpha_j\Big)
  \nonumber\\
& \hskip-0.08in  & \hskip-0.08in \times \Big\{ \|\bar{\bf x}(0)-\bar {\bf x}\|_2^2  + \sum_{j=0}^{k-1} \exp\Big(\tilde w  A\sum_{l=0}^{j} \alpha_l\Big) \nonumber \\
& & \hskip-0.08in \times \Big( M^2\alpha_j^2 + L^2 \max_{1\le i\le n} \|{\bf x}_i(j)-\bar {\bf x}(j)\|_2^2\Big)
\Big\} %\beta_j,
\qquad 
\end{eqnarray}
 hold for all $k\ge 2$.
For the case that step size $\alpha_k, k\ge 0$ chosen as in \eqref{stepsize.example}, we can use the similar argument used to prove
\eqref{example.estimate2}
and show that
  \begin{eqnarray}\label{convergencerate.example}
  |f(\bar{\bf x}(k))-f(\bar{\bf x})| \le
  C (k+a)^{-\theta},\   k\ge 1,
\end{eqnarray}
for some positive constant $C$.

We finish this section with the proof of Theorem \ref{mainconvergence.thm} under the assumption that Proposition \ref{convergence.prop} holds.

\begin{proof}[Proof of Theorem \ref{mainconvergence.thm}]
By  \eqref{alphak.assumption} and \eqref{tildezk.est001}, we have
 \begin{equation}\label {mainconvergence.thm.pf.eq11}
 \sum_{j=0}^{\infty}  M^2\alpha_j^2 + L^2 \max_{1\le i\le n} \|{\bf x}_i(j)-\bar {\bf x}(j)\|_2^2<\infty,
 \end{equation}
 and
 \begin{equation} \label{mainconvergence.thm.pf.eq10}
\lim_{k\to \infty} \exp \Big(-\tilde w  A  \sum_{j=l}^
k \alpha_j\Big)=0, \ l\ge 0.\end{equation}
Combining \eqref{convergence.thm.eq01},
\eqref{mainconvergence.thm.pf.eq11} and \eqref{mainconvergence.thm.pf.eq10}
proves the desired limit  in \eqref{barx.limit} by the dominated convergence theorem.
\end{proof}

\section{Numerical Simulations}
\label{simulation.section}

In this section, we consider  the following unconstrained convex optimization problem on a network  %associated with the objective function
\begin{equation}\label{objective.simulation}
 \arg\min_{{\bf x}\in \mathbb{R}^{N}} \|  {\bf G}{\bf x}- {\bf y}  \|_{2}^{2},\end{equation}
where   the network contains  $n$ regions with  each region of the
partition equipped with a worker,  ${\bf G}$ is a random  matrix of size $Q\times N$ whose entries are independent  and identically distributed standard normal random variables,  and
\begin{equation} {\bf y}={\bf G} {\bf x}_o\in \mathbb{R}^Q\end{equation}
 has entries of ${\bf x}_o$ being identically independent random variables sampled from the uniform bounded random distribution  between $ -1 $ and $ 1 $.
 The solution $\bar {\bf x}$ of the above optimization problem is the least squares solution of the overdetermined system  %of equations
${\bf y}  = {\bf G}{\bf x}_o, {\bf x}_o\in \mathbb{R}^N$. % For both networks we set $ M = N = 225 $.
In this section,  we   demonstrate the  performance of the CoDGraD  algorithm  \eqref{mainalgorithm} to solve the convex optimization problem
\eqref{objective.simulation} and also compare it with the performance of
 the conventional distributed gradient descent  algorithm (DGD)
under the CTA prototype \eqref{CTA}.

Assume that the network has $n$ active nodes. Then we can repartition  the network into $n$ regions around those $n$ nodes, and accordingly, the random measurement matrix ${\bf G}$,  the measurement data ${\bf y}$,  and the objective function  $f({\bf x}):= \|  {\bf G}{\bf x}- {\bf y}  \|_{2}^{2}$ in \eqref{objective.simulation} as follows,
 $$f({\bf x})= \sum_{i=1}^{m} f_i({\bf x}):= \sum_{i=1}^{n} \|{\bf G}_i {\bf x}-{\bf y}_i\|_2^2.  $$
 In our simulations, we assume that the repartitioned regions have the same size, i.e., the number of rows in ${\bf G}_i$
and lengths of vectors ${\bf y}_i$, for all $ i $ where $ 1\le i\le n$ are all equal.
Shown in Figure \ref{fig35_b} are two undirected graphs to describe
data exchanging structure for active nodes of a 3-node  and 5-node network respectively.
\begin{figure}[t]
\centering
\includegraphics[width=33mm, height=23mm]
{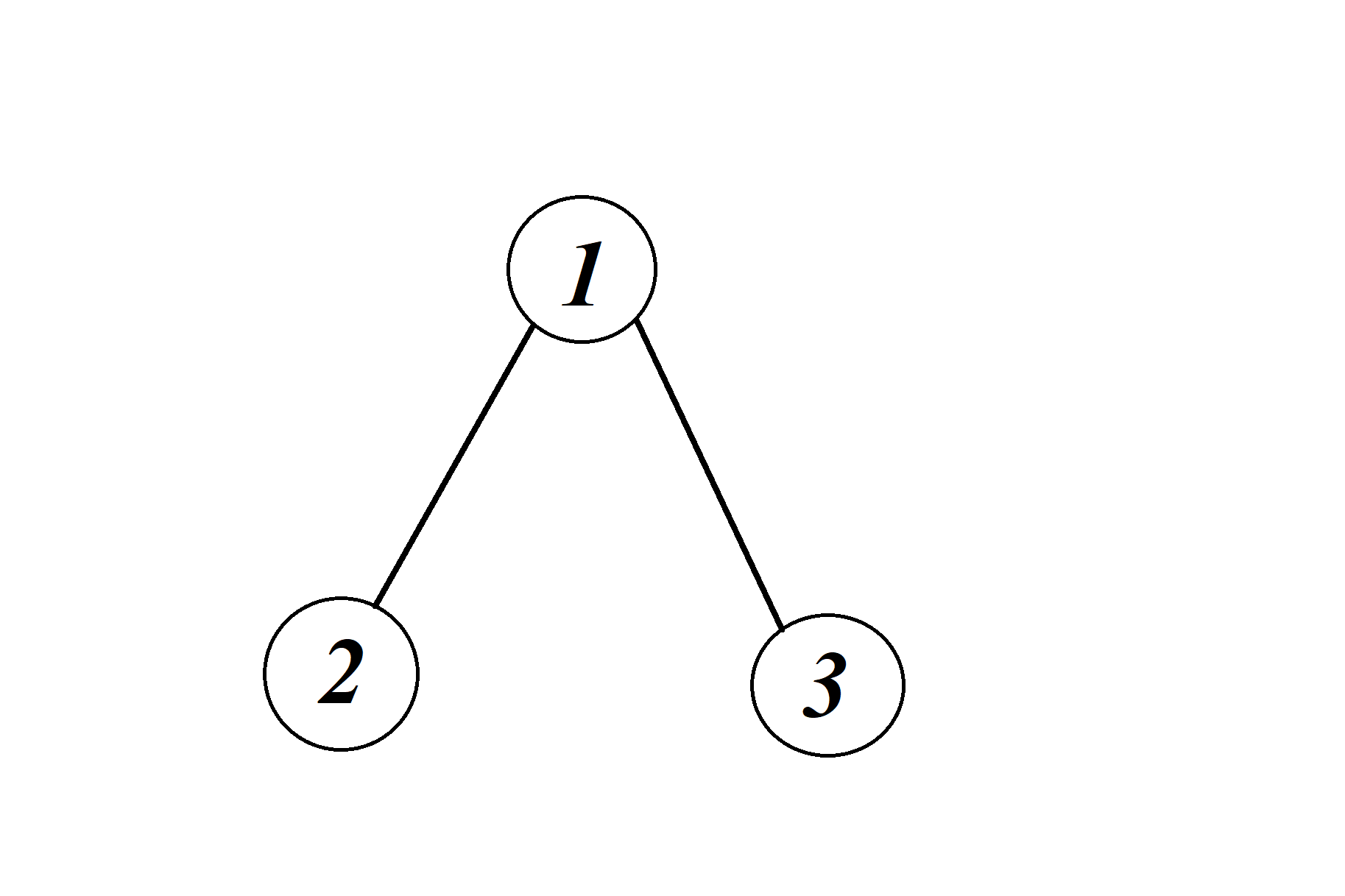}
\includegraphics [width=33mm, height=25mm]
{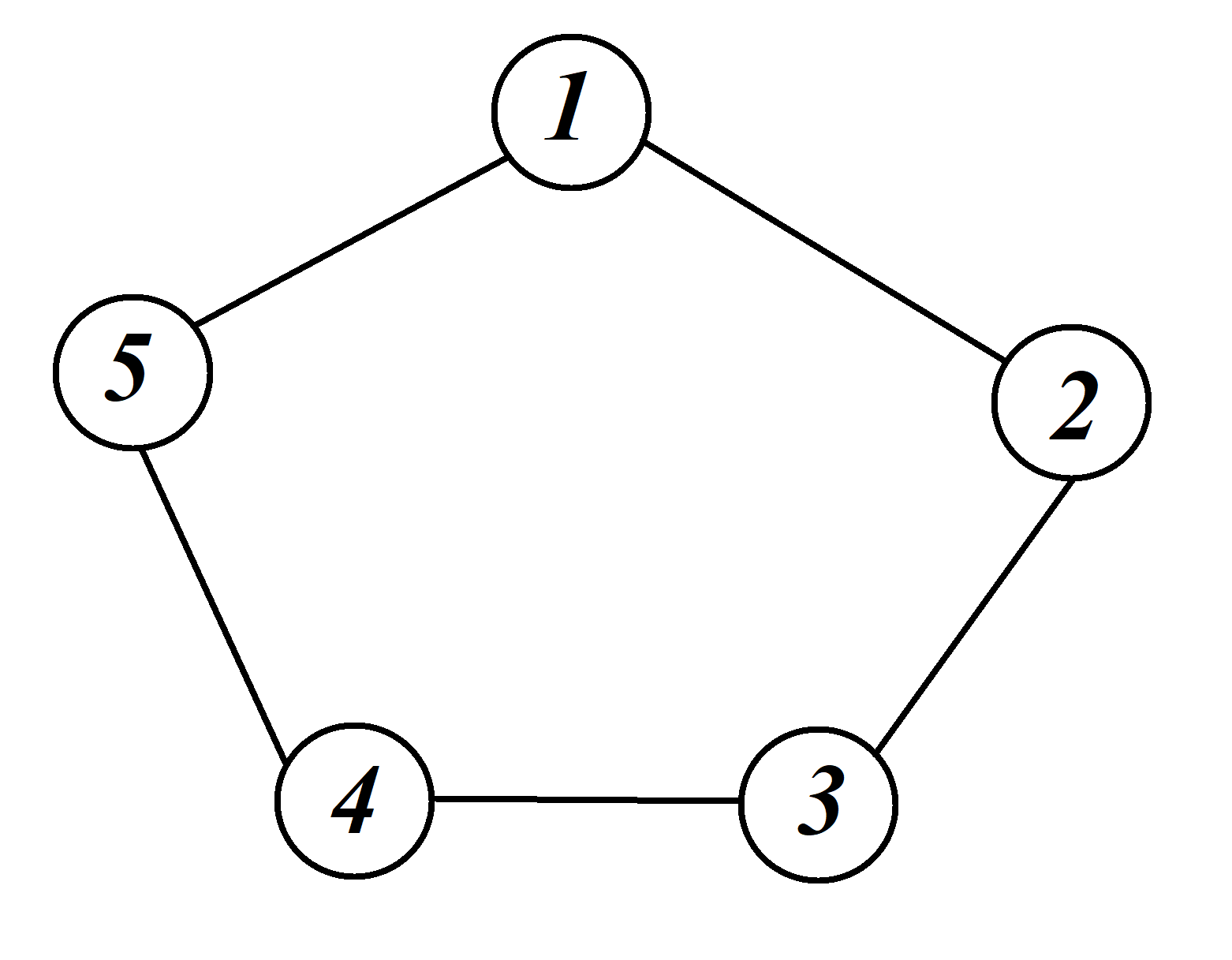}
\caption{Data exchanging structures with three/five-node network}
\label{fig35_b}
\end{figure}

In our simulations, we take $ (Q, N)= (225, 75) $ for Figure  %s \ref{fig_3},
 \ref{fig_4} and $ (M, N)= (250, 50) $ for  Figure %s \ref{fig_5} and
 \ref{fig_6}.
We use absolute error
$$ {\rm AE}:= \max_{1\le i\le n} {\|{\bf x}_i(k)-{\bf x}_o\|_2}/{\|{\bf x}_0\|_2} $$
and  consensus error
$${\rm CE}:= \max_{1\le i\le n} {\|{\bf x}_i(k)-\bar{\bf x}(k)\|_2}/{\|{\bf x}_o\|_2} $$
 to measure  the performance of the CoDGraD  algorithm
\eqref{mainalgorithm} and the DGD  algorithm \eqref{CTA}, where $n$ is the number of active nodes in the network.

In the first simulation where there are $ 3 $ nodes with its topology described on the left of Figure \ref{fig35_b}, we
take the coding matrix  ${\bf B}$ and the decoding matrix ${\bf A}$ as follows:
 \vskip-0.16in
 \begin{equation}\label{3_node_1_B}
     {\bf B}=   \left(\begin{array}{ccc}
        1 &	-\frac{5}{4} &	0 \\
0 &	1 & \frac{4}{9}\\
\frac{9}{5} &	0 & 	1
        \end{array}\right) \ {\rm and} \
              {\bf A} =  \left(\begin{array}{ccc}
        0 &	1	& \frac{5}{9} \\
1 &	\frac{9}{4} &	0 \\
-\frac{4}{5} &	0 &	1
        \end{array}\right).
        \end{equation}
 \noindent  The above coding/decoding matrix pair satisfies \eqref{AB1.eq} and
the  corresponding row stochastic matrix in \eqref{Afit.def}   is
 \begin{equation*}\label{3_node_1_A_sde}
       {\bf A}_{\rm sde} = \left(\begin{array}{cccccc}
       0 &	9/14	& 2/7 &	0 &	0 &	0 \\
4/13 &	9/13 &	0 &	0 &	0 &	0 \\
0 &	0 &	5/9 &  4/9 &	0 &	0 \\
0 &	9/14	& 5/14 &	0 &	0 &	0 \\
4/13 &	9/13 &	0 &	0 &	0 &	0 \\
0 &	0 &	5/9 &  4/9 &	0 &	0
        \end{array}\right).
            \end{equation*}
In Figure  
\ref{fig_4}, we 
present the  performance of  the CoDGraD  algorithm
\eqref{mainalgorithm} and the DGD  algorithm \eqref{CTA} with  absolute  and consensus metric being the average
of the corresponding metrics over 100 trials, where
random measurement matrix
${\bf G}$ has independent  and identically distributed standard normal random variables as its entries,
the original vector ${\bf x}_o$ is identically independent  random variables uniform distributed in $[-1, 1]$,
and
step sizes are $ \alpha_{k}=(k+300)^{-0.75} $ and $(k+500)^{-0.85}, k\ge 0 $, respectively.
\begin{figure}[t]
\centering
\includegraphics [width=78mm, height=45mm]{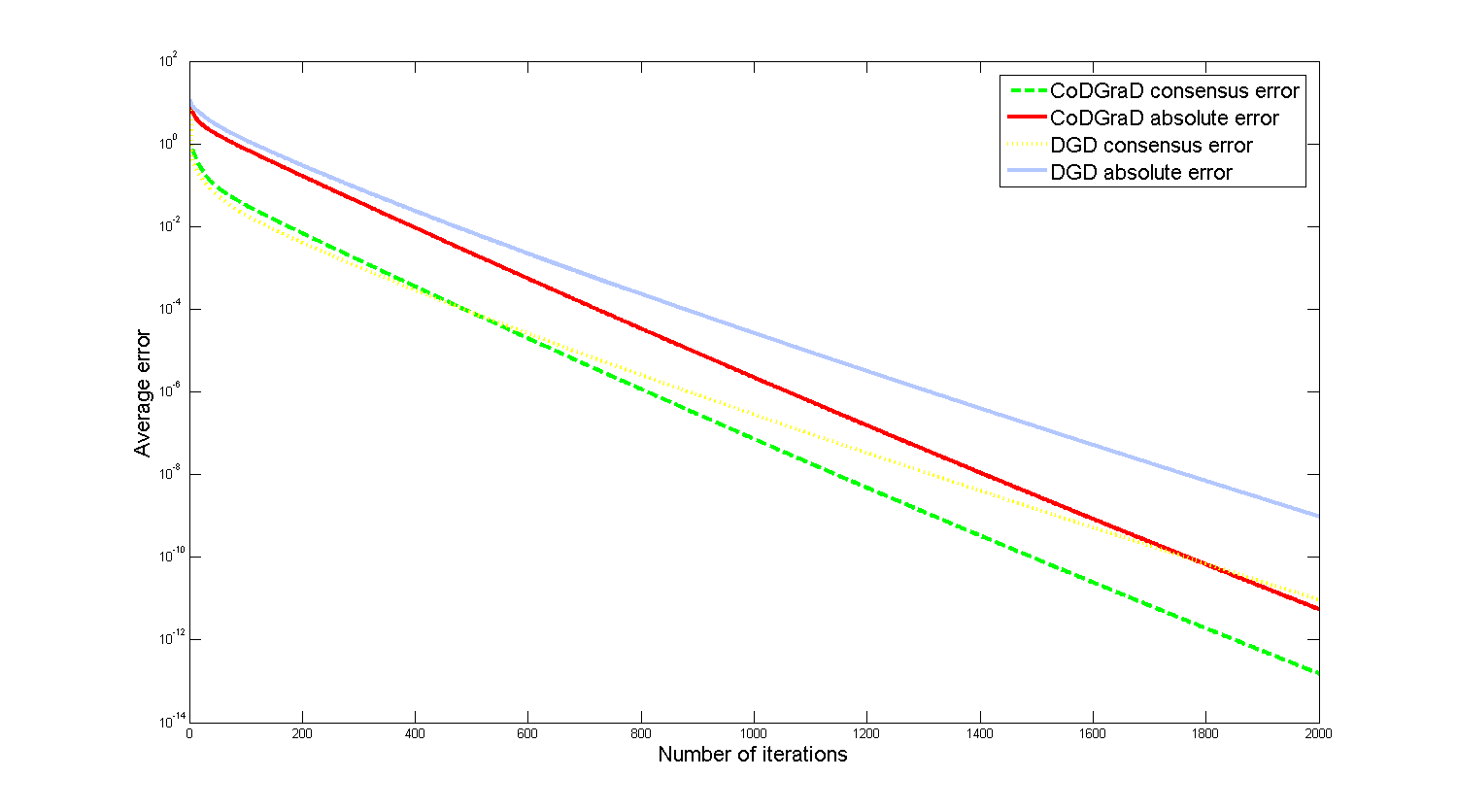}

\includegraphics  [width=78mm, height=45mm] %[trim=0 0 0 0cm , clip, width=8cm]
{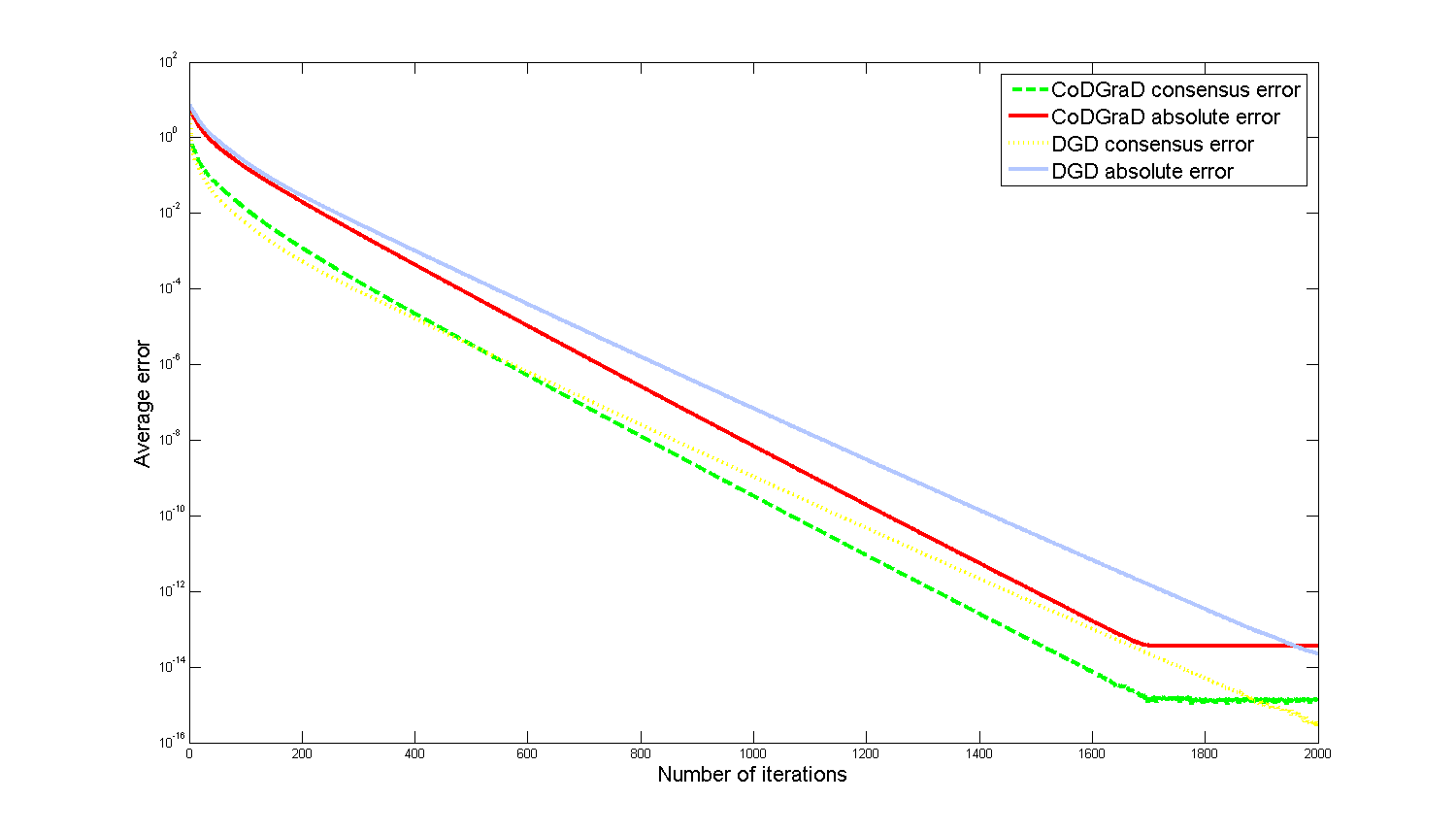}
\caption{Performance comparison of the CoDGraD  algorithm
\eqref{mainalgorithm} and the DGD algorithm \eqref{CTA}
over a three active nodes network  with $ (Q, N)= (225, 225) $ and step sizes $ \alpha_{k}= (k+300)^{-0.75}$ (top) and $(k+500)^{-0.85}, k\ge 0$ (bottom).  }
\label{fig_4}
\end{figure}

In the second simulation, the network has $ 5 $ active nodes with
data exchanging structure described on the right of Figure \ref{fig35_b}.
In that simulation,
the coding/decoding matrices are given by
 \vskip-0.16in
\begin{equation}\label{5_node_B}
    {\bf B}=   \left(\begin{array}{ccccc}
   1 & 2 & 1/2 & 0 & 0 \\
    0 & -1 & 3 &  4 & 0 \\
    0 &  0 & -5/2 & -3 & 1 \\
    1 & 0 &  0 & 1/5 & 13/5 \\
    2 & 1 & 0 & 0 & 4
        \end{array}\right)
\end{equation}
 \vskip-0.06in
 \noindent
 and
 \vskip-0.16in
 \begin{equation}\label{5_node_A}
%    \begin{align}
      {\bf A} =  \left(\begin{array}{ccccc}
        1/2 & 1/4 & 0   &  0  & 1/4 \\
        1 &  1 & 1  &    0  &   0 \\
0  &   -1  &  -8/5 &  1 & 0  \\
0  &  0  & -2/5 & -1 & 1 \\
2 & 0  &  0  &   5  &  -3
        \end{array}\right)
\end{equation}
 \vskip-0.06in
 \noindent respectively.
The above coding/decoding matrix pair satisfies \eqref{AB1.eq} and
the  corresponding row stochastic matrix in \eqref{Afit.def}   is
%{\footnotesize
\begin{equation*}\label{5_node_A_sde}
       {\bf A}_{\rm sde} = \left(\begin{array}{cccccccccc}
\frac{1}{2} & \frac{1}{4} & 0   &  0  & \frac{1}{4}  &    0  & 0  &   0  &   0  &    0 \\
 \frac{1}{3} &  \frac{1}{3} & \frac{1}{3}  &    0  &   0  &   0  & 0   &   0  &   0   & 0 \\
0  &   0  &  0  &  \frac{5}{18} & 0 & 0  & \frac{5}{18} & \frac{8}{18}   & 0   &   0 \\
0  &  0  & 0 & 0 & \frac{5}{12} & 0  & 0 & \frac{1}{6}  & \frac{5}{12}  &  0 \\
\frac{1}{5} & 0  &  0  &   \frac{1}{2}  &   0 &  0 &  0  &  0  &  0   &  \frac{3}{10}  \\
\frac{1}{2} & \frac{1}{4} & 0   &  0  & \frac{1}{4}  &    0  & 0  &   0  &   0  &    0 \\
\frac{1}{3} & \frac{1}{3}  & \frac{1}{3}   &    0  &   0  &   0  & 0   &   0  &   0   & 0 \\
0  &   0  &  0  &  \frac{5}{18} & 0 & 0  & \frac{5}{18} &  \frac{8}{18}  & 0   &   0 \\
0  &  0  & 0 & 0 & \frac{5}{12} & 0  & 0 & \frac{1}{6}  & \frac{5}{12}  &  0 \\
\frac{1}{5} & 0  &  0  &  \frac{1}{2}   &   0 &  0 &  0  &  0  &  0   & \frac{3}{10}
        \end{array}\right).
            \end{equation*}
Shown in Figure  %s \ref{fig_5} and
\ref{fig_6} is the performance of the CoDGraD  algorithm \eqref{mainalgorithm} and the DGD algorithm \eqref{CTA}, where the absolute metric and consensus metric are the average
of the corresponding metrics over 100 trials with random measurement matrix
${\bf G}$    and
the original vector ${\bf x}_o$ being  selected as in the first simulation,
and
step sizes being $ \alpha_{k}=(k+800)^{-0.90} $ and $(k+500)^{-0.95}, k\ge 0 $, respectively.

\begin{figure}[t]
\centering
\includegraphics  [width=78mm, height=45mm] %[trim=0 0 0 0cm, width=4cm]
{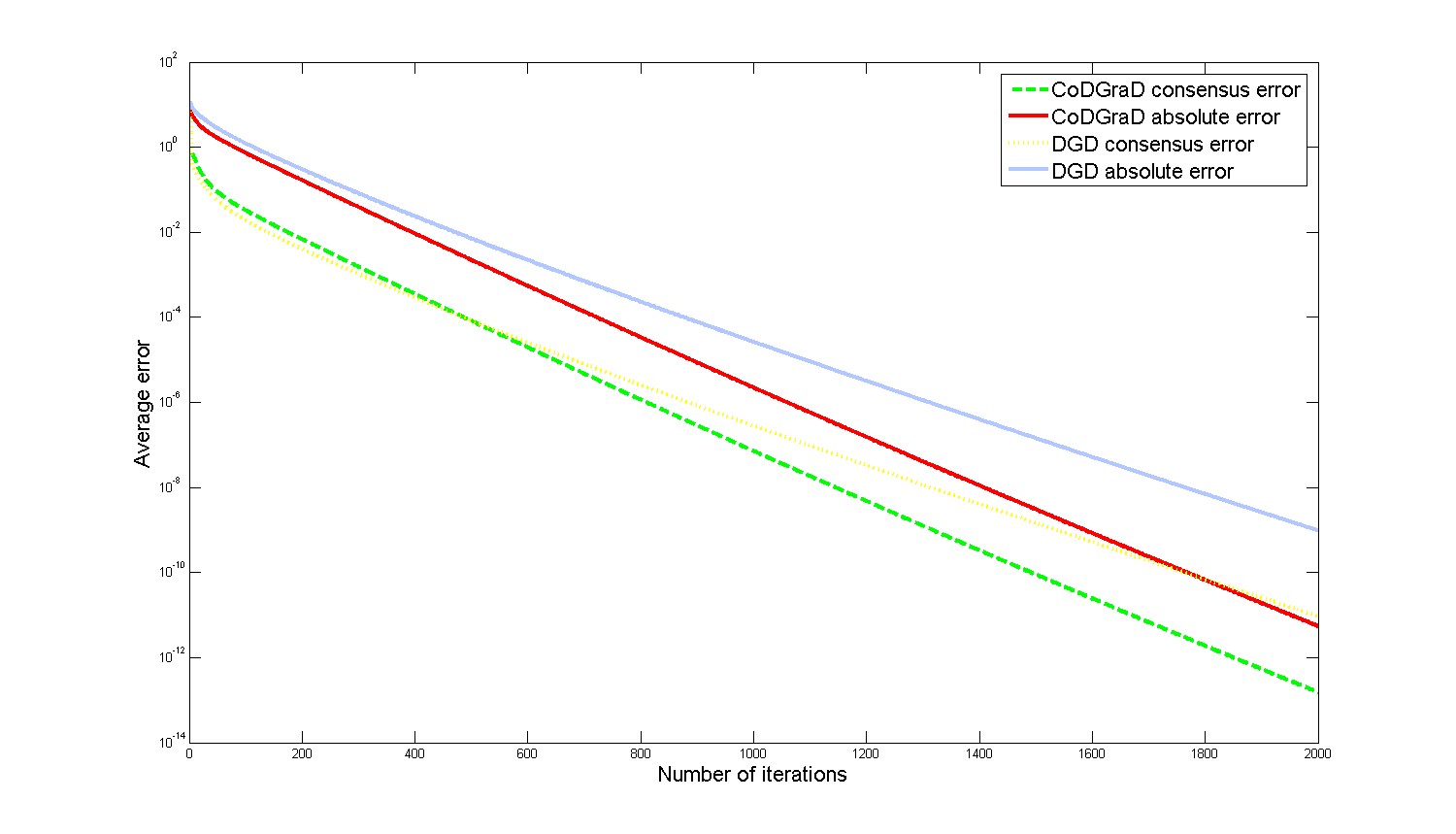}
\includegraphics  [width=78mm, height=45mm]{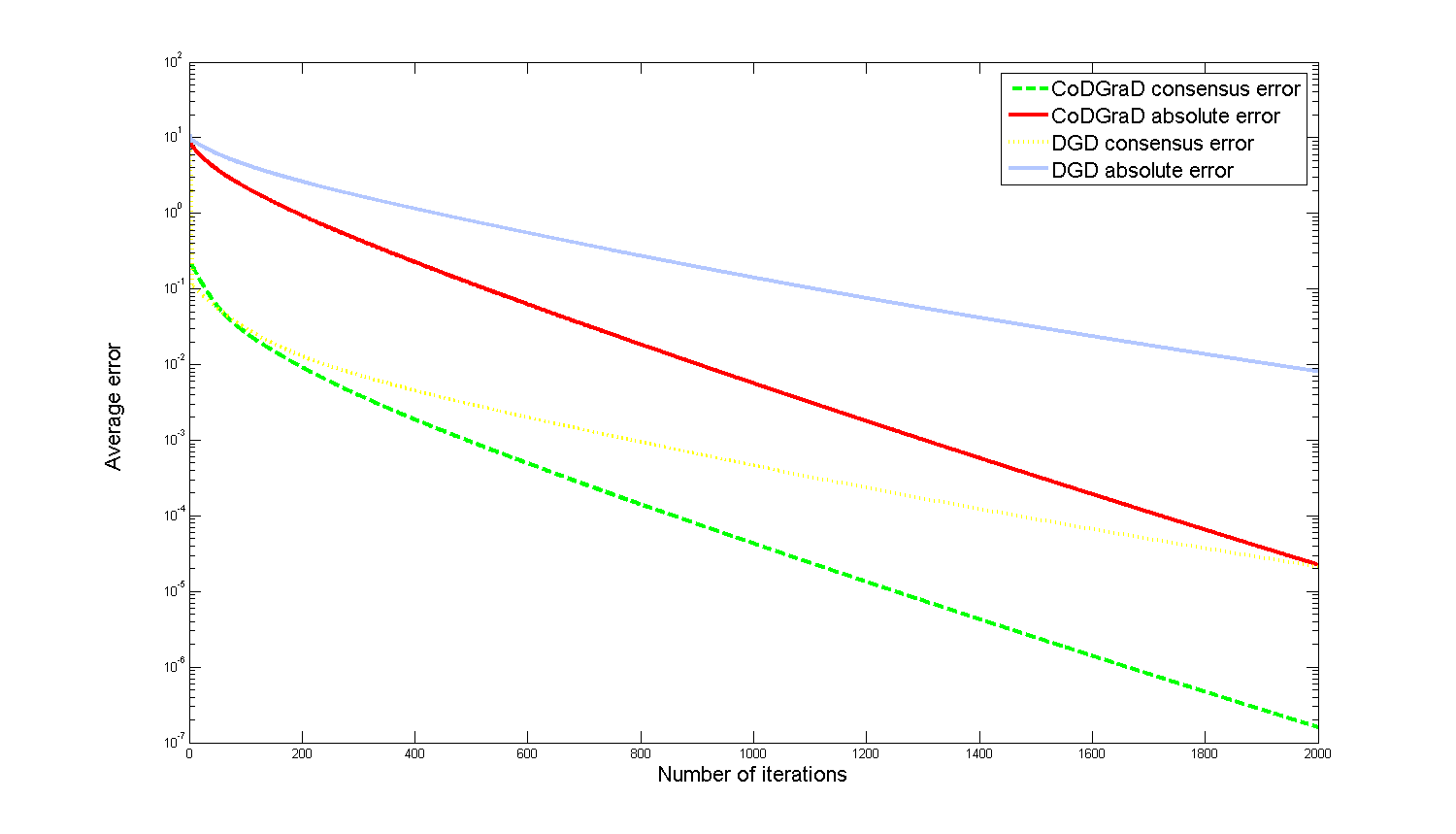}
\caption{Performance comparison of the CoDGraD  algorithm
\eqref{mainalgorithm} and the DGD algorithm \eqref{CTA}
over a five node network with $ (Q, N)= (225, 225) $ and step sizes $ \alpha_{k}=(k+800)^{-0.9}$ (top)
and $(k+500)^{-0.95}, k\ge 0 $ (bottom).
}
\label{fig_6}
\end{figure}

From the above simulations, we observe that the CoDGraD  algorithm
\eqref{mainalgorithm} has much better performance than the   CTA  algorithm \eqref{CTA}
in reaching consensus.
Even though
$\bar {\bf x} (k)$ satisfies a gradient descent algorithm \eqref{convergence.thm.pf.eq2} with
an inexact global gradient, see Proposition \ref{inexactgradient.prop},
our simulations indicate that the CoDGraD  algorithm
\eqref{mainalgorithm}
 still has comparable performance in the absolute error
with the  DGD algorithm under the CTA prototype \eqref{CTA}.
Also we  can conceive from the simulations that
the CoDGraD  algorithm
\eqref{mainalgorithm}
has faster convergence for a smaller exponent $\theta\in (1/2, 1]$, which confirms its convergence rate estimate in
 \eqref{example.estimate1} and \eqref{example.estimate2}.
%We can conceive from the figures that lower values of $ \theta $ allow faster convergence rates for CoDGraD and the more is the increase in $ \theta $ the more degraded is that convergence.
On the other hand, our simulations also indicate that  decreasing the exponent $\theta$ moves the CoDGraD  algorithm
\eqref{mainalgorithm}  into the instability phase, which could directly be related to the sparsity of the network (i.e.,  the graph degree of the corresponding network and
the  degree distribution of vertices).
 It is  worth mentioning that we can adequately calibrate this instability by increasing the value of $ a $
in our illustrative examples of step sizes \eqref{stepsize.example}
 for a fixed exponent $ \theta $.
 Thus we  can anticipate in Figure  %s \ref{fig_3} and
  \ref{fig_4} that the increase in the value to $ \theta=0.85 $ degraded the convergence so that the  CoDGraD algorithm became closer in performance to the DGD algorithm. While in Figure  %s \ref{fig_5}  and
  \ref{fig_6} we realize that the lower value of $ \theta=0.75 $ is impermissible since  the CoDGraD algorithm will considerably enter the instability region while a higher value of $ \theta=0.9 $ favors a better convergence rate and the highest value of $ \theta=0.95 $ degraded the convergence again.

In these simulations, we have compared the performance of
 the CoDGraD  algorithm \eqref{mainalgorithm} and the DGD algorithm \eqref{CTA} over the described $3$-node and $5$-node networks with  subsystems on nodes being  overdetermined. Therefore, a least squares solutions on each node will correspond to the unique solution of a strongly convex function and will consequently correspond to the least squared solution of the whole network, that is, the unique solution of the strongly convex global function $ f $.
It is observed that the errors decrease significantly in $ 2000 $ iterations where CoDGraD outperforms DGD in reaching the unique minimizer (i.e., absolute error) and in reaching consensus. While for both algorithms the consensus error decreases at a higher rate than the absolute error meaning that the workers become closer in their estimates while they all drift towards the unique solution.
Our further simulations  indicate that  convergence behaviors of the CoDGraD  algorithm \eqref{mainalgorithm} and the DGD algorithm \eqref{CTA} depends directly on maximal condition number of matrices $ {\bf G}_{i}, 1\le i\le n$, cf. \eqref{fconvergence.1} and \eqref{convergence.cor.eq01} where  $A$ is closely related to the maximal condition number in the current setting.

\section{Conclusions}\label{conclusion.section}

In this paper, we  proposed  the Code-Based Distributed Gradient Descent algorithm \eqref{mainalgorithm} to solve a convex optimization problem over a large network with  some workers  being stragglers due
to the failure or heavy delay on computing or communicating.
The proposed algorithm is a  distributed version of gradient descent algorithm with inexact gradient updating, and it
 has better performance in  reaching consensus as we apply the row stochastic matrix associated with  the coding/decoding scheme.
The convergence rate of the proposed CoDGraD algorithm  depends on the topological structure of the network,
the second largest eigenvalue of row stochastic matrix in magnitude, and  the updating step sizes in the algorithm.
Moreover, our coding scheme does not necessarily comply with the conventional paradigm of decomposing the global convex function onto a summand of local convex functions and  hence our coding/decoding scheme may shed new light on distributed inexact (stochastic) gradient descent algorithms.
We  wish that this work on CoDGraD will serves as a starting point for a full-fledged investigation on static and time-varying networks especially in the field of federated decentralized learning that we would like to continue resolving in the coming future.

\bigskip

{\bf Acknowledgement}: The authors would like to thank all reviewers for their constructive comments for the improvement of the manuscript.
 This work is partially supported by  the National Science Foundation (DMS-1816313).

\begin{appendix}

\subsection{Proof of  Proposition \ref{eigenvalueone.pr}}
\label{eigenvalueone.pr.appendix}

Observe that
$\tilde {\bf A}_++\tilde {\bf A}_+=|{\tilde {\bf A}}|$. Then, in order to establish
the equivalence in the proposition, it suffices to prove that
for any   positive integer $k\ge 1$ and nonzero complex number $\lambda$, the null space
of $\left(\left(\begin{array}{cc}
 {\bf A}  &   {\bf B} \\
 {\bf A} &  {\bf  B}
\end{array}\right)-
 \lambda {\bf I}_{2n} \right)^k $
and the one of $({\bf A}+{\bf B}- \lambda {\bf I}_{n})^k$, to be denoted by
${\mathcal N}_{k}^1(\lambda)$ and ${\mathcal N}_{k}^2(\lambda)$ respectively,
have the same dimension, where
 ${\bf A}$ and ${\bf B}$ are two square matrices of size $n\times n$.
%  In particular, we will show that
%  \begin{equation} \label{eigenvalue.pf.eq0}
%   {\mathcal N}_{k}^1(\lambda)=\left\{
%   \left(\begin{array}{c}{\bf u}\\{\bf u}\end{array}\right), \  {\bf u}\in {\mathcal N}_{k}^2\right\}.
%  \end{equation}

  Set  ${\bf C}={\bf A}+{\bf B}$.
  By induction on $j\ge 1$,  we can show  that
\begin{equation}\label{eigenvalue.pf.eq1}
\left(\begin{array}{cc}
 {\bf A}  &   {\bf B} \\
 {\bf A} &  {\bf  B}
\end{array}\right)^j=
\left(\begin{array}{cc}
 {\bf C}^{j-1}  &   {\bf 0}_{n\times n} \\
 {\bf 0}_{n\times n} &  {\bf C}^{j-1}
\end{array}\right)
\left(\begin{array}{cc}
 {\bf A}  &   {\bf B} \\
 {\bf A} &  {\bf  B}
\end{array}\right).
\end{equation}
Therefore any
${\bf u}\in {\mathcal N}_{k}^2(\lambda)$, i.e.,
$({\bf C}- {\bf I}_{n})^k {\bf u}={\bf 0}_{n\times 1}$, we have
\begin{eqnarray*}
& \hskip-0.08in  & \hskip-0.08in
\left(
\left(\begin{array}{cc}
 {\bf A}  &   {\bf B} \\
 {\bf A} &  {\bf  B}
\end{array}\right)- \lambda
 {\bf I}_{2n} \right)^k
\left(\begin{array}{c}{\bf u}\\{\bf u}\end{array}\right)\\
& \hskip-0.08in = & \hskip-0.08in
\sum_{j=1}^k \binom {k}{j} (-\lambda)^{k-j} \left(\begin{array}{cc}
 {\bf A}  &   {\bf B} \\
 {\bf A} &  {\bf  B}
\end{array}\right)^j
\left(\begin{array}{c}{\bf u}\\{\bf u}\end{array}\right)
+ (-\lambda)^k \left(\begin{array}{c}{\bf u}\\{\bf u}\end{array}\right)
\\
& \hskip-0.08in = & \hskip-0.08in
 \left(\begin{array}{c} ({\bf C}-\lambda {\bf I}_{n})^k {\bf u}\\
({\bf C}-\lambda {\bf I}_{n})^k {\bf u}\end{array}\right) =
 \left(\begin{array}{c} {\bf 0}_{n\times 1}\\
 {\bf 0}_{n\times 1}\end{array}\right),
 \end{eqnarray*}
 where the second equality follows from \eqref{eigenvalue.pf.eq1}.
This proves that
   \begin{equation} \label{eigenvalue.pf.eq2}
   {\mathcal N}_{k}^1(\lambda)\supset\left\{
   \left(\begin{array}{c}{\bf u}\\{\bf u}\end{array}\right), \  {\bf u}\in {\mathcal N}_{k}^2(\lambda)\right\}.
  \end{equation}

On the other hand,  for any ${\bf u}, {\bf w}\in {\mathbb C}^n$ satisfying
\begin{equation*}
\left(
\left(\begin{array}{cc}
 {\bf A}  &   {\bf B} \\
 {\bf A} &  {\bf  B}
\end{array}\right)- \lambda
 {\bf I}_{2n} \right)^k
\left(\begin{array}{c}{\bf u}\\{\bf w}\end{array}\right)=
\left(\begin{array}{c} {\bf 0}_{n\times 1}\\{\bf 0}_{n\times 1}\end{array}\right),
    \end{equation*}
we obtain from \eqref{eigenvalue.pf.eq1} that
\begin{eqnarray*}
 & \hskip-0.1in  & \hskip-0.08in
 (-\lambda)^k \left(\begin{array}{c}{\bf u}\\{\bf w}\end{array}\right)
  =
-\sum_{j=1}^k \binom {k}{j} (-\lambda)^{k-j} \left(\begin{array}{cc}
 {\bf A}  &   {\bf B} \\
 {\bf A} &  {\bf  B}
\end{array}\right)^j
\left(\begin{array}{c}{\bf u}\\{\bf w}\end{array}\right)\\
& \hskip-0.18in= & \hskip-0.08in -\sum_{j=1}^k \binom {k}{j} (-\lambda)^{k-j}
\left(\begin{array}{c}
 {\bf C}^{j-1} ({\bf A} {\bf u} +{\bf B}{\bf w})  \\  {\bf C}^{j-1} ({\bf A} {\bf u} +{\bf B}{\bf w}) \end{array}\right).
 \end{eqnarray*}
This  implies that ${\bf w}={\bf u}$. Substituting the above equality back, we get
\begin{equation*}
 (-\lambda)^k \left(\begin{array}{c}{\bf u}\\{\bf u}\end{array}\right)
= -\sum_{j=1}^k \binom {k}{j} (-\lambda)^{k-j}
\left(\begin{array}{c}
 {\bf C}^{j}  {\bf u}   \\
 {\bf C}^{j} {\bf u}  \end{array}\right)
 \end{equation*}
 which implies that
 $({\bf C}-\lambda {\bf I}_{n})^k {\bf u}={\bf 0}_{n\times 1}$.
 Hence
    \begin{equation} \label{eigenvalue.pf.eq3}
   {\mathcal N}_{k}^1(\lambda)\subset\left\{
   \left(\begin{array}{c}{\bf u}\\{\bf u}\end{array}\right), \  {\bf u}\in {\mathcal N}_{k}^2(\lambda)\right\}.
  \end{equation}
 Combining \eqref{eigenvalue.pf.eq2} and \eqref{eigenvalue.pf.eq3}  %proves \eqref{eigenvalue.pf.eq0}, and hence
  completes the proof.

\subsection{Proof of Proposition \ref{tildezk.prop}}\label{tildezk.thm.appendix}
Denote the spectrum of a square matrix ${\bf A}$ by $\sigma({\bf A})$.
By the assumption on the  matrix
 ${\bf A}_{\rm sde}$, its spectrum  $\sigma({\bf A}_{\rm sde})$ satisfies
 \begin{equation}
\label{spectral.prop.eq3}
\sigma({\bf A}_{\rm sde})\subset \{1\} \cup \{z, \ |z|<1\},
\end{equation}
and
the eigenspace  associated with eigenvalue one is  given by
\begin{equation} \label{spectral.prop.eq4}
N({\bf A}_{\rm sde}-{\bf I})= {\rm span} \left\{
{\bf 1}_{2n}
\right\}.
\end{equation}
 Combining \eqref{P0.def}, \eqref{QP.identity}, \eqref{spectral.prop.eq3} and  \eqref{spectral.prop.eq4}, we obtain
 that the spectrum of ${\bf A}_{\rm sde}-{\bf P}$ is contained in the open unit disk,
\begin{equation}\label{QP.spectral}
\sigma({\bf A}_{\rm sde}-{\bf P})  =  \big(\sigma({\bf A}_{\rm sde})\backslash \{1\}\big)\cup \{0\} \subset
   \{z, \ |z|\le |\lambda_2({\bf A}_{\rm sde})|\}.
\end{equation}
Therefore there exists a positive constant $C_1$ such that
\begin{equation}\label{Asde-pk}
\|({\bf A}_{\rm sde}-{\bf P})^k\|_{{\mathcal B}^\infty}\le C_1\Big( \frac{1+2|\lambda_2({\bf A}_{\rm sde})|}{3}\Big)^k, \ k\ge 1,
\end{equation}
 where   $\|{\bf A}\|_{{\mathcal B}^\infty}= \sup_{\|{\bf x}\|_\infty=1} \|{\bf A}{\bf x}\|_\infty$.

By \eqref{Bfit.def}, \eqref{f.bounded}, \eqref{hi.def} and \eqref{zh.def2}, we have
\begin{equation}\label{h.estimate1}
\|{\bf h}(k)\|_{2, \infty} \le   \sup_{1\le i\le n} \sup_{{\bf x}\in {\mathbb R}^N} \|\nabla g_i({\bf x})\|_2\le M.
\end{equation}
 Then combining \eqref{tildezk.eq0},  \eqref{Asde-pk} and  \eqref{h.estimate1}  completes the proof.

\subsection{Proof of  Lemma  \ref{weigh.lem}}
\label{weigh.appendix}
By \eqref{QP.identity}, we have
$ ({\bf A}_{\rm sde}-{\bf P})^k= {\bf A}_{\rm sde}^k -{\bf P}$, $k\ge 1$.
Therefore
$
\lim_{k\to \infty} {\bf A}_{\rm sde}^k=
{\bf P}
$
by \eqref{Asde-pk}.  This, together with
 the observation that all entries of ${\bf A}_{\rm sde}^k, k\ge 1$ are nonnegative.
 Hence the required estimate
 implies that all entries of  ${\bf a}_{\rm sde}$ are nonnegative and the desired estimate  % \eqref{tildew.estimate}
  on weights
 follows.

\subsection{Proof of  Proposition  \ref{inexactgradient.prop}}
\label{inexactgradient.thm.appendix}
By  \eqref{hi.def}, \eqref{xypm.def} and \eqref{Lipschitz.con},   we have
\begin{eqnarray*}
    %& &
     \| {\bf h}_i(k)-\nabla g_i(\bar {\bf x}(k))\|_2  %\nonumber\\
       & \hskip-0.08in =& \hskip-0.08in
    \| \nabla g_i({\bf z}_i(k))-\nabla g_i(\bar {\bf x}(k))\|_2 \nonumber\\  & \hskip-0.08in \le & \hskip-0.08in
     L  \|{\bf z}_i(k)-\bar{\bf x}(k)\|_2\le
 L  \|\tilde {\bf z}(k) %-{\bf P}{\bf z}(k)
 \|_{2, \infty}
\end{eqnarray*}
for $1\le i\le n$,
and
%\vskip-0.06in
%\begin{equation*}
%  \| {\bf h}_{i}(k)+\nabla g_{i-n}(\bar {\bf x}(k))\|_2
% %\le L \|{\bf z}_i(k)-\bar{\bf x}(k)\|_2\nonumber\\
% %&  \hskip-0.08in \le &  \hskip-0.08in
% \le L  \|\tilde {\bf z}(k) %-{\bf P}{\bf z}(k)
% \|_{2, \infty}
%\end{equation*}
%\noindent
 $$ \| {\bf h}_{i}(k)+\nabla g_{i-n}(\bar {\bf x}(k))\|_2
 %\le L \|{\bf z}_i(k)-\bar{\bf x}(k)\|_2\nonumber\\
 %&  \hskip-0.08in \le &  \hskip-0.08in
 \le L  \|\tilde {\bf z}(k) %-{\bf P}{\bf z}(k)
 \|_{2, \infty} $$
 for $n+1\le i\le 2n$.
Therefore
\begin{equation}\label{htildeh.est}
\| {\bf h}(k) -\tilde {\bf h}(k) \|_{2, \infty}\le L
 \|\tilde {\bf z}(k) %-{\bf P}{\bf z}(k)
 \|_{2, \infty},
\end{equation}
 where
$$\tilde {\bf h}(k)=\left( \begin{array}{c} \nabla G(\bar {\bf x}(k))\\
 -\nabla G(\bar {\bf x}(k))\end{array}\right)\ \ {\rm and} \ \
 \nabla G({\bf x})=
\left( \begin{array}{c} \nabla g_1({\bf x})\\
\vdots\\
\nabla g_n( {\bf x})\end{array}\right).
$$
 Therefore
\begin{eqnarray}\label{upesilon.eq11}
\hskip-0.08in &  \hskip-0.08in  & \hskip-0.08in  \|{\bf a}_{\rm sde}^T  {\bf h}(k)- {\bf a}_{\rm sde}^T \tilde {\bf h}(k)\|_2
 %= \|{\bf a}_{\rm sde}^T  {\bf A}_{\rm sde} ({\bf h}(k)- \tilde {\bf h}(k))\|_2
 \nonumber\\
 \hskip-0.08in & \hskip-0.08in \le &  \hskip-0.08in  \|{\bf a}_{\rm sde}\|_1 %\| {\bf A}_{\rm sde} ({\bf h}(k)- \tilde {\bf h}(k))\|_{2, \infty}\nonumber\\
%  & \le & % \|{\bf u}_\epsilon\|_1\| {\bf Q} ({\bf h}(k)- \tilde {\bf h}(k))\|_{2, \infty}
%\|{\bf a}_{\rm sde}^T
%2 \Big(1+
% \frac{ (2C_0+1)\|{\bf P} {\bf K}({\bf I}-{\bf P})\|_{{\mathcal B}^\infty}}
%{1-|\lambda_4({\bf Q})|} \epsilon\Big)\nonumber\\
%& & \quad \times
\|{\bf h}(k)- \tilde {\bf h}(k)\|_{2, \infty} \le L \|{\bf a}_{\rm sde}\|_1 \|\tilde {\bf z}(k) %-{\bf P}_{\epsilon, {\bf T}}{\bf z}(k)
\|_{2, \infty},  \qquad
\end{eqnarray}
where
% the second inequality follows from
% \eqref {uepsilon.difference} and the fact that
%${\bf A}_{\rm sde}$ is a row stochastic matrix,
the second  estimate  follows from \eqref{htildeh.est}.

 Observe from  \eqref{AB1.eq} that
 $ {\bf A}_{\rm sde} \tilde {\bf h}(k)=
 \nabla f (\bar{\bf x}(k)) {\bf w}$.
This together with \eqref{asde.def} implies that
\vskip-0.16in
  \begin{equation}\label{uqtildeh.eq}
 {\bf a}_{\rm sde} ^T \tilde {\bf h}(k)=
 {\bf a}_{\rm sde} ^T  {\bf A}_{\rm sde}
  \tilde {\bf h}(k)=\tilde w \nabla f(\bar{\bf x}(k)).\end{equation}
  \vskip-0.06in
  \noindent
Combining \eqref{upesilon.eq11} and \eqref {uqtildeh.eq}  proves the desired estimate
\eqref{reminder.estimate}.

\subsection{Proof of  Proposition \ref{preconvergence.prop}}
\label{preconvergence.thm.appendix}

Set
\begin{equation}\label{beta.def1} \beta_k= M^2  \alpha_k^2+ L^2 \|\tilde {\bf z}(k)\|_{2, \infty}^2, \  k\ge 0.\end{equation}
 By \eqref{hi.def}, \eqref{f.bounded}, \eqref{convergence.thm.pf.eq2}, \eqref{reminder.estimate} and
\eqref{htildeh.est},
 %{estimate.eq101} and \eqref{estimate.eq102},
 we obtain
 \vskip-0.16in
\begin{eqnarray}\label{convergence.thm.pf.eq01}
\|\bar{\bf x}(k+1)-\bar {\bf x}\|_2^2   & \hskip-0.08in  = &  \hskip-0.08in  \|\bar{\bf x}(k)-\bar {\bf x}\|_2^2+ \alpha_k^2 \|{\bf a}_{\rm sde}^T  {\bf h}(k)\|_2^2\nonumber \\
\hskip-0.08in  & \hskip-0.08in  &  \hskip-0.08in   -
2 \alpha_k \langle {\bf a}_{\rm sde}^T   {\bf h}(k), \bar{\bf x}(k)-\bar {\bf x}\rangle_N\nonumber\\
 & \hskip-0.08in  = &  \hskip-0.08in  \|\bar{\bf x}(k)-\bar {\bf x}\|_2^2+ \alpha_k^2 \|{\bf a}_{\rm sde}^T  {\bf h}(k)\|_2^2\nonumber \\
  & \hskip-0.01in &  \hskip-0.01in - 2 \alpha_k \langle {\bf a}_{\rm sde}^T {\bf h}(k) - {\bf a}_{\rm sde}^T   \tilde{\bf h}(k), \bar{\bf x}(k)-\bar {\bf x}\rangle_N   \nonumber \\
  & \hskip-0.08in  &  \hskip-0.08in - 2 \alpha_k \langle {\bf a}_{\rm sde}^T   \tilde{\bf h}(k), \bar{\bf x}(k)-\bar {\bf x}\rangle_N
\nonumber\\
& \hskip-0.08in  \le  &  \hskip-0.08in   \|\bar{\bf x}(k)-\bar {\bf x}\|_2^2 + M^2 \|{\bf a}_{\rm sde}\|_1^2 \alpha_k^2 \nonumber\\
 & \hskip-0.08in    &  \hskip-0.08in   + 2 L  \|{\bf a}_{\rm sde}\|_1 \alpha_k \|\tilde {\bf z}(k)\|_{2, \infty} \|\bar{\bf x}(k)-\bar {\bf x}\|_{2}\nonumber\\
% & \hskip-0.08in  =  &  \hskip-0.08in   \|\bar{\bf x}(k)-\bar {\bf x}\|_2^2 + M^2 \alpha_k^2 \nonumber\\
% & \hskip-0.08in    &  \hskip-0.08in   + 2 L  \alpha_k \|\tilde {\bf z}(k)\|_{2, \infty} \|\bar{\bf x}(k)-\bar {\bf x}\|_{2}\nonumber\\
 & \hskip-0.08in  \le  &  \hskip-0.08in  (1+ \alpha_k^2 )\|\bar{\bf x}(k)-\bar {\bf x}\|_2^2+ \beta_k,
  %M^2 \|{\bf a}_{\rm sde}\|_1^2 \alpha_k^2\nonumber\\
 %& \hskip-0.08in    &  \hskip-0.08in  + L^2  \|{\bf a}_{\rm sde}\|_1^2\|\tilde {\bf z}(k)\|_{2, \infty}^2,
\end{eqnarray}
\vskip-0.06in
\noindent
where we also use  the positivity of $\tilde w$ in Lemma \ref{weigh.lem}, $ \|{\bf a}_{\rm sde}\|_1 = 1 $  and the convexity of the objective function $f$,
\vskip-0.16in
\begin{equation}\label{convex.basic}
\langle \nabla f({\bf x}), {\bf x}-\bar{\bf x}\rangle_N\ge 0,  \ {\bf x}\in {\mathbb R}^N.
\end{equation}
\vskip-0.06in
\noindent
Applying (\ref{convergence.thm.pf.eq01}) repeatedly, we get
\vskip-0.16in
\begin{eqnarray}\label{convergence.thm.pf.eq02}
   & \hskip-0.08in  & \hskip-0.08in    \|\bar{\bf x}(k+1)-\bar {\bf x}\|_2^2 \nonumber  \\
   & \hskip-0.08in \le & \hskip-0.08in    \prod_{j=0}^k (1+\alpha_j^2) \|\bar{\bf x}(0)-\bar {\bf x}\|_2^2  +\beta_k+ \sum_{j=0}^{k-1} \beta_j \prod_{j'=j+1}^k (1+\alpha_{j'}^2)\nonumber\\
   &  \le  & \exp\Big(\sum_{j=0}^{k} \alpha_j^2\Big) \Big(\|\bar{\bf x}(0)-\bar {\bf x}\|_2^2 +\sum_{j=0}^k \beta_j\Big)\nonumber\\
   & \le & \exp\Big(\sum_{j=0}^{\infty} \alpha_j^2\Big) \Big(\|\bar{\bf x}(0)-\bar {\bf x}\|_2^2 +\sum_{j=0}^\infty \beta_j\Big),\ \  k\ge 1.
%
%&  \quad & +  M^2 \|{\bf a}_{\rm sde}\|_1^2 \sum_{j=0}^{k} \alpha_j^2+
%L^2 \|{\bf a}_{\rm sde}\|_1^2\sum_{j=0}^{k} \|\tilde {\bf z}(j)\|_{2, \infty}^2\Big)
\end{eqnarray}
\vskip-0.06in
\noindent
%That is
%
%\begin{eqnarray*}\label{convergence.thm.pf.eq03}
%& \hskip-0.08in  & \hskip-0.08in    \|\bar{\bf x}(k+1)-\bar {\bf x}\|_2^2 \nonumber  \\
%& \hskip-0.08in \le & \exp\Big(\sum_{j=0}^\infty \alpha_j^2\Big) \Big(\|\bar{\bf x}(0)-\bar {\bf x}\|_2^2\nonumber\\
%& & \quad +  M^2 \|{\bf a}_{\rm sde}\|_1^2 \sum_{j=0}^\infty \alpha_j^2+
%L^2 \|{\bf a}_{\rm sde}\|_1^2\sum_{j=0}^\infty \|\tilde {\bf z}(j)\|_{2, \infty}^2\Big),
%\end{eqnarray*}
%where
This together with \eqref{alphak.assumption}  and \eqref{tildezk.est001} proves the desired bound  \eqref{preconvergence.thm.eq1}
for the sequence $\bar {\bf x}(k), k\ge 0$.

\vskip-0.16in

\subsection{Proof of  Proposition \ref{convergence.prop}}
\label{convergence.thm.appendix}

By \eqref{alphak.assumption}, without a loss of generality, we assume that
$0\le \alpha_k\le  \tilde w A$ for all $k\ge 0$.
Then for $k\ge 1$,  following the argument in \eqref{convergence.thm.pf.eq01} with  the convexity \eqref{convex.basic} replaced by
the strong  convexity \eqref{accretive.con}, we obtain
that
\vskip-0.16in
$$ \|\bar{\bf x}(k)-\bar {\bf x}\|_2^2
\le   (1- \tilde w  A \alpha_{k-1})\|\bar{\bf x}(k-1)-\bar {\bf x}\|_2^2
 +\beta_{k-1},$$
%\vskip-0.06in
%\begin{equation}\label{convergence.thm.pf.eq03}
% %& \hskip-0.08in    &  \hskip-0.08in
% \|\bar{\bf x}(k)-\bar {\bf x}\|_2^2  %\nonumber\\  %& \hskip-0.08in  \le
%%&  \hskip-0.08in  \|\bar{\bf x}(k)-\bar {\bf x}\|_2^2+ \alpha_k^2 \|{\bf a}_{\rm sde}^T  {\bf h}(k)\|_2^2\nonumber \\
%%\hskip-0.08in  & \hskip-0.08in  &  \hskip-0.08in   -
%%2 \alpha_k \langle {\bf a}_{\rm sde}^T   {\bf h}(k), \bar{\bf x}(k)-\bar {\bf x}\rangle_N\nonumber\\
%%& \hskip-0.08in  \le  &  \hskip-0.08in
%%(1- 2A \alpha_k) \|\bar{\bf x}(k)-\bar {\bf x}\|_2^2 + M^2 \|{\bf a}_{\rm sde}\|_1^2 \alpha_k^2 \nonumber\\
%% & \hskip-0.08in    &  \hskip-0.08in   + 2 L  \|{\bf a}_{\rm sde}\|_1 \alpha_k \|\tilde {\bf z}(k)\|_{2, \infty} \|\bar{\bf x}(k)-\bar {\bf x}\|_{2}\nonumber\\
%\le   (1- \tilde w  A \alpha_{k-1})\|\bar{\bf x}(k-1)-\bar {\bf x}\|_2^2
% +\beta_{k-1},
% % M^2 \|{\bf a}_{\rm sde}\|_1^2 \alpha_{k-1}^2
%%\nonumber\\
%% & \hskip-0.08in    &  \hskip-0.08in  + L^2  \|{\bf a}_{\rm sde}\|_1^2\|\tilde {\bf z}(k-1)\|_{2, \infty}^2.
%\end{equation}
\vskip-0.06in
\noindent cf.  \eqref{convergence.thm.pf.eq01}.
Applying the above estimate repeatedly leads to
\vskip-0.16in
\begin{eqnarray*} %\label{barxk+1.eq002}
 \hskip-0.08in \|\bar{\bf x}(k)-\bar {\bf  x}\|_2^2   & \hskip-0.08in \le & \hskip-0.08in   \prod_{j=0}^{k-1} (1- \tilde w  A \alpha_j) \|\bar{\bf x}(0)-\bar {\bf x}\|_2^2 + \beta_{k-1}\nonumber\\
& \hskip-0.08in  & \hskip-0.08in  + \sum_{j=0}^{k-2} \beta_j \prod_{j'=j+1}^{k-1} (1-\tilde w  A\alpha_{j'})\nonumber\\
& \hskip-0.08in \le & \hskip-0.08in   \exp\Big(-\tilde w A\sum_{j=0}^{k-1} \alpha_j\Big) \|\bar{\bf x}(0)-\bar {\bf x}\|_2^2 + \beta_{k-1}\nonumber\\
& \hskip-0.08in  & \hskip-0.08in  + \sum_{j=0}^{k-2} \exp\Big(-\tilde w  A\sum_{j=j+1}^{k-1} \alpha_j\Big) \beta_j,
\end{eqnarray*}
\vskip-0.06in
\noindent where $\beta_j, j\ge 0$, is given in \eqref{beta.def1}.
This proves the desired estimate \eqref{convergence.thm.eq01}.

\end{appendix}

\begin{IEEEbiography}
    [{\includegraphics[width=1in,height=1in,clip,keepaspectratio]{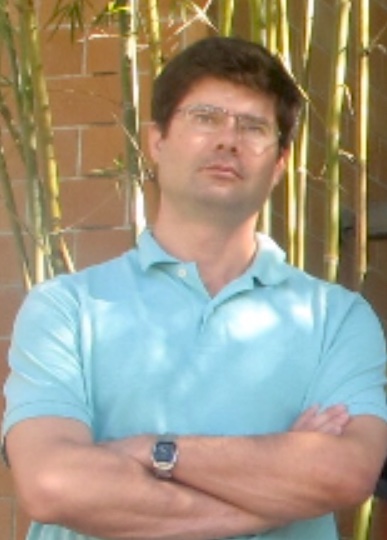}}]{Elie Atallah} received his Ph.D degree in Electrical Engineering from the University of Central Florida in 2019, an M.S degree in Electrical Engineering and an M.S degree in Mathematics both from University of California, Riverside in 2003 and 2006, respectively. At UCF his primary focus was in developing algorithms for distributed optimization, Compressive Sensing and Tensor Decomposition. Previously, he has been an adjunct faculty at several universities such as California Baptist University, Seminole State College, Valencia College and UCF.
\end{IEEEbiography}

\begin{IEEEbiography}
    [{\includegraphics[width=1in,height=1.25in,clip,keepaspectratio]{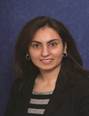}}]{Nazanin Rahnavard} (S’97-M’10, SM’19) received her Ph.D. in the School of Electrical and Computer Engineering at the Georgia Institute of Technology, Atlanta, in 2007. She is currently an Associate Professor in the Department of Electrical and Computer Engineering at the University of Central Florida, Orlando, Florida. Dr. Rahnavard is the recipient of NSF CAREER award in 2011. She has interest and expertise in a variety of research topics in the communications, networking, and signal processing areas. She serves on the editorial board of the Elsevier Journal on Computer Networks (COMNET) and on the Technical Program Committee of several prestigious international conferences.
\end{IEEEbiography}

\begin{IEEEbiography}
    [{\includegraphics[width=1in,height=1.25in,clip,keepaspectratio]{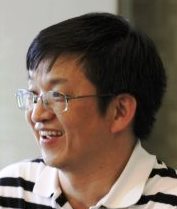}}]{Qiyu Sun} received the Ph.D. degree in Mathematics from Hangzhou University, Hangzhou, China, in
1990. He is currently a Professor of Mathematics at the University of Central Florida, Orlando, FL,
USA. % His prior position was with the Zhejiang University, China; the National University of Singapore, Singapore; Vanderbilt University, Nashville, TN, USA; and the University of Houston, Houston,
%TX, USA.
 His research interests include applied and computational harmonic analysis, sampling theory,
phase retrieval and graph signal processing. He has published more than 120 papers. % and written a book An Introduction to Multiband Wavelets with N. Bi and D. Huang.
 He received the 2019 Best SICON Paper Prize, presented by the Society for Industrial and Applied Mathematics (SIAM) Activity Group on Control and Systems Theory (SIAG/CST).
\end{IEEEbiography}

\end{document}